\let\bbordermatrix\bordermatrix
\patchcmd{\bbordermatrix}{8.75}{4.75}{}{}
\patchcmd{\bbordermatrix}{\left(}{\left[}{}{}
\patchcmd{\bbordermatrix}{\right)}{\right]}{}{}
\numberwithin{equation}{section}
\newtheorem{theorem}{Theorem}[section]
\newtheorem{lemma}[theorem]{Lemma}
\newtheorem{corollary}[theorem]{Corollary}
\newtheorem{remark}[theorem]{Remark}
\newcommand{\sr}{\stackrel}
\newcommand{\rar}{\rightarrow}
\newcommand{\tri}{\sr{\triangleq}{=}}
\def\rightharpoonupfill@{\arrowfill@\relbar\relbar\rightharpoonup}
\newcommand{\overrightharpoonup}{%
   \mathpalette{\overarrow@\rightharpoonupfill@}}
\newcommand{\be}{\begin{equation}}
\newcommand{\ee}{\end{equation}}
\newcommand{\bea}{\begin{eqnarray}}
\newcommand{\eea}{\end{eqnarray}}
\newcommand{\bes}{\begin{eqnarray*}}
\newcommand{\ees}{\end{eqnarray*}}
\newcommand{\bi}{\begin{itemize}}
\newcommand{\ei}{\end{itemize}}
\newcommand{\ben}{\begin{enumerate}}
\newcommand{\een}{\end{enumerate}}
\newcommand{\bp}{\begin{problem}}
\newcommand{\ep}{\end{problem}}
\newcommand{\hso}{\hspace{.1in}}
\newcommand{\hst}{\hspace{.2in}}
\newcommand{\noi}{\noindent}
\begin{document}

%

\title{Sequential Necessary and Sufficient Conditions for Capacity Achieving Distributions of Channels with Memory and Feedback}
%
%
\author{ Photios~A.~Stavrou, Charalambos~D.~Charalambous and Christos K. Kourtellaris
\thanks{Part of this paper is accepted for publication in the proceedings of the IEEE International Symposium on Information Theory (ISIT), Barcelona Spain, July 10--15 2016 \cite{stavrou-charalambous-kourtellaris2016isit}.}
\thanks{The authors are with the Department of Electrical and Computer Engineering (ECE), University of Cyprus, 75 Kallipoleos Avenue, P.O. Box 20537, Nicosia, 1678, Cyprus, e-mail: $\{stavrou.fotios,chadcha,kourtellaris.christos\}$@ucy.ac.cy}}

%

\markboth{Submitted to IEEE Transactions on Information Theory}{}

\maketitle

\begin{abstract}
\noi We derive sequential necessary and sufficient conditions for any channel input conditional distribution ${\cal P}_{0,n}\triangleq\{P_{X_t|X^{t-1},Y^{t-1}}:~t=0,\ldots,n\}$ to maximize the finite-time horizon directed information defined by 
\begin{align*}
C^{FB}_{X^n \rar Y^n} \tri \sup_{ {\cal P}_{0,n}}
I(X^n\rightarrow{Y^n}), \hso I(X^n \rar Y^n) =\sum_{t=0}^n{I}(X^t;Y_t|Y^{t-1})
\end{align*}
for channel distributions $\{P_{Y_t|Y^{t-1},X_t}:~t=0,\ldots,n\}$ and $\{P_{Y_t|Y_{t-M}^{t-1},X_t}:~t=0,\ldots,n\}$, where $Y^t\triangleq\{Y_0,\ldots,Y_t\}$ and $X^t\triangleq\{X_0,\ldots,X_t\}$ are the channel input and output random processes, and $M$ is a finite nonnegative integer. 

\noi We apply the necessary and sufficient conditions to application examples of time-varying channels with memory and we derive recursive closed form expressions of the optimal distributions, which maximize the finite-time horizon directed information. Further, we derive the feedback capacity from the asymptotic properties  of the optimal distributions by investigating  the limit  
\begin{align*}
C_{X^\infty \rar Y^\infty}^{FB} \tri \lim_{ n \longrightarrow \infty} \frac{1}{n+1} C_{X^n \rar Y^n}^{FB}
\end{align*}
without any \'a priori assumptions, such as, stationarity, ergodicity or irreducibility of the channel distribution. The necessary and sufficient conditions can be easily extended to a variety of channels with memory, beyond the ones considered in this paper.
\end{abstract}

\begin{IEEEkeywords}
directed information, variational equalities, feedback capacity, channels with memory, sequential necessary and sufficient conditions, dynamic programming.
\end{IEEEkeywords}


\tableofcontents

%
%
%
%

\section{Introduction}\label{section:necessary:sufficient:introduction}

\par Computing feedback capacity for any class of channel distributions with memory, with or without transmission cost constraints, and computing the optimal channel input conditional distribution, which achieves feedback capacity, and determining whether feedback increases capacity, are fundamental and challenging open problems in information theory for half a century. 

\noi Notable exceptions are the Cover and Pombra \cite{cover-pombra1989} characterization of feedback capacity of nonstationary and nonergodic,  Additive Gaussian Noise (AGN) channels  with memory and feedback. The characterization of feedback capacity derived in \cite{cover-pombra1989},  initiated several  investigations for variants of the AGN channel with memory,  such as, the finite alphabet channel with memory investigated by Alajaji in \cite{alajaji1995},  the stationary ergodic version of Cover and Pombra \cite{cover-pombra1989} AGN channel, in which  the channel noise is of   limited memory,  investigated by Kim in \cite{kim2010}, and several generalizations investigated via dynamic programming  by Yang et al. in \cite{yang-kavcic-tatikonda2007ieeeit}. Despite the progress in \cite{cover-pombra1989,alajaji1995,kim2010,yang-kavcic-tatikonda2007ieeeit}, the task of determining the closed form expression of the optimal channel input conditional distribution without any assumptions of stationarity or ergodicity imposed on the AGN channel, remains to this date a challenging problem. Over the last ten years, feedback capacity expressions of certain symmetric channels with  memory, defined on finite alphabets, are  derived in \cite{permuter-cuff-roy-weissman2008,elishco-permuter2014,permuter-asnani-weissman2014ieeeit}, and  in  \cite{kourtellaris-charalambous2015itw},  when  transmission cost constraints are imposed on the channel input distributions. However, the progress has been limited; the fundamental problem  of determining  feedback capacity, and understanding the properties of the optimal channel input distributions for general channels, remains to this date a challenge. Specifically, in  \cite{permuter-cuff-roy-weissman2008,elishco-permuter2014,permuter-asnani-weissman2014ieeeit}, the closed form expressions of feedback capacity are obtained using the symmetry of the channels considered, the capacity achieving input distributions are often not determined, while the methodology is based on an \'a priori assumption of ergodicity of the joint processes.\\
For general channel distributions with memory, the lack of progress in computing feedback capacity is attributed to the absence of a general methodology to solve extremum problems of feedback capacity. In this paper, we utilize recent work found in \cite{charalambous-stavrou2015ieeeit,kourtellaris-charalambous2015aieeeit}, to develop  such a methodology. Specifically, we  derive  sequential necessary and sufficient conditions for channel input distributions to maximize the finite horizon directed information. Then we apply the necessary and sufficient conditions to specific application examples, and we compute recursive expressions for  the finite horizon information feedback capacity and the optimal channel input distributions. We determine the expressions of feedback capacity and the corresponding expressions of the optimal distributions, which achieve it,  from the per unit time limit of the finite time horizon. The application examples include a)
the time-varying Binary Unit Memory Channel Output (BUMCO) channel (defined by \eqref{introduction:section:example:matrix:general:bumco}), b) the time-varying Binary Erasure Unit Memory Channel Output (BEUMCO) channel (defined by \eqref{introduction:example:conjecture:matrix:beumco}), and c) the time-varying Binary Symmetric Two Memory Channel Output (BSTMCO) channel (defined by \eqref{introduction:section:example:matrix:general:bstmco}).
Moreover, we show how to obtain existing results, such as, the POST channel and the Binary State Symmetric Channel (BSSC) investigated in \cite{permuter-asnani-weissman2014ieeeit} and \cite{kourtellaris-charalambous2015itw}, respectively, as degenerated versions of more general channel models.\\
Next, we describe the problem investigated, we give some of the results obtained, and we draw connections to existing literature.

\subsection{Main Problem}
\label{m-prob}
Consider any  channel model  
\begin{align}
&\Big(\Big\{{\cal X}_t:~t=0,\ldots,n\Big\}, \Big\{{\cal Y}_t:~t=0,\ldots,n\Big\}, {\cal C}_{0,n} \triangleq \Big\{{\bf P}_{Y_t|Y^{t-1},X^t}:~t=0,\ldots,n\Big\},\nonumber\\
&\qquad\qquad{\cal P}_{0,n} \triangleq\Big\{ {\bf P}_{X_t|X^{t-1}, Y^{t-1}}:~t=0, \ldots, n\Big\}  \Big)\nonumber
\end{align}
where $X^t\triangleq\{X_0,X_1,\ldots,X_t\}$ and $Y^t\triangleq\{Y_0,Y_1,\ldots,Y_t\}$ are the channel input and output Random Variables (RVs), taking values in ${\cal X}^t=\times_{t=0}^n{\cal X}_t$, ${\cal C}_{0,n}$ is the set of channel distributions,  and ${\cal P}_{0,n}$ is the set of channel conditional distributions.\\
Our objective is to   derived necessary and sufficient conditions for any channel input  conditional distribution from the set ${\cal P}_{0,n}$, to maximize the finite-time horizon directed information from $X^n$ to $Y^n$, defined by 
\begin{align}
 C_{X^n \rightarrow Y^n}^{FB} \triangleq\sup_{{\cal P}_{0,n}}I(X^n\rightarrow Y^n) \label{cap_fb_1}
 \end{align}
where $I(X^n\rightarrow Y^n)$ is the directed information from $X^n$ to $Y^n$, defined by \cite{marko1973,massey1990}
\begin{align}
I(X^n\rightarrow Y^n) \triangleq \sum_{t=0}^n I(X^t;Y_t|Y^{t-1})= \sum_{t=0}^n {\bf E} \Big\{ \log \Big( \frac{ d{\bf P}_{Y_t|Y^{t-1}, X^t}(\cdot|Y^{t-1}, X^t)}{d{\bf P}_{Y_t|Y^{t-1}}(\cdot|Y^{t-1})}(Y_t)\Big)\Big\}. \label{intro_fbc1a}
\end{align}
\noi We prefer to derive  necessary and sufficient conditions for extremum problem (\ref{cap_fb_1}), because these  translate into corresponding  necessary and sufficient conditions for any channel input distribution to maximize its per unit time limiting version, defined by 
\begin{align}
 C_{X^\infty \rar Y^\infty}^{FB}\triangleq\liminf_{n\longrightarrow\infty}\frac{1}{n+1}C_{X^n \rightarrow Y^n}^{FB}. \label{cap_fb_1_IH}
 \end{align}
Moreover, the transition to the per unit time limit provides significant insight on the asymptotic properties of optimal channel input conditional distributions.

\noi We also derived necessary and sufficient conditions for channel input conditional distributions, which satisfies transmission cost constraint of the form 
\begin{align}
{\cal P}_{0,n}(\kappa)\triangleq  \Big\{ {\bf P}_{X_t|X^{t-1}, Y^{t-1}},  t=0, \ldots, n:  \frac{1}{n+1} {\bf E}\Big\{c_{0,n}(X^n,Y^{n-1})\Big\} \leq \kappa\Big\},~ \kappa \in [0, \infty)\label{cap_fb_3}
\end{align}
and  maximize the finite-time horizon directed information defined by 
\begin{align}
 C_{X^n \rightarrow Y^n}^{FB}(\kappa)\triangleq\sup_{{\cal P}_{0,n}(\kappa)}I(X^n\rightarrow Y^n). \label{cap_fb_1_cost}
 \end{align} 
Subsequently, we illustrate via application examples, that feedback capacity and capacity achieving distributions can be obtained from the asymptotic properties of the solution of the finite-time horizon extremum problem of directed information. To the best of our knowledge, this is the first paper which gives necessary and sufficient conditions for any channel input conditional distribution to maximize the finite-time horizon optimization problems $C_{X^n \rightarrow Y^n}^{FB}$, $C_{X^n \rightarrow Y^n}^{FB}(\kappa)$, and gives non-trivial finite alphabet application examples in which the optimal channel input distribution and the corresponding channel output transition probability distribution are computed recursively.

\par Coding theorems for channels with memory with and without feedback are developed extensively over the years, in an anthology of papers, such as, \cite{dobrushin1959,pinsker1964,gallager1968,blahut1987,ihara1993,verdu-han1994,kramer1998,han2003,kramer2003,cover-thomas2006,kim2008ieeeit,tatikonda-mitter2009,permuter-weissman-goldsmith2009,gamal-kim2011}. Under  certain conditions, $C_{X^\infty \rar Y^\infty}^{FB}$ is the supremum of all achievable rates of the sequence of feedback codes  $\{(n, { M}_n, \epsilon_n):~n=0, \dots\}$ (see \cite{tatikonda-mitter2009} for definition). For the convenience of the reader the definition of feedback codes and the sufficient conditions for $C_{X^\infty \rar Y^\infty}^{FB}$ to correspond to feedback capacity are given in Appendix~\ref{section:feedback:codes}.

%
%
%
%

\subsection{Contributions and Main Results}\label{introduction:contributions}
\par In this paper, to avoid excessive notation, we derive {\it sequential necessary and sufficient conditions} for any channel input distribution $\{{\bf P}_{X_t|X^{t-1},Y^{t-1}}:~t=0,\ldots,n\}\in\{{\cal P}_{0,n},~{\cal P}_{0,n}(\kappa)\}$ to maximize directed information $I(X^n \rightarrow Y^n)$, for the following classes of channel distributions and transmission cost functions.
\begin{align}
&\mbox{\bf Channel Distributions:}\nonumber\\
&\mbox{\bf Class A.}~~~{\bf P}_{Y_t|Y^{t-1}, X^{t}}= {\bf P}_{Y_t|Y_{t-M}^{t-1}, X_t}\equiv q_t(dy_t|y_{t-M}^{t-1}, x_t), ~ t=0, \ldots, n, \label{introduction:class:channel:equation1}\\
&\mbox{\bf Class B.}~~~{\bf P}_{Y_t|Y^{t-1}, X^{t}}= {\bf P}_{Y_t|Y^{t-1}, X_t}\equiv q_t(dy_t|y^{t-1}, x_t), ~ t=0, \ldots, n . \label{introduction:class:channel:equation2}\\
&\mbox{\bf Transmission Cost Functions:}\nonumber\\
&\mbox{\bf Class A.}~~~c^{A.N}_{0,n}(X^n,Y^{n-1})\triangleq\sum_{t=0}^n\gamma_t(X_t,Y_{t-N}^{t-1}),~t=0,\ldots, n, \label{TC_1} \\
&\mbox{\bf Class B.}~~~c^{B}_{0,n}(X^n,Y^{n-1})\triangleq\sum_{t=0}^n\gamma_t(X_t, Y^{t-1}),~t=0,\ldots, n. \label{TC_2} 
\end{align}
Here,  $\{M,N\}$ are nonnegative finite integers. We use the following convention.
\begin{align*}
&\mbox{If $M=0$ then}~{\bf P}_{Y_t|Y_{t-M}^{t-1},X_t}|_{M=0}={\bf P}_{Y_t|X_t},~\mbox{i.e., the channel is memoryless},~t=0,\ldots,n.  \\
&\mbox{If $N=0$ then}~{\gamma}_{t}(x_t,y^{t-1}_{t-N})|_{N=0}=\gamma_t(x_t),~t=0,\ldots,n.  
\end{align*}

\subsubsection{Methodology}
\label{meth}
\noi The starting point of our analysis is based on the information structures of the channel input conditional distribution developed in \cite{kourtellaris-charalambous2015aieeeit}, and the convexity property of the extremum problem of feedback capacity derived in \cite{charalambous-stavrou2012isit,charalambous-stavrou2015ieeeit} for abstract alphabet spaces and in \cite{permuter-asnani-weissman2014ieeeit} for finite alphabet spaces. We translate these convexity properties into convexity properties of dynamic programming recursions.   For the reader's convenience, we introduce the main concepts we invoke in the paper in order to explain the methodology and to state some of the main contributions of this paper.\\

\noi{\it Information Structures of Optimal Channel Input Distributions Maximizing $I(X^n \rar Y^n)$.} 
 From \cite{kourtellaris-charalambous2015aieeeit}, we use the following results.\\ 
(a) For any channel distribution of class $A$, the optimal channel input conditional distribution, which maximizes $I(X^n\rightarrow{Y^n})$ satisfies conditional independence\footnote{For finite alphabet channels with $M=1$, i.e. ${\bf P}_{Y_t|Y_{t-1},X_t}$, it is conjectured in \cite{berger2003,berger-ying2003,chen-berger2005} that  \eqref{introduction:class:input:cost:equation1} holds. The authors were unable to locate, in  the literature,  the derivation of this structural result, besides \cite{kourtellaris-charalambous2015aieeeit}.}
\begin{align}
\Big\{{\bf P}_{X_t|X^{t-1}, Y^{t-1}}= {\bf P}_{X_t|Y_{t-M}^{t-1}}\equiv\pi_t(dx_t|y_{t-M}^{t-1}),~ t=0,\ldots, n\Big\} \subset{\cal P}_{0,n}\label{introduction:class:input:cost:equation1}
\end{align}
which implies the corresponding joint process $\{(X_t,Y_t):~t=0,\ldots,n\}$ is $M$-order Markov, and the output process $\{Y_t:~t=0,\ldots,n\}$ is $M$-order Markov, that is, the joint distribution and channel output transition probability distribution are given by  
\begin{align}
{\bf P}_{Y^t, X^t}^{\pi}(dy^t, dx^t) =&\otimes_{i=0}^t\Big(q_i(dy_i|y_{i-M}^{i-1}, x_i)\otimes \pi_i(dx_i|y_{i-M}^{i-1})\Big),\hso t=0, \ldots, n, \\
{\bf P}_{Y_t|Y^{t-1}}^\pi(dy_t|y^{t-1})=&{\bf P}_{Y_t|Y_{t-M}^{t-1}}^\pi(dy_t|y_{t-M}^{t-1}) \\
=& \int_{{\cal X}_t } q_t(dy_t|y_{t-M}^{t-1}, x_t)\otimes \pi_t(dx_t|y_{t-M}^{t-1}) \equiv\nu_t^\pi(dy_t|y_{t-M}^{t-1}).
\end{align}
(b)  The characterization of ${C}_{X^n \rightarrow Y^n}^{FB}$ called  ``Finite Transmissions Feedback Information'' (FTFI) capacity,  is given by the following expression.
  \begin{align}
{C}_{X^n \rightarrow Y^n}^{FB,A.M}= \sup_{{\cal P}^{A.M}_{0,n}} \sum_{t=0}^n {\bf E}^{ \pi}\left\{
\log\Big(\frac{q_t(\cdot|Y_{t-M}^{t-1}, X_t)}{\nu_t^{\pi}(\cdot|Y_{t-M}^{t-1})}(Y_t)\Big)
\right\}  \label{introduction:cor-ISR_B.2}
\end{align}
where the optimization is over the restricted set of distributions 
\begin{align}
{\cal P}^{A.M}_{0,n}=\Big\{\pi_t(dx_t|y_{t-M}^{t-1}):~t=0,\ldots,n\Big\}.
\end{align}
In view of the Markov property of the channel output process, we optimize the characterization of FTFI capacity \eqref{introduction:cor-ISR_B.2} to determine the optimal channel input distribution from the set ${\cal P}^{A.M}_{0,n}$. \\

\noi{\it Convexity of Directed Information.}  From \cite{charalambous-stavrou2015ieeeit}, we use the following results.\\
(c) The extremum problem of the characterization of FTFI capacity ${C}_{X^n \rightarrow Y^n}^{FB,A.M}$ given by \eqref{introduction:cor-ISR_B.2} is a convex optimization problem, over the space of channel input distributions ${\cal P}^{A.M}_{0,n}$.\\
(d) The characterization of FTFI capacity ${C}_{X^n \rightarrow Y^n}^{FB,A.M}$ can be reformulated as a double sequential maximization problem of concave functionals over appropriate convex subsets of probability distributions.\\

\subsubsection{Sequential Necessary and Sufficient Conditions of the Characterization of FTFI Capacity for Class A Channels}
We derive the sequential necessary and sufficient conditions for the extremum problem \eqref{introduction:cor-ISR_B.2} as follows.

\noi {\it Dynamic Programming Recursions.}  In view of (a)-(d),  we apply dynamic programming and standard techniques of optimization of convex functionals defined on the set of probability distributions, to derive sequential necessary and sufficient conditions for any channel input distribution from the set ${\cal P}^{A.M}_{0,n}$ to achieve the supremum in the characterization of FTFI capacity ${C}_{X^n \rightarrow Y^n}^{FB,A.M}$. \\ 
\noi Specifically, let $C_t: {\cal Y}^{t-1}_{t-M}\longmapsto[0,\infty)$ represent the maximum expected total pay-off in (\ref{introduction:cor-ISR_B.2}) on the future time horizon $\{t,t+1,\ldots,n\}$, given $Y^{t-1}_{t-M}=y^{t-1}_{t-M}$ at time $t-1$, defined by 
\begin{align}
&C_t(y^{t-1}_{t-M})=\sup_{\big\{ {\pi}_i(dx_i|y_{i-M}^{i-1}):~i=t,t+1,\ldots,n\big\}}{\bf E}^{\pi}\bigg\{\sum_{i=t}^n\log\Big(\frac{d{q}_i(\cdot|y^{i-1}_{i-M}, x_i)}{d\nu^\pi_t(\cdot|y^{i-1}_{i-M})}(Y_i)\Big)\Big{|}Y^{t-1}_{t-M}=y^{t-1}_{t-M}\bigg\}.\label{introduction:algorithms:generalizations:lmco:equation17}
\end{align}
\noi The dynamic programming recursions for  (\ref{introduction:algorithms:generalizations:lmco:equation17}) are the following.
\begin{align}
C_n(y^{n-1}_{n-M})=&\sup_{{\pi}_n(dx_n|y_{n-M}^{n-1})}\int_{{\cal X}_n\times{\cal Y}_n}\log\Big(\frac{q_{n}(\cdot|y_{n-M}^{n-1}, x_n)}{{\nu}^{\pi}_{n}(\cdot|y_{t-M}^{n-1})}(y_n)\Big)q_{n}(dy_n|y_{n-M}^{n-1}, x_n)\otimes{\pi}_n(dx_n|y_{n-M}^{n-1}),\label{introduction:algorithms:generalizations:lmco:equation18}\\
C_t(y^{t-1}_{t-M})=&\sup_{{\pi}_t(dx_t|y_{t-M}^{t-1})}\int_{{\cal X}_t\times{\cal Y}_t}\Big(\log\Big(\frac{dq_t(\cdot|y_{t-M}^{t-1}, x_t)}{\nu^{\pi}_{t}(\cdot|y_{t-M}^{t-1})}(y_t)\Big)\nonumber\\
&\qquad\qquad+C_{t+1}(y^t_{t+1-M})\Big)q_t(dy_t|y_{t-M}^{t-1}, x_t)\otimes{\pi}(dx_t|y_{t-M}^{t-1}),~t=0,\ldots,n-1.\label{introduction:algorithms:generalizations:lmco:equation19}
\end{align}
Since   \eqref{introduction:algorithms:generalizations:lmco:equation18}, \eqref{introduction:algorithms:generalizations:lmco:equation19} form a convex optimization problem (sequentially backward in time), we prove the following sequential necessary and sufficient conditions.

\begin{theorem}(Sequential necessary and sufficient conditions for channels of class A)\label{introduction:baa:sequential:theorem:lmco:necessary:sufficient}{\ \\}
The necessary and sufficient conditions for any input distribution $\{\pi_{t}(dx_t|y_{t-M}^{t-1}):~t=0,\ldots,n\}$ to achieve the supremum in $C^{FB,A.M}_{X^n\rightarrow{Y^n}}$ defined by \eqref{introduction:cor-ISR_B.2} (assuming it exists) are the following.\\
\noi{(a)} For each $y_{n-M}^{n-1}\in{\cal Y}_{n-M}^{n-1}$, there exist a  ${C}_n(y_{n-M}^{n-1})$ such that the following hold.
\begin{align}
&\int_{{\cal Y}_{n}}\log\Big(\frac{dq_n(\cdot|y_{n-M}^{n-1}, x_n)}{d\nu^{\pi}_n(\cdot|y_{n-M}^{n-1})}(y_n)\Big)q_n(dy_n|y_{n-M}^{n-1}, x_n)=C_n(y_{n-M}^{n-1}),~\forall{x_n\in{\cal X}_n},~\mbox{if}~{\pi}_n(dx_n|y_{n-M}^{n-1})\neq{0},\label{introduction:baa:sequential:theorem:lmco:necessary:sufficient:1}\\
&\int_{{\cal Y}_{n}}\log\Big(\frac{dq_n(\cdot|y_{n-M}^{n-1}, x_n)}{d\nu^{\pi}_n(\cdot|y_{n-M}^{n-1})}(y_n)\Big)q_n(dy_n|y_{n-M}^{n-1}, x_n)\leq{C}_n(y_{n-M}^{n-1}),~\forall{x_n\in{\cal X}_n},~\mbox{if}~{\pi}_n(dx_n|y_{n-M}^{n-1})={0}\label{introduction:baa:sequential:theorem:lmco:necessary:sufficient:2}
\end{align}
and moreover, $C_n(y_{n-M}^{n-1})$ is the value function defined by \eqref{introduction:algorithms:generalizations:lmco:equation17} at $t=n$.\\
\noi{(b)} For each $t$, $y_{t-M}^{t-1}\in{\cal Y}_{t-M}^{t-1}$, there exist a  ${C}_t(y_{t-M}^{t-1})$ such that the following hold.
\begin{align}
&\int_{{\cal Y}_{t}}\Big(\log\Big(\frac{dq_t(\cdot|y_{t-M}^{t-1}, x_t)}{d\nu^{\pi}_t(\cdot|y_{t-M}^{t-1})}(y_t)\Big)+C_{t+1}(y^t_{t+1-M})\Big)\nonumber\\
&\qquad\qquad{q}_t(dy_t|y_{t-M}^{t-1}, x_t)=C_t(y_{t-M}^{t-1}),~\forall{x_t\in{\cal X}_t},~\mbox{if}~{\pi}_t(dx_t|y_{t-M}^{t-1})\neq{0},\label{introduction:baa:sequential:theorem:lmco:necessary:sufficient:3}\\
&\int_{{\cal Y}_{t}}\Big(\log\Big(\frac{dq_t(\cdot|y_{t-M}^{t-1}, x_t)}{d\nu^{\pi}_t(\cdot|y_{t-M}^{t-1})}(y_t)\Big)+C_{t+1}(y^t_{t+1-M})\Big)\nonumber\\
&\qquad\qquad{q}_t(dy_t|y_{t-M}^{t-1}, x_t)\leq{C}_t(y_{t-M}^{t-1}),~\forall{x_t\in{\cal X}_t},~\mbox{if}~{\pi}_t(dx_t|y_{t-M}^{t-1})={0}\label{introduction:baa:sequential:theorem:lmco:necessary:sufficient:4}
\end{align}
for $t\in\{n-1,\ldots,0\}$, and  moreover, $C_t(Y_{t-M}^{t-1})$ is the value function defined by \eqref{introduction:algorithms:generalizations:lmco:equation17} for $t\in\{n-1,\ldots,0\}$.
\end{theorem}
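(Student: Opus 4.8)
The plan is to treat each stage of the dynamic programming recursion \eqref{introduction:algorithms:generalizations:lmco:equation18}--\eqref{introduction:algorithms:generalizations:lmco:equation19} as a concave maximization of a functional over the convex set of probability measures $\pi_t(dx_t|y_{t-M}^{t-1})$, and to characterize its maximizers by the first-order (variational) optimality condition. By hypothesis (c) each per-stage objective is concave in $\pi_t$, so the conditions I derive are simultaneously necessary and sufficient; this is precisely what produces the equality on the support of $\pi_t$ together with the inequality off it.

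First I would dispatch the terminal stage, part (a). Fixing $y_{n-M}^{n-1}$, the objective in \eqref{introduction:algorithms:generalizations:lmco:equation18} is the conditional mutual information $I(X_n;Y_n|Y_{n-M}^{n-1}=y_{n-M}^{n-1})$ regarded as a functional of $\pi_n$, with the output marginal depending on $\pi_n$ through $\nu_n^\pi(dy_n|\cdot)=\int q_n(dy_n|\cdot,x_n)\pi_n(dx_n|\cdot)$. I would compute its G\^ateaux derivative at $\pi_n$ in the direction of an arbitrary competitor $\bar\pi_n$. The crucial simplification is that the contribution arising from the $\pi_n$-dependence of $\nu_n^\pi$ inside the logarithm cancels along mass-preserving directions, because $\int_{{\cal Y}_n}q_n(dy_n|\cdot,x_n)=1$ and $\bar\pi_n,\pi_n$ both have unit mass; the derivative therefore collapses to
\begin{align*}
\int_{{\cal X}_n}\Big(\int_{{\cal Y}_n}\log\frac{dq_n(\cdot|y_{n-M}^{n-1},x_n)}{d\nu_n^\pi(\cdot|y_{n-M}^{n-1})}(y_n)\,q_n(dy_n|y_{n-M}^{n-1},x_n)\Big)\big(\bar\pi_n-\pi_n\big)(dx_n|y_{n-M}^{n-1}).
\end{align*}
Concavity makes the requirement that this derivative be $\le0$ for every admissible $\bar\pi_n$ both necessary and sufficient. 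Testing against Dirac measures forces the parenthesized influence function to lie below a common constant everywhere and to equal it on the support of $\pi_n$; identifying that constant with $C_n(y_{n-M}^{n-1})$ (by integrating the equality against $\pi_n$, which reproduces the value \eqref{introduction:algorithms:generalizations:lmco:equation18}) gives exactly \eqref{introduction:baa:sequential:theorem:lmco:necessary:sufficient:1}--\eqref{introduction:baa:sequential:theorem:lmco:necessary:sufficient:2}.

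For the backward step, part (b), I would repeat the computation on the stage-$t$ objective in \eqref{introduction:algorithms:generalizations:lmco:equation19}. The only new feature is the cost-to-go $C_{t+1}(y_{t+1-M}^{t})$; since it has already been determined by the recursion and is independent of $\pi_t$, it enters the objective linearly through $\int_{{\cal X}_t}\int_{{\cal Y}_t}C_{t+1}(y_{t+1-M}^t)\,q_t(dy_t|\cdot,x_t)\pi_t(dx_t|\cdot)$, so the stage-$t$ functional remains concave and its G\^ateaux derivative is the stage-$n$ expression augmented by the additive term $\int_{{\cal Y}_t}C_{t+1}(y_{t+1-M}^t)\,q_t(dy_t|y_{t-M}^{t-1},x_t)$ inside the parentheses. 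The identical variational argument then yields \eqref{introduction:baa:sequential:theorem:lmco:necessary:sufficient:3}--\eqref{introduction:baa:sequential:theorem:lmco:necessary:sufficient:4} with the constant identified as $C_t(y_{t-M}^{t-1})$, and the principle of optimality underlying \eqref{introduction:algorithms:generalizations:lmco:equation17}--\eqref{introduction:algorithms:generalizations:lmco:equation19} assembles these stage-wise conditions into conditions for the full characterization \eqref{introduction:cor-ISR_B.2}.

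The step I expect to be the main obstacle is the rigorous G\^ateaux-derivative computation for the mutual-information part in the abstract-alphabet setting: verifying that differentiation may be carried under the integral sign where $\nu_n^\pi$ appears inside the logarithm, and confirming that the induced $\nu_n^\pi$-variation cancels exactly along feasible directions. Once the influence function is in hand, the passage to the equality/inequality dichotomy is the routine first-order condition for maximizing a concave functional over a simplex.
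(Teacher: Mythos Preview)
Your proposal is correct and follows essentially the same route as the paper. The paper proves the more general cost-constrained version (Theorem~\ref{theorem:necessary:sufficient:ftfi:lmco:cost}) and obtains Theorem~\ref{introduction:baa:sequential:theorem:lmco:necessary:sufficient} as the $s=0$ special case; its argument works in finite alphabets, differentiates the stage-$t$ objective coordinatewise with a Lagrange multiplier $\lambda_t(y_{t-J}^{t-1})$ for the simplex constraint, observes the same cancellation of the $\nu^\pi$-variation (their equations \eqref{proof:necessary:sufficient:lmco:equation1}--\eqref{proof:necessary:sufficient:lmco:equation2}), and identifies $1-\lambda_t$ with the value function by multiplying by $\pi_t$ and summing---exactly your ``integrate the equality against $\pi_t$'' step. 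Your G\^ateaux-derivative/feasible-direction phrasing and the paper's coordinatewise KKT are the same first-order condition for a concave functional over the simplex, so the difference is purely notational; the paper sidesteps the abstract-alphabet differentiability issue you flag by restricting to finite alphabets for the proof and stating separate sufficient conditions (C1)--(C4) in Section~\ref{generalizations:abstract:alphabets} for the general case.
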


\noi In application examples of time-varying channels with memory (Section~\ref{section:application:examples}), we invoke Theorem~\ref{introduction:baa:sequential:theorem:lmco:necessary:sufficient} to derive recursive expressions of the optimal channel input distributions. Moreover, from these expressions, we derive the optimal channel input distributions for the per unit time limiting expression $C^{FB}_{X^\infty\rightarrow{Y^\infty}}$, and we show it converges to feedback capacity. 

The necessary and sufficient conditions stated in Theorem~\ref{introduction:baa:sequential:theorem:lmco:necessary:sufficient}, are generalizations of the ones obtained by Gallager \cite{gallager1968} and Jelinek \cite{jelinek1968}, for Discrete Memoryless Channels (DMCs). The main point to be made, is that for channels with memory, we derive the dynamic versions of Gallager and Jelinek's necessary and sufficient conditions, and these are sequential necessary and sufficient conditions. 

\noi In Theorem~\ref{theorem:necessary:sufficient:ftfi:lmco:cost} we derive similar necessary and sufficient conditions for  channel distributions of Class $A$ and  transmission cost functions of Class $A$. In Section~\ref{section:generalizations},  we illustrate how to extend the necessary and sufficient conditions of Theorem~\ref{theorem:necessary:sufficient:ftfi:lmco:cost} to channel distributions of Class $B$ and transmission cost functions of Class $A$ or $B$, and to channel distributions of Class $A$ with transmission cost functions of Class $B$.

\subsubsection{Applications Examples of Necessary and Sufficient Conditions}\label{introduction:applications}
In Section~\ref{section:application:examples}, we apply the sequential necessary and sufficient conditions to derive recursive closed form expressions of optimal channel input conditional distributions, which achieve the characterizations of FTFI capacity of the following channels.

\begin{description}
\item[(a)] The time-varying Binary Unit Memory Channel Output (BUMCO) channel (defined by \eqref{introduction:section:example:matrix:general:bumco}).
\item[(b)] The time-varying Binary Erasure Unit Memory Channel Output (BEUMCO) channel (defined by \eqref{introduction:example:conjecture:matrix:beumco}).

\item[(c)] The time-varying Binary Symmetric Two Memory Channel Output (BSTMCO) channel (defined by \eqref{introduction:section:example:matrix:general:bstmco}).

\end{description}
Further, we consider the time-invariant or homogeneous versions of the BUMCO and BEUMCO channels, and we investigate the asymptotic properties of optimal channel input conditional distributions, by analyzing the per unit time limit of the characterizations of FTFI capacity, specifically, $C_{X^\infty\rightarrow{Y^\infty}}^{FB}$. Via this analysis, we derive the ergodic properties of optimal channel input conditional distributions, which achieve feedback capacity without imposing any \'a priori assumptions, such as, stationarity, ergodicity, or information stability. Rather, we show that the optimal channel input conditional distributions, induce ergodicity of the joint process $\{(X_t, Y_t): t=0,1, \ldots\}$.  \\
Next, we discuss one of the application examples of this paper.

\noi{\it The Time-Varying Binary Unit Memory Channel Output (BUMCO) Channel} In Section~\ref{subsection:applications:ftfi:capacity:bumco}, we apply Theorem~\ref{introduction:baa:sequential:theorem:lmco:necessary:sufficient} to  the time-varying BUMCO channel, denoted by $\{BUMCO(\alpha_t,\beta_t,$ $\gamma_t,\delta_t)$:~$t=0,\ldots,n\}$, and defined by the transition matrix
\begin{align}
q_t(dy_t|x_t,y_{t-1})=\bordermatrix{&0,0&0,1&1,0&1,1\cr
            0&\alpha_t&\beta_t&\gamma_t&\delta_t\cr
            1&1-\alpha_t&1-\beta_t&1-\gamma_t&1-\delta_t\cr},~ \alpha_t, \beta_t, \gamma_t, \delta_t\in[0,1],\alpha_t\neq{\gamma_t},\beta_t\neq{\delta_t}.\label{introduction:section:example:matrix:general:bumco}
\end{align}
\noi That is, for channel \eqref{introduction:section:example:matrix:general:bumco}, the characterization of FTFI capacity is $C^{FB,A.1}_{X^n\rightarrow{Y^n}}$, given by \eqref{introduction:cor-ISR_B.2} with $M=1$.\\
We prove the following theorem.
\begin{theorem}(Optimal solution of $BUMCO$)\label{introduction:theorem:bumco:optimal:solutions}{\ \\}
Consider the time-varying $\{BUMCO(\alpha_t,\beta_t,$ $\gamma_t,\delta_t)$:~$t=0,\ldots,n\}$ defined by \eqref{introduction:section:example:matrix:general:bumco}, and denote the optimal channel input distribution and the corresponding channel output transition probability distribution by $\Big\{{\pi}_t^*(x_t|y_{t-1}):  (x_t, y_{t-1})\in \{0,1\} \times \{0,1\}, t=0,\ldots,n\Big\}$, and $\Big\{ {\nu}_t^{\pi^*}(y_t|y_{t-1}):  (y_t, y_{t-1})\in \{0,1\} \times \{0,1\}, t=0,\ldots,n\Big\}$, respectively. Then the following hold.
\begin{itemize} 
\item[(a)] The optimal distributions are given by the following expressions\footnote{Define $H(x)\triangleq-xlog_2(x)-(1-x)\log_2(1-x),~x\in[0,1]$.}.
\begin{subequations}
\begin{align}
 {\pi}^*_t(0|0)&=\frac{1-\gamma_t(1+2^{\mu_0(t)+\Delta{C}_{t+1}})}{(\alpha_t-\gamma_t)(1+2^{\mu_0(t)+\Delta{C}_{t+1}})},~& {\pi}^*_t(0|1)&=\frac{1-\delta_t(1+2^{\mu_1(t)+\Delta{C}_{t+1}})}{(\beta_t-\delta_t)(1+2^{\mu_1(t)+\Delta{C}_{t+1}})},\\
{\pi}^*_t(1|0)&=1-{\pi}^*_t(0|0),~&{\pi}^*_t(1|1)&=1-{\pi}^*_t(0|1),\\
\nu_t^{\pi^*}(0|0)&=\frac{1}{1+2^{\mu_0(t)+\Delta{C}_{t+1}}},~&\nu_t^{\pi^*}(0|1)&=\frac{1}{1+2^{\mu_1(t)+\Delta{C}_{t+1}}},\\
\nu_t^{\pi^*}(1|0)&=1-\nu_t^{\pi^*}(0|0),~&\nu_t^{\pi^*}(1|1)&=1-\nu_t^{\pi^*}(0|1),\\
\mu_0(\alpha_t,\gamma_t)&=\frac{H(\gamma_t)-H(\alpha_t)}{\gamma_t-\alpha_t}\equiv{\mu}_0(t),~&\mu_1(\beta_t,\delta_t)&=\frac{H(\beta_t)-H(\delta_t)}{\beta_t-\delta_t}\equiv\mu_1(t).
\end{align}\label{section:example:bumco:equation1}
\end{subequations}
where $\{{\Delta}C_t\triangleq{C}_t(1)-C_t(0):~t=0,\ldots,n+1\}$, is the difference of the value functions at each time, satisfying the following backward recursions.
\begin{subequations}
\begin{align}
&\Delta{C}_{n+1}=0,\\
&\Delta{C}_{t}=\Big(\mu_1(t)(\beta_t-1)-\mu_0(t)(\alpha_t-1)\Big)+H(\alpha_t)-H(\beta_t)+\log\Big(\frac{1+2^{\mu_1(t)+\Delta{C}_{t+1}}}{1+2^{\mu_0(t)+\Delta{C}_{t+1}}}\Big),~t\in\{n,\ldots,0\}.  
\end{align}\label{section:example:bumco:equation2b}
\end{subequations}
\item[(b)] The value functions are given recursively by the following expressions.
\begin{align}
C_{t}(0)&=\mu_0(t)(\alpha_{t}-1)+C_{t+1}(0)+\log(1+2^{\mu_0(t)+\Delta{C}_{t+1}})-H(\alpha_{t}),~C_{n+1}(0)=0,\label{introduction:theorem:value:function:eq1}\\
~C_{t}(1)&=\mu_1(t)(\beta_{t}-1)+{C}_{t+1}(0)+\log(1+2^{\mu_1(t)+\Delta{C}_{t+1}})-H(\beta_{t}),~C_{n+1}(1)=0,~t\in\{n,\ldots,0\}.\label{introduction:theorem:value:function:eq2}
\end{align}

\item[(c)] The characterization of the FTFI capacity is given by
\begin{align}
C^{FB,A.1}_{X^n\rightarrow{Y^n}}=\sum_{y_{-1}\in\{0,1\}}C_0(y_{-1}){\bf P}_{Y_{-1}}(dy_{-1}),~{\bf P}_{Y_{-1}}(dy_{-1})\equiv\mu(dy_{-1})~\mbox{is fixed}.\label{introduction:ftfi:capacity:bumco}
\end{align}

\item[(d)] If the channel is time-invariant, denoted by BUMCO$(\alpha,\beta,\gamma,\delta)$, then the following hold.\\
The ergodic feedback capacity  $C^{FB,A.1}_{X^\infty\rightarrow{Y^\infty}}$ is given by the following expression.
\begin{align}
C_{X^\infty\rightarrow{Y}^{\infty}}^{FB,A.1}&=\lim_{n\longrightarrow\infty}\frac{1}{n+1}C_{X^n\rightarrow{Y}^{n}}^{FB,A.1}=\nu_0\Big(H(\nu_{0|0})-H(\gamma)\Big)+(1-\nu_0)\Big(H(\nu_{0|1})-H(\delta)\Big)\nonumber\\
&\qquad\qquad+\xi_0\Big(H(\gamma)-H(\alpha)\Big)+\xi_1\Big(H(\delta)-H(\beta)\Big)\label{introduction:example:ergodic:feedback:capacity:bumco}
\end{align}
where
\begin{subequations}
\begin{align}
\nu_0\equiv\nu^{\pi^{*,\infty}}(0)&=\frac{1+2^{\mu_0+\Delta{C}^{\infty}}}{1+2^{\mu_0+\mu_1+2\Delta{C}^{\infty}}+2^{\mu_0+1+\Delta{C}^{\infty}}},~\xi_0=\frac{1-\gamma(1+2^{\mu_0+\Delta{C}^{\infty}})}{(\alpha-\gamma)\big(1+2^{\mu_0+\mu_1+2\Delta{C}^{\infty}}+2^{\mu_0+1+\Delta{C}^{\infty}}\big)},\nonumber\\
\xi_1&=\frac{2^{\mu_0+\Delta{C}^{\infty}}\big(1-\delta(1+2^{\mu_1+\Delta{C}^{\infty}})\big)}{(\beta-\delta)\big(1+2^{\mu_0+\mu_1+2\Delta{C}^{\infty}}+2^{\mu_0+1+\Delta{C}^{\infty}}\big)},~\nu_{0|0}=\nu^{\pi^{*,\infty}}(0|0),~~~\nu_{0|1}=\nu^{\pi^{*,\infty}}(0|1),\nonumber\\
\mu_0(\alpha,\gamma)&=\frac{H(\gamma)-H(\alpha)}{\gamma-\alpha}\equiv{\mu}_0,~~~\mu_1(\beta,\delta)=\frac{H(\beta)-H(\delta)}{\beta-\delta}\equiv\mu_1.\nonumber
\end{align}\label{introduction:example:ergodic:feedback:capacity:bumco:equation11}
\end{subequations}
$\Delta{C}^{\infty}$ is the steady-state solution of the algebraic equation
\begin{align}
\Delta{C}^{\infty}=&\big(\mu_1(\beta-1)-\mu_0(\alpha-1)\big)+H(\alpha)-H(\beta)+\log\Big(\frac{1+2^{\mu_1+\Delta{C}^{\infty}}}{1+2^{\mu_0+\Delta{C}^{\infty}}}\Big), \label{section:example:bumco:time-invariant:equation2b} 
\end{align}
and $\{\nu^{\pi^{*,\infty}}(y): y \in \{0,1\}\}$ is the unique invariant distribution of $\big\{\nu^{\pi^{*,\infty}}(z|y):~(z,y)\in\{0,1\}\times\{0,1\}\big\}$, given by
\begin{subequations}
\begin{align}
 {\pi}^{*,\infty}(0|0)&=\frac{1-\gamma(1+2^{\mu_0+\Delta{C}^\infty})}{(\alpha-\gamma)(1+2^{\mu_0+\Delta{C}^{\infty}})},~&~{\pi}^{*,\infty}(0|1)&=\frac{1-\delta(1+2^{\mu_1+\Delta{C}^{\infty}})}{(\beta-\delta)(1+2^{\mu_1+\Delta{C}^{\infty}})},\\
 {\pi}^{*,\infty}(1|0)&=1-{\pi}^{*,\infty}(0|0),~&~{\pi}^{*,\infty}(1|1)&=1-{\pi}^{*,\infty}(0|1),\\
\nu^{\pi^{*,\infty}}(0|0)&=\frac{1}{1+2^{\mu_0+\Delta{C}^{\infty}}},~&~\nu^{\pi^{*,\infty}}(0|1)&=\frac{1}{1+2^{\mu_1+\Delta{C}^{\infty}}},\\
\nu^{\pi^{*,\infty}}(1|0)&=1-\nu^{\pi^{*,\infty}}(0|0),~&~\nu^{\pi^{*,\infty}}(1|1)&=1-\nu^{\pi^{*,\infty}}(0|1).
\end{align}\label{section:example:bumco:time-invariant:equation1}
\end{subequations}
\end{itemize}
\end{theorem}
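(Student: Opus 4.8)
The plan is to specialize the sequential necessary and sufficient conditions of Theorem~\ref{introduction:baa:sequential:theorem:lmco:necessary:sufficient} to $M=1$ with binary input and output alphabets, and to solve the resulting scalar fixed-point equations in closed form; sufficiency is then automatic because Theorem~\ref{introduction:baa:sequential:theorem:lmco:necessary:sufficient} is both necessary and sufficient. First I would parametrize the optimal output kernel by $\nu_t^\pi(0|y_{t-1})$ and record the affine relation forced by the channel matrix, namely $\nu_t^\pi(0|0)=\gamma_t+(\alpha_t-\gamma_t)\pi_t(0|0)$ and $\nu_t^\pi(0|1)=\delta_t+(\beta_t-\delta_t)\pi_t(0|1)$, so that recovering $\pi_t^*$ from $\nu_t^{\pi^*}$ is a single inversion (this is exactly where $\alpha_t\neq\gamma_t$, $\beta_t\neq\delta_t$ are used). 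For an interior optimizer the condition \eqref{introduction:baa:sequential:theorem:lmco:necessary:sufficient:3} holds with equality for both $x_t=0$ and $x_t=1$ with the same common value $C_t(y_{t-1})$; subtracting the two equations eliminates $C_t(y_{t-1})$ and, after collecting the binary sum over $y_t\in\{0,1\}$ and using the definitions of $H(\cdot)$ and of $\Delta C_{t+1}$, reduces to $\log\frac{\nu}{1-\nu}=-\mu_0(t)-\Delta C_{t+1}$ for $y_{t-1}=0$ (and the analogue with $\mu_1(t)$ for $y_{t-1}=1$). Solving this scalar equation gives $\nu_t^{\pi^*}(0|y_{t-1})$ of \eqref{section:example:bumco:equation1}, the affine inversion then yields $\pi_t^*$, and one checks a posteriori that the closed forms lie in $[0,1]$, confirming feasibility and that the equality branch is the relevant one; this establishes the kernels in part (a).

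For the $\Delta C_t$ recursion in (a) and for part (b), I would re-insert the optimal kernel into the equality branch of \eqref{introduction:baa:sequential:theorem:lmco:necessary:sufficient:3} evaluated at $x_t=0$, which by definition equals $C_t(y_{t-1})$. Substituting $\log\nu=-\log(1+2^{\mu_0(t)+\Delta C_{t+1}})$ and $\log(1-\nu)=(\mu_0(t)+\Delta C_{t+1})-\log(1+2^{\mu_0(t)+\Delta C_{t+1}})$, and writing $\alpha_tC_{t+1}(0)+(1-\alpha_t)C_{t+1}(1)=C_{t+1}(0)+(1-\alpha_t)\Delta C_{t+1}$, produces a cancellation of the $\Delta C_{t+1}$ contributions and gives \eqref{introduction:theorem:value:function:eq1}; the symmetric computation gives \eqref{introduction:theorem:value:function:eq2}. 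Both recursions carry $C_{t+1}(0)$ precisely because of this cancellation, and subtracting them yields the scalar backward recursion \eqref{section:example:bumco:equation2b}, with terminal data $\Delta C_{n+1}=0$, $C_{n+1}=0$ read directly off \eqref{introduction:algorithms:generalizations:lmco:equation18} (no continuation term at $t=n$). Part (c) is then immediate from the dynamic programming principle: $C_0(y_{-1})$ is the conditional optimal total payoff given $Y_{-1}=y_{-1}$, so the unconditional characterization of FTFI capacity is the average $\sum_{y_{-1}}C_0(y_{-1})\mathbf{P}_{Y_{-1}}(dy_{-1})$.

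Part (d) is the substantive part, and I expect the asymptotic analysis to be the main obstacle. For the time-invariant channel the recursion \eqref{section:example:bumco:equation2b} becomes an autonomous scalar map $\Delta C_t=f(\Delta C_{t+1})$ with $f'(x)=\sigma(\mu_1+x)-\sigma(\mu_0+x)$, where $\sigma(u)=2^{u}/(1+2^{u})\in(0,1)$. Since the two logistic curves have the same limits as $x\to\pm\infty$, their difference is continuous, vanishes at infinity, and attains a maximum strictly below $1$, so $\sup_x|f'(x)|<1$ and $f$ is a contraction on the complete space $\mathbb{R}$; hence the backward iterates $\Delta C_t=f^{(n+1-t)}(0)$ converge geometrically to the unique fixed point $\Delta C^\infty$ solving \eqref{section:example:bumco:time-invariant:equation2b}. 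This convergence transfers through the closed forms of part (a) to give the steady-state kernels \eqref{section:example:bumco:time-invariant:equation1}. Because $\alpha\neq\gamma$, $\beta\neq\delta$, the steady-state output kernel $\nu^{\pi^{*,\infty}}(\cdot|\cdot)$ has all entries in $(0,1)$ and therefore defines an irreducible aperiodic two-state Markov chain with a unique invariant distribution; solving its balance equation gives $\nu_0$ in \eqref{introduction:example:ergodic:feedback:capacity:bumco:equation11}.

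Finally I would telescope \eqref{introduction:theorem:value:function:eq1} into $C_0(0)=\sum_{t=0}^{n}\big(\mu_0(\alpha-1)+\log(1+2^{\mu_0+\Delta C_{t+1}})-H(\alpha)\big)$. Since $\Delta C_{t+1}=f^{(n-t)}(0)\to\Delta C^\infty$ geometrically as the number of backward steps grows, all but a vanishing fraction of the $n+1$ summands are arbitrarily close to the steady-state increment, so $\frac{1}{n+1}C_0(y_{-1})$ converges to it; the limit is independent of $y_{-1}$ because $\frac{1}{n+1}|\Delta C_0|\to0$. The remaining and most delicate task is the algebraic identity showing this per-step limit equals \eqref{introduction:example:ergodic:feedback:capacity:bumco}: here I would recognize the steady-state increment as the stationary conditional directed information $I(X_t;Y_t|Y_{t-1})=H(Y_t|Y_{t-1})-H(Y_t|Y_{t-1},X_t)$ evaluated at the invariant law, and verify that $\xi_0=\nu_0\,\pi^{*,\infty}(0|0)$ and $\xi_1=(1-\nu_0)\,\pi^{*,\infty}(0|1)$ are exactly the stationary joint probabilities $\mathbf{P}(X_t=0,Y_{t-1}=0)$ and $\mathbf{P}(X_t=0,Y_{t-1}=1)$; matching the coefficients of $H(\alpha),H(\beta),H(\gamma),H(\delta)$ in the two expressions then reconciles the value-function form with the entropy form \eqref{introduction:example:ergodic:feedback:capacity:bumco}.
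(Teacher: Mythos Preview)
Your approach is essentially the same as the paper's: both specialize Theorem~\ref{introduction:baa:sequential:theorem:lmco:necessary:sufficient} to $M=1$, equate the two $x_t$-branches of the optimality condition to solve for $\nu_t^{\pi^*}(0|y_{t-1})$, invert the affine channel relation to recover $\pi_t^*$, substitute back to obtain the value-function recursions, and then, in the time-invariant case, establish convergence of $\Delta C_t$ via the contraction $|f'|<1$ and read off the steady-state quantities. Your justification of the contraction via the sigmoid-difference bound and your telescoping argument for the per-unit-time limit are somewhat more explicit than the paper's corresponding steps, but the underlying logic and the sequence of computations coincide.
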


\noi The derivation is given in Section~\ref{subsection:applications:ftfi:capacity:bumco}.  To the best of the authors knowledge, the only other reference, where  closed form expressions for  feedback capacity and capacity achieving distributions are derived,  from the solution of the finite-time horizon directed information extremum problem $C_{X^n \rightarrow Y^n}^{FB}(\kappa)$ defined by (\ref{cap_fb_1_cost}),  is \cite{charalambous-kourtellaris-loyka2016ieeeit}, where analogous results are obtained for Multiple Input Multiple Output Gaussian Linear Channels Models with memory.  

\noi In Sections~\ref{subsection:applications:ftfi:capacity:beumco}, \ref{subsection:applications:ftfi:capacity:bstmco}, we derive analogous results for the BEUMCO channel and the BSTMCO channel, respectively.   

\noi These application examples are by no means exhaustive; they are simply introduced and analyzed in order to  illustrate the effectiveness of the sequential necessary and sufficient conditions for any channel input distribution to maximize the characterizations of FTFI capacity, and their application in computing feedback capacity, via  the asymptotic analysis of the per unit time limit of the characterization of FTFI capacity.

\par This paper is structured as follows. In Section~\ref{baa:sequential:section:problem:formulation}, we give the machinery and background material based on which the results in this paper are developed. In Section~\ref{section:generalizations:lmco}, we derive the sequential necessary and sufficient conditions for channels of class $A$ with transmission cost functions of class $A$. In Section~\ref{section:application:examples} we apply the sequential necessary and sufficient conditions to the BUMCO channel, the BEUMCO channel, and the BSTMCO channel. In Section~\ref{generalizations:abstract:alphabets}, we give sufficient  conditions for the results of the paper to extend to abstract alphabet spaces (i.e., countable, continuous,  mixed, etc.). In Section~\ref{section:generalizations}, we illustrate that the main theorems of Section~\ref{section:generalizations:lmco} extend to channels of class $B$ with transmission cost functions of class $A$ or $B$. We draw conclusions and future directions in Section~\ref{conclusion}.

%
%
%

\section{Preliminaries: Extremum Problems of Feedback Capacity and Background Material}\label{baa:sequential:section:problem:formulation}

%
%
%
%
\par In this section, we introduce the notation, the definition of extremum problem of feedback capacity, and we recall the variational equality derived in \cite{charalambous-stavrou2015ieeeit}.

\subsection{Basic Notation}\label{subsection:basic:notions:probability}
We denote the set of nonnegative integers by $\mathbb{N}_0\triangleq\{0, 1,\ldots\}$, and for any $n\in\mathbb{N}_0$, its restriction to a finite set by  $\mathbb{N}^n_0\triangleq\{0, 1,\ldots,n\}$. Given two measurable spaces $({\cal X}, {\cal  B}({\cal X}))$,  $({\cal Y}, {\cal  B}({\cal Y}))$, we denote the Cartesian product of ${\cal X}$ and ${\cal Y}$ by ${\cal X} \times {\cal Y} \triangleq \{(x,y):  x\in {\cal X}, y \in {\cal Y}\}$, and the product measurable space of $({\cal X}, {\cal  B}({\cal X}))$ and $({\cal Y}, {\cal  B}({\cal Y}))$ by $({\cal X} \times {\cal Y}, {\cal  B}({\cal X})\otimes  {\cal  B}({\cal Y}))$, where  ${\cal  B}({\cal X})\otimes   {\cal  B}({\cal Y})$ is the product $\sigma-$algebra generated by $\{A \times B:  A \in {\cal  B}({\cal X}), B\in  {\cal  B}({\cal Y})\}$. We denote by $H(\cdot)$ the binary entropy, and by $card(\cdot)$ the cardinality of the space.  \\
We denote the probability distribution induced by a Random Variable (RV) $X$ defined on a probability space $(\Omega, {\cal F}, {\mathbb P})$, by the mapping $X: (\Omega, {\cal F}) \longmapsto ({\cal X}, {\cal  B}({\cal X}))$, as follows\footnote{The subscript $X$ is often omitted.}.
\begin{align}
{\bf P}(A) \equiv  {\bf P}_X(A)  \triangleq {\mathbb P}\big\{ \omega \in \Omega: X(\omega)  \in A\big\}, \quad  \forall A \in {\cal  B}({\cal X}).
 \end{align}
 We denote the set of all probability distributions on $({\cal X}, {\cal  B}({\cal X}))$ by ${\cal M}({\cal X})$. A RV $X$ is called discrete if there exists a countable set ${\cal S}_X\triangleq \{x_i: ~i \in {\mathbb{N}_0}\}$ such that $\sum_{x_i \in {\cal S}_X} {\mathbb  P} \{ \omega \in \Omega : X(\omega)=x_i\}=1$. In this case, the probability distribution ${\bf P}_X(\cdot)$  is concentrated on  points in ${\cal S}_X$, and it is defined by 
\begin{align}
 {\bf P}_X(A)  \triangleq \sum_{x_t \in {\cal S}_X \bigcap A} {\mathbb P} \big\{ \omega \in \Omega : X(\omega)=x_t\big\}, \hso \forall A \in {\cal  B}({\cal X}).\nonumber 
\end{align}
If the cardinality of ${\cal S}_X$ is finite then the RV is finite-valued, and we call it a finite alphabet RV. \\
Given another RV, $Y: (\Omega, {\cal F}) \longmapsto ({\cal Y}, {\cal  B}({\cal Y}))$,   ${\bf P}_{Y|X}(dy| X)(\omega)$ is the conditional distribution of RV $Y$ given RV $X$. We denote the conditional distribution of RV $Y$ given $X=x$ (i.e., fixed) by ${\bf P}_{Y|X}(dy| X=x)  \equiv {\bf P}_{Y|X}(dy|x)$. Such conditional distributions are  equivalently  described   by  stochastic kernels or transition functions ${\bf K}(\cdot|\cdot)$ on $ {\cal  B}({\cal Y}) \times {\cal X}$, mapping ${\cal X}$ into ${\cal M}({\cal Y})$ (space of distributions), i.e., $x \in {\cal X}\longmapsto {\bf K}(\cdot|x)\in{\cal M}({\cal Y})$, and such that for every $A\in{\cal B}({\cal Y})$, the function ${\bf K}(A|\cdot)$ is ${\cal B}({\cal X})$-measurable.

%
%
%
%

\subsection{FTFI Capacity and Convexity of Feedback Capacity}\label{subsection:ftfi:feedback:capacity:variational:equalities}

\noi The channel input and  channel output alphabets are  sequences of measurable spaces $\{({\cal X}_t,{\cal  B}({\cal X }_t)):~t\in\mathbb{N}_0\}$ and  $\{({\cal  Y}_t,{\cal  B}({\cal  Y}_t)):~t\in\mathbb{N}_0\}$, respectively, with their product spaces ${\cal X}^{\mathbb{N}_0}\triangleq {{\times}_{t\in\mathbb{N}_0}}{\cal X}_t$, ${\cal  Y}^{\mathbb{N}_0}\triangleq {\times_{t\in\mathbb{N}_0}}{\cal  Y}_t$.  These spaces are endowed with their respective product topologies, and  ${\cal  B}({\Sigma}^{\mathbb{N}_0})\triangleq \otimes_{t\in\mathbb{N}_0}{\cal  B}({\Sigma }_t)$,  denotes the $\sigma-$algebras on ${\Sigma }^{\mathbb{N}_0}$, where ${\Sigma}_t \in  \big\{{\cal X}_t, {\cal  Y}_t\big\}$,  ${\Sigma}^{\mathbb{N}_0} \in  \big\{{\cal X}^{{\mathbb N}_0}, {\cal  Y}^{{\mathbb N}_0}\big\}$,  and generated by cylinder sets. We denote points in ${\Sigma}_k^m \triangleq \times_{j=k}^m {\Sigma}_j$ by $z_{k}^m \triangleq \{z_k, z_{k+1}, \ldots, z_m\} \in {\Sigma}_k^m$,   $(k, m)\in   {\mathbb N}_0 \times {\mathbb N}_0$. \\ 
Below, we introduce the elements of the extremum problem we address in this paper, and we establish the notation.\\

\noi{\bf Channel Distribution with Memory.}  A sequence of conditional distributions defined by 
\begin{align}
{\cal C}_{0,n}\triangleq\Big\{{\bf P}_{Y_t|Y^{t-1}, X^t}=q_t(dy_t|y^{t-1},x^{t}) :  t=0,1, \ldots, n \Big\}. \label{channel1}
\end{align}
 At each time instant $t$ the conditional distribution of the channel depends on past channel output symbols $y^{t-1} \in {\cal Y}^{t-1}$ and current and past channel input symbols $x^{t} \in {\cal X}^{t}$, for $t=0,1, \ldots, n$.\\

\noi{\bf Channel Input Distribution with Feedback.}  A  sequence of conditional distributions defined by 
\begin{align}
{\cal P}_{0,n}\triangleq\Big\{{\bf P}_{X_t|X^{t-1}, Y^{t-1}}=  p_t(dx_t|x^{t-1},y^{t-1}) :   t=0,1, \ldots, n \Big\}. \label{rancodedF}
\end{align}
At each time instant $t$ the conditional channel input distribution with feedback depends on past channel inputs and output symbols  $\{x^{t-1}, y^{t-1}\} \in {\cal X}^{t-1} \times {\cal Y}^{t-1}$, for $t=0,1, \ldots, n$. \\

\noi{\bf Transmission Cost.} The set of channel input distributions with feedback and transmission cost is defined by 
\begin{align}
{\cal P}_{0,n}(\kappa) \triangleq \Big\{  p_t(dx_t|x^{t-1}, y^{t-1}), t=0,1, \ldots, n: \frac{1}{n+1} {\bf E}^{p} \Big( c_{0,n}(X^n, Y^{n-1}) \Big)\leq  \kappa\Big\}\subset{\cal P}_{0,n},~ \kappa \in [0,\infty) \label{rc1}
\end{align}
where the superscript notation ${\bf E}^{p}\{\cdot\}$ denotes the dependence of the joint distribution on the choice of  conditional distribution $\{p_t(dx_t|x^{t-1}, y^{t-1}) :~t=0,1 \ldots, n\}$. The cost of transmitting channel input symbols $x^n\in {\cal X}^{n}$ over a channel, and receiving channel output symbol $y^n\in{\cal Y}^n$,  is a  measurable function $c_{0,n}:{\cal X}^{n}\times{\cal Y}^{n-1} \longmapsto [0,\infty)$. \\

\noi{\bf  FTFI Capacity and Feedback Capacity.}  Given any channel input distribution from the set ${\cal P}_{0,n}$ and a channel distribution from the set ${\cal C}_{0,n}$, we can  uniquely define the induced joint distribution  ${\bf P}^{p}(dx^n, dy^n)$ on  the  canonical space $\Big( {\cal X}^{n} \times {\cal  Y}^{n}, {\cal  B}({\cal X}^{n})\otimes  {\cal  B}({\cal  Y}^{n})\Big)$, and we can construct a probability space $\Big(\Omega, {\cal F}, {\mathbb P}\Big)$ carrying the sequence of RVs $\{(X_t, Y_t): t=0,1, \ldots, n\}$, as follows.
\begin{align}
 {\mathbb P}\big\{X^n \in d{x}^n, Y^n \in d{y}^n\big\}  \triangleq  &
{\bf P}^p(dx^n, dy^n), \hso n \in {\mathbb N}_0 \nonumber \\
= & \otimes_{t=0}^n \Big( { \bf P}(dy_t|y^{t-1}, x^t) \otimes {\bf P}(dx_t| x^{t-1}, y^{t-1})\Big) \label{CIS_2gde2new} \\
=&\otimes_{t=0}^n \Big(q_t(dy_t|y^{t-1}, x^t)\otimes p_t(dx_t|x^{t-1}, y^{t-1})\Big). \label{CIS_2gg_new} 
\end{align}
From the joint distribution, we can define the ${\cal Y}^{n}-$marginal distribution, and its conditional distribution\footnote{Throughout the paper the superscript notation ${\bf P}^p(\cdot), \nu_{0,n}^{p}(\cdot)$, etc., indicates the dependence of the distributions on the channel input conditional distribution.} as follows.
\begin{align}
{\mathbb  P}\big\{Y^n \in dy^n\big\} \triangleq \; & {\bf P}^{p}(dy^n) =  \int_{{\cal X}^{n}}  {\bf P}^{ p}(dx^n, dy^n) ,   \hso  n \in {\mathbb N}_0,  \label{CIS_3g}\\
\equiv \; &   \nu_{0,n}^{p}(dy^n) \label{MARGINAL} \\
\nu_t^{p}(dy_t|y^{t-1})= \; &  \int_{{\cal X}^{t}} q_t(dy_t|y^{t-1}, x^t)\otimes p_t(dx_t|x^{t-1}, y^{t-1}) \otimes {\bf P}^{p}(dx^{t-1}|y^{t-1}), \hso  t=0,1, \ldots, n . \label{CIS_3a}
\end{align}
The above joint distributions are parametrized by either a fixed $Y^{-1}=y^{-1} \in {\cal Y}^{-1}$ or a fixed distribution ${\bf P}_{Y^{-1}}(dy^{-1})=\mu(dy^{-1})$. \\
Directed information pay-off $I(X^n \rightarrow Y^n)$, is defined as follows. 
\begin{align}
I(X^n\rightarrow Y^n) \triangleq &\sum_{t=0}^n {\bf E}^{{p}} \Big\{  \log \Big( \frac{dq_t(\cdot|Y^{t-1},X^t) }{d\nu_t^{{p}}(\cdot|Y^{t-1})}(Y_t)\Big)\Big\}\label{CIS_6_a}   \\
=& \sum_{t=0}^n \int_{{\cal X}^{t} \times {\cal  Y}^{t}}\log \Big( \frac{ dq_t(\cdot|y^{t-1}, x^t) }{d\nu_t^{{p}}(\cdot|y^{t-1})}(y_t)\Big) {\bf P}^{p}( dx^t, dy^t). \label{CIS_6}
\end{align}

\noi {\it Our objective is the following.} Given a channel  distribution form the set ${\cal C}_{0,n}$, determine necessary and sufficient conditions for any channel input distribution of the set ${\cal P}_{0,n}$ (assuming it exists) to correspond to the maximizing element of the following extremum problem.
\begin{align}
C_{X^n \rightarrow Y^n}^{FB} \triangleq \sup_{{\cal P}_{0,n}} I(X^n\rightarrow Y^n). \label{prob2}
\end{align}
If a transmission cost constraint is imposed, then we replace \eqref{prob2} by 
 \begin{align}
 C_{X^n \rightarrow Y^n}^{FB}(\kappa) \triangleq \sup_{{\cal P}_{0,n}(\kappa)} I(X^n\rightarrow Y^n).  \label{prob2tc}
\end{align}

\noi Since our objective is to derive sufficient conditions in addition to necessary conditions, we invoke the following convexity results from \cite[Theorems III.2, III.3]{charalambous-stavrou2015ieeeit}.\\ 

\begin{lemma}(Convexity of Directed Information){\ \\}
\label{conv-DI}
(a) Any sequence of channel input conditional distributions from the set ${\cal P}_{0,n}$ and channel distributions from the set ${\cal C}_{0,n}$ uniquely define the following two $(n+1)$-fold compound causally conditioned probability distributions.\\
The family of distributions $\overleftarrow{P}(\cdot|y^{n-1})$ on ${\cal X}^n$ parametrized by $y^{n-1}\in{\cal Y}^{n-1}$ defined by
\begin{align}
\overleftarrow{P}_{0,n}(C|{y}^{n-1})&\triangleq\int_{C_0}p_0(dx_0|x^{-1},y^{-1})\ldots\int_{C_n}p_n(dx_n|x^{n-1},y^{n-1}),~C=\times_{t=0}^n{C}_t\in{\cal B}({\cal X}_{0,n})\label{section:2:equation2}
\end{align}
which is formally represented by
\begin{align}
{\overleftarrow {P}}_{0,n}(dx^n|y^{n-1}) &\triangleq \otimes_{t=0}^n {p}_{t}(dx_t|x^{t-1}, y^{y-1})\in{\cal M}({\cal X}^n)\label{cc_CI}
\end{align}
and similarly, the family of distributions $\overrightarrow{Q}(\cdot|x^{n})$ on ${\cal Y}^n$ parametrized by $x^{n}\in{\cal X}^{n}$, formally represented by
\begin{align}
\overrightarrow{Q}_{0,n}(dy^n|x^n) &\triangleq \otimes_{t=0}^n {q}_{t}(dy_t|y^{t-1}, x^{t})\in{\cal M}({\cal Y}^n)\label{cc_CI_1}
\end{align} 
and vice-versa. That is, \eqref{cc_CI}, \eqref{cc_CI_1} uniquely define any sequence of channel input distributions $\{q_t(dx_t|x^{t-1}, y^{t-1}): ~t=0,1, \ldots, n\} \in {\cal P}_{0,n}$ and channel distributions $\{ {q}_{t}(dy_t|y^{t-1}, x^{t})~:t=0,1, \ldots, n\}$, respectively. The joint distribution is equivalently expressed formally as ${\bf P}^p(x^n,y^n)=(\overleftarrow{P}_{0,n}\otimes\overrightarrow{Q}_{0,n})(x^n,y^n)$.\\
\noi(b) Directed information is equivalent to the following expression.
\begin{align}
I(X^n\rightarrow{Y}^n)=\int_{{\cal X}_{0,n} \times {\cal Y}_{0,n}} \log \Big( \frac{d{\overrightarrow Q}_{0,n}(\cdot|x^n)}{d\nu_{0,n}(\cdot)}(y^n)\Big)({\overleftarrow P}_{0,n}\otimes {\overrightarrow Q}_{0,n})(dx^n,dy^n)\equiv{\mathbb{I}}_{X^n\rightarrow{Y^n}}({\overleftarrow P}_{0,n}, {\overrightarrow Q}_{0,n})\label{eqdi5}
\end{align} 
where the notation ${\mathbb{I}}_{X^n\rightarrow{Y^n}}({\overleftarrow P}_{0,n}, {\overrightarrow Q}_{0,n})$ indicates the dependence of $I(X^n\rightarrow{Y}^n)$ on $\{\overleftarrow{P}_{0,n},\overrightarrow{Q}_{0,n}\}\in{\cal M}({\cal X}^n)\times{\cal M}({\cal Y}^n)$.\\
\noi(c) The set of conditional distributions ${\overleftarrow {P}}_{0,n}(\cdot|y^{n-1})\in {\cal M}({\cal X}^n)$ and $\overrightarrow{Q}_{0,n}(\cdot|x^n) \in {\cal M}({\cal Y}^n)$ are convex.\\
\noi(d) The functional ${\mathbb{I}}_{X^n\rightarrow{Y^n}}({\overleftarrow P}_{0,n}, {\overrightarrow Q}_{0,n})$ is concave with respect to ${\overleftarrow {P}}_{0,n}(\cdot|y^{n-1})\in {\cal M}({\cal X}^n)$ for a fixed $\overrightarrow{Q}_{0,n}(\cdot|x^n) \in {\cal M}({\cal Y}^n)$, and convex with respect to $\overrightarrow{Q}_{0,n}(\cdot|x^n) \in {\cal M}({\cal Y}^n)$ for a fixed ${\overleftarrow {P}}_{0,n}(\cdot|y^{n-1})\in {\cal M}({\cal X}^n)$. 
\end{lemma}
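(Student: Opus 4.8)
The plan is to dispatch the four parts in sequence, reducing (c) and (d) to standard convexity facts about marginalization and relative entropy once the causally conditioned factorizations of (a)--(b) are in hand. Throughout I treat the alphabets as standard Borel, which legitimizes the extension and disintegration theorems I invoke; the abstract-alphabet refinements are deferred to the later section of the paper.

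For part (a), in the forward direction I would invoke the Ionescu--Tulcea extension theorem: the sequence of one-step kernels $\{p_t(dx_t|x^{t-1},y^{t-1})\}$ determines a unique probability measure on the finite product ${\cal X}^n$, parametrized measurably by the conditioning string $y^{n-1}$, which is precisely the object defined set-theoretically in \eqref{section:2:equation2} and represented formally in \eqref{cc_CI}; measurability of $y^{n-1}\mapsto \overleftarrow{P}_{0,n}(C|y^{n-1})$ then follows from a monotone-class argument over product cylinders, and the same construction on $\{q_t\}$ produces \eqref{cc_CI_1}. For the converse I recover the kernels by successive disintegration of $\overleftarrow{P}_{0,n}(\cdot|y^{n-1})$: $p_0$ is its ${\cal X}_0$-marginal and $p_t(\cdot|x^{t-1},y^{t-1})$ is the regular conditional distribution of $X_t$ given $X^{t-1}=x^{t-1}$, which exists and is essentially unique on standard Borel spaces, the postulated causal dependence (only on $y^{t-1}$) being exactly what fixes the kernel. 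This yields the claimed bijection, and hence the equivalent factorization of the joint law.

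For part (b) the crux is a telescoping Radon--Nikodym factorization. The output marginal $\nu_{0,n}$ of the joint $\overleftarrow{P}_{0,n}\otimes\overrightarrow{Q}_{0,n}$ factors as $\otimes_{t=0}^n\nu_t^{p}(dy_t|y^{t-1})$ with the one-step predictive kernel $\nu_t^p$ of \eqref{CIS_3a}, seen by marginalizing out $x^n$ stage by stage using causality; since $\overrightarrow{Q}_{0,n}(\cdot|x^n)=\otimes_t q_t(\cdot|y^{t-1},x^t)$, the chain of derivatives gives $\frac{d\overrightarrow{Q}_{0,n}(\cdot|x^n)}{d\nu_{0,n}}(y^n)=\prod_{t=0}^n\frac{dq_t(\cdot|y^{t-1},x^t)}{d\nu_t^p(\cdot|y^{t-1})}(y_t)$. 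Taking logarithms turns the product into the sum $\sum_t\log\frac{dq_t}{d\nu_t^p}(y_t)$, and integrating against $\overleftarrow{P}_{0,n}\otimes\overrightarrow{Q}_{0,n}$ and collapsing each term by the tower property reproduces \eqref{CIS_6_a}, establishing \eqref{eqdi5}. For part (c) I would characterize the admissible families by affine constraints: $\{\overleftarrow{P}_{0,n}(\cdot|y^{n-1})\}$ is causally conditioned iff for each $t$ its $x^t$-marginal agrees across all strings sharing the prefix $y^{t-1}$ (independence of $y_t,\dots,y_{n-1}$); marginalization is linear and these agreements are affine equalities, so the feasible set is the probability simplex intersected with affine subspaces, hence convex, and identically for $\{\overrightarrow{Q}_{0,n}(\cdot|x^n)\}$. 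Equivalently, one checks directly that a convex mixture of two causal families stays causal, since the induced conditional of $X_t$ given $X^{t-1}$ is a marginal-weighted mixture of two kernels each depending only on $y^{t-1}$.

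For part (d) I would first rewrite part (b) as a conditional relative entropy, $\mathbb{I}_{X^n\rightarrow Y^n}(\overleftarrow{P}_{0,n},\overrightarrow{Q}_{0,n})=D\big(\overleftarrow{P}_{0,n}\otimes\overrightarrow{Q}_{0,n}\,\big\|\,\overleftarrow{P}_{0,n}\otimes\nu_{0,n}\big)$, the common $\overleftarrow{P}_{0,n}$-factors cancelling inside the derivative and returning exactly the integrand of \eqref{eqdi5}. Convexity in $\overrightarrow{Q}_{0,n}$ for fixed $\overleftarrow{P}_{0,n}$ is then immediate: both $\overrightarrow{Q}_{0,n}\mapsto\overleftarrow{P}_{0,n}\otimes\overrightarrow{Q}_{0,n}$ and $\overrightarrow{Q}_{0,n}\mapsto\nu_{0,n}$ are affine, so the pair of arguments of $D$ is affine, and joint convexity of relative entropy together with affine precomposition gives the result. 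Concavity in $\overleftarrow{P}_{0,n}$ is the delicate direction, since $\overleftarrow{P}_{0,n}$ now occupies both arguments of $D$ and joint convexity points the wrong way; here I would split $\mathbb{I}$ against a fixed reference $\lambda$ on ${\cal Y}^n$ as $\int(\overleftarrow{P}_{0,n}\otimes\overrightarrow{Q}_{0,n})\log\frac{d\overrightarrow{Q}_{0,n}(\cdot|x^n)}{d\lambda}-D(\nu_{0,n}\|\lambda)$, whereupon for fixed $\overrightarrow{Q}_{0,n}$ the first term is linear in $\overleftarrow{P}_{0,n}$ while $\nu_{0,n}$ is affine in $\overleftarrow{P}_{0,n}$, so $-D(\nu_{0,n}\|\lambda)$ is concave precomposed with an affine map, and the sum is concave. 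I expect this concavity step to be the main obstacle, together with the measure-theoretic bookkeeping needed to make it rigorous in abstract alphabets: securing a common dominating reference measure, integrability of the split terms, and validity of the cancellations and tower-property reductions almost everywhere rather than merely formally.
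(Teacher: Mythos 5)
Your proof is correct, but there is nothing in this paper to compare it against line by line: the lemma is imported verbatim, with proof deferred to \cite[Theorems III.2, III.3]{charalambous-stavrou2015ieeeit}, and your blind argument reconstructs essentially that cited development — Ionescu--Tulcea plus disintegration for the bijection in (a), the telescoping Radon--Nikodym identity $\frac{d{\overrightarrow Q}_{0,n}(\cdot|x^n)}{d\nu_{0,n}}(y^n)=\prod_{t=0}^n\frac{dq_t(\cdot|y^{t-1},x^t)}{d\nu_t^{p}(\cdot|y^{t-1})}(y_t)$ for (b), the affine-constraint (equivalently mixture-stability) characterization of causal conditioning for (c), and the two relative-entropy decompositions $D\big({\overleftarrow P}_{0,n}\otimes{\overrightarrow Q}_{0,n}\,\|\,{\overleftarrow P}_{0,n}\otimes\nu_{0,n}\big)$ and the split against a reference measure $\lambda$ for the convexity and concavity halves of (d). The measure-theoretic caveats you flag (a.e.\ uniqueness of the disintegrated kernels, a common dominating measure and integrability for the split in (d), and a.e.\ validity of the cancellations) are exactly the points the cited reference handles on abstract spaces, and they trivialize in the finite-alphabet setting in which the present paper actually uses the lemma; your observation that the mixture weights in (c) depend only on $(x^{t-1},y^{t-1})$, so that causality survives convex combination, is the one step worth making explicit, and you got it right.
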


\noi In view of the convexity result stated in Lemma~\ref{conv-DI}, any extremum problem of feedback capacity is a convex optimization problem, and the following holds.

\begin{theorem}(Extremum problem of feedback capacity)\\
\label{thm-pr_fb}
Assume the set $ {\cal P}_{0,n}(\kappa)$ is nonempty and  the supremum in  (\ref{prob2tc}) is achieved in the set  $ {\cal P}_{0,n}(\kappa)$.\\
Then \\
\noi{(a) } $C_{X^n \rar Y^n}^{FB}(\kappa)$ is nondecreasing,  concave  function of $\kappa \in [0, \infty]$.\\
\noi {(b)} An alternative characterization of  $C_{X^n \rightarrow Y^n}^{FB}(\kappa)$ is given by
 \begin{align}
C_{X^n \rightarrow Y^n}^{FB}(\kappa)=\sup_{{\overleftarrow P}_{0,n}(dx^n|y^{n-1}):  {\frac{1}{n{+}1}} {\bf  E} \big\{ c_{0,n}(X^n,Y^{n{-}1}) \big\}  =  \kappa } {\mathbb{I}}_{X^n\rightarrow{Y^n}}({\overleftarrow P}_{0,n}, {\overrightarrow Q}_{0,n}),  \hst \mbox{for} \hso  \kappa \leq \kappa_{max}, 
 \end{align}
where $\kappa_{max}$  is the smallest number belonging to $[0,\infty]$ such that $C_{X^n \rar Y^n}^{FB}(\kappa)$ is constant in $[\kappa_{max}, \infty]$, and ${\bf E}\big\{\cdot \big\}$ denotes expectation with respect to $({\overleftarrow {P}}_{0,n}\otimes\overrightarrow{Q}_{0,n})(dx^n,dy^n)$.  
\end{theorem}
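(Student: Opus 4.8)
The plan is to establish part~(a) first and then obtain part~(b) as a consequence of monotonicity, concavity, and the definition of $\kappa_{max}$, using the standing hypothesis that the supremum in \eqref{prob2tc} is attained. Throughout I abbreviate $C(\kappa)\triangleq C_{X^n\rar Y^n}^{FB}(\kappa)$. The monotonicity claim in~(a) is immediate from nestedness of the feasible sets: if $\kappa_1\le\kappa_2$ then ${\cal P}_{0,n}(\kappa_1)\subseteq{\cal P}_{0,n}(\kappa_2)$, since the constraint $\frac{1}{n+1}{\bf E}^p\{c_{0,n}(X^n,Y^{n-1})\}\le\kappa$ is relaxed as $\kappa$ grows, and the supremum of a fixed functional over a larger set can only increase, giving $C(\kappa_1)\le C(\kappa_2)$.

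For concavity I would pass to the equivalent formulation of Lemma~\ref{conv-DI}, in which the optimization variable is the causally conditioned distribution $\overleftarrow{P}_{0,n}(\cdot|y^{n-1})$ while the channel $\overrightarrow{Q}_{0,n}(\cdot|x^n)$ is held fixed. Fix $\kappa_1,\kappa_2$ and $\lambda\in[0,1]$, and let $\overleftarrow{P}^1_{0,n},\overleftarrow{P}^2_{0,n}$ be near-optimal for $C(\kappa_1),C(\kappa_2)$. Two structural facts drive the argument: (i) the set of causally conditioned distributions is convex (Lemma~\ref{conv-DI}(c)), so the mixture $\overleftarrow{P}^\lambda_{0,n}\triangleq\lambda\overleftarrow{P}^1_{0,n}+(1-\lambda)\overleftarrow{P}^2_{0,n}$ is again an admissible input; and (ii) with $\overrightarrow{Q}_{0,n}$ fixed, the induced joint $(\overleftarrow{P}_{0,n}\otimes\overrightarrow{Q}_{0,n})$ depends affinely on $\overleftarrow{P}_{0,n}$, so the expected cost $\frac{1}{n+1}{\bf E}\{c_{0,n}\}$ is affine in $\overleftarrow{P}_{0,n}$. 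Fact~(ii) shows $\overleftarrow{P}^\lambda_{0,n}$ meets the cost constraint at level $\lambda\kappa_1+(1-\lambda)\kappa_2$, hence is feasible for $C(\lambda\kappa_1+(1-\lambda)\kappa_2)$; the concavity of ${\mathbb{I}}_{X^n\rar Y^n}(\cdot,\overrightarrow{Q}_{0,n})$ in its first argument (Lemma~\ref{conv-DI}(d)) then yields
\begin{align*}
C(\lambda\kappa_1+(1-\lambda)\kappa_2)\ge{\mathbb{I}}_{X^n\rar Y^n}(\overleftarrow{P}^\lambda_{0,n},\overrightarrow{Q}_{0,n})\ge\lambda\,{\mathbb{I}}_{X^n\rar Y^n}(\overleftarrow{P}^1_{0,n},\overrightarrow{Q}_{0,n})+(1-\lambda)\,{\mathbb{I}}_{X^n\rar Y^n}(\overleftarrow{P}^2_{0,n},\overrightarrow{Q}_{0,n}),
\end{align*}
and letting the suboptimality gaps of $\overleftarrow{P}^1_{0,n},\overleftarrow{P}^2_{0,n}$ tend to zero gives $C(\lambda\kappa_1+(1-\lambda)\kappa_2)\ge\lambda C(\kappa_1)+(1-\lambda)C(\kappa_2)$, i.e. concavity.

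For part~(b), write $\tilde{C}(\kappa)$ for the supremum over the equality-constrained set. Since $\{\frac{1}{n+1}{\bf E}\{c_{0,n}\}=\kappa\}\subseteq{\cal P}_{0,n}(\kappa)$, we have $\tilde{C}(\kappa)\le C(\kappa)$ for free, so the content is the reverse inequality for $\kappa\le\kappa_{max}$. Here I invoke the attainment hypothesis: let $\overleftarrow{P}^*_{0,n}$ achieve $C(\kappa)$, with cost $\kappa'\triangleq\frac{1}{n+1}{\bf E}\{c_{0,n}\}\le\kappa$. If $\kappa'=\kappa$, then $\overleftarrow{P}^*_{0,n}$ already lies in the equality set and $C(\kappa)=\tilde{C}(\kappa)$. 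If instead $\kappa'<\kappa$, then $\overleftarrow{P}^*_{0,n}$ is feasible at level $\kappa'$, so $C(\kappa')\ge{\mathbb{I}}_{X^n\rar Y^n}(\overleftarrow{P}^*_{0,n},\overrightarrow{Q}_{0,n})=C(\kappa)$, while monotonicity gives $C(\kappa')\le C(\kappa)$; hence $C$ is constant on $[\kappa',\kappa]$. A one-line consequence of concavity and monotonicity is that a function constant on $[\kappa',\kappa]$ must stay constant on $[\kappa',\infty]$: for $\mu>\kappa$, concavity gives $C(\kappa)\ge\frac{\mu-\kappa}{\mu-\kappa'}C(\kappa')+\frac{\kappa-\kappa'}{\mu-\kappa'}C(\mu)$, which with $C(\kappa)=C(\kappa')$ forces $C(\mu)\le C(\kappa)$, and monotonicity forces $C(\mu)\ge C(\kappa)$. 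By minimality of $\kappa_{max}$ this yields $\kappa_{max}\le\kappa'<\kappa\le\kappa_{max}$, a contradiction. Thus $\kappa'=\kappa$ whenever $\kappa\le\kappa_{max}$, establishing $C(\kappa)=\tilde{C}(\kappa)$ on that range.

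The step I expect to require the most care is fact~(ii) in part~(a): verifying that, with $\overrightarrow{Q}_{0,n}$ fixed, the induced joint distribution and hence the transmission-cost functional are genuinely affine in the causally conditioned input $\overleftarrow{P}_{0,n}$, so that the mixture of two policies lands \emph{exactly} at the corresponding convex combination of cost levels rather than merely below it. The secondary delicate point is the boundary bookkeeping at $\kappa=\kappa_{max}$ in part~(b), where the concavity-plus-monotonicity argument is precisely what rules out a slack (non-binding) constraint and thereby forces the optimum onto the equality surface.
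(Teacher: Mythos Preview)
Your argument is correct and is essentially the standard route one would take; the paper itself does not supply a proof of this theorem but simply states it as a consequence of the convexity properties recorded in Lemma~\ref{conv-DI}. Your write-up therefore fills in exactly what the paper leaves implicit: monotonicity from nestedness of the feasible sets, concavity by combining Lemma~\ref{conv-DI}(c),(d) with the affinity of the cost functional in $\overleftarrow{P}_{0,n}$ (for fixed $\overrightarrow{Q}_{0,n}$), and the equality-constraint characterization by showing that a slack optimizer would produce a flat segment extending to infinity and hence violate the minimality of $\kappa_{max}$. The two points you flag as delicate are the right ones, and both are handled correctly: the joint law $(\overleftarrow{P}_{0,n}\otimes\overrightarrow{Q}_{0,n})$ is genuinely affine in $\overleftarrow{P}_{0,n}$ because it factors as $\overleftarrow{P}_{0,n}(dx^n|y^{n-1})\overrightarrow{Q}_{0,n}(dy^n|x^n)$, and the boundary case $\kappa=\kappa_{max}$ is covered since a strict slack $\kappa'<\kappa_{max}$ still contradicts minimality.
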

\noi Clearly, $\kappa_{max}$ is the value of $\kappa\in[0,\infty]$ for which $C_{X^n \rightarrow Y^n}^{FB}(\kappa)=C_{X^n \rightarrow Y^n}^{FB}$, i.e., it corresponds to the maximization of $I(X^n \rightarrow Y^n)$ over ${\cal P}_{0,n}$ (without transmission cost constraints).

\subsection{Variational Equality}\label{section:variational:equality}

\par Next, we recall a sequential variational equality of directed information, found in \cite[Section IV]{charalambous-stavrou2015ieeeit}, which is applied to derive necessary and sufficient conditions for extremum problems \eqref{prob2}, \eqref{prob2tc}.

\begin{theorem}\cite[Section IV]{charalambous-stavrou2015ieeeit}(Sequential variational equality of directed information)\label{theorem:variational_equality}{\ \\}
Given a channel input distribution $\big\{p_t(dx_t|x^{t-1},y^{t-1}): t=0, \ldots, n\big\}\in{\cal P}_{0,n}$ and channel distribution  $\big\{q_t(dy_t|y^{t-1},x^t) : t=0, \ldots, n\big\}\in{\cal C}_{0,n}$, let ${\bf P}^p(dx^n,dy^n) \in {\cal M}({\cal X}^{n} \times {\cal  Y}^{n})$, and ${\nu }_{0,n}^p(dy^n) \in {\cal M}({\cal  Y}^{n})$ denote their joint and marginal distributions defined by (\ref{CIS_2gde2new})-(\ref{CIS_3a}).\\
Let ${\cal S}_{0,n}\triangleq\big\{s_t(dy_t|y^{t-1},x^{t-1})\in{\cal M}({\cal Y}_t):~t\in\mathbb{N}_0^n\big\}$ and ${\cal R}_{0,n}\triangleq\big\{r_t(dx_t|x^{t-1},y^t)\in{\cal M}({\cal X}_t):~t\in\mathbb{N}_0^n\big\}$ be arbitrary distributions, and formally define the corresponding joint distribution by
\begin{align*}
\otimes_{t=0}^n\big(s_t(dy_t|y^{t-1},x^{t-1})\otimes {r}_t(dx_t|x^{t-1},y^t)\big)\in{\cal M}({{\cal X}^{n} }\times {\cal Y}^{n}).
\end{align*}
Then the following variational equality holds. 
\begin{align}
I(X^n\rightarrow{Y}^n) =&\sup_{{\cal S}_{0,n}\otimes{\cal R}_{0,n}} \sum^n_{t=0}\int_{{{\cal X}^{t} }\times {\cal Y}^{t}}\log\Bigg(\frac{d{r}_t(\cdot|x^{t-1},y^{t})}{dp_t(\cdot|x^{t-1},y^{t-1})}(x_t)\frac{ds_t(\cdot|y^{t-1},x^{t-1})}{d\nu_{t}^p(\cdot|y^{t-1})}(y_t)\Bigg){\bf P}^p(dx^t, dy^t)   \label{equation15a}
\end{align}
and the supremum in (\ref{equation15a}) is achieved when the following identity holds. 
\begin{align}
\frac{dp_t(\cdot|x^{t-1},y^{t-1})}{d{r}_t(\cdot|x^{t-1},y^t)}(x_t).\frac{d{q}_t(\cdot|y^{t-1},x^{t})}{ds_t(\cdot|y^{t-1},x^{t-1})}(y_t)=1-a.a.~(x^t,y^t),~t\in\mathbb{N}_0^n.\label{equation102}
\end{align}
Equivalently, the supremum in (\ref{equation15a}) is achieved at 
\begin{align}
\otimes_{t=0}^n\Big(s_t(dy_t|y^{t-1},x^{t-1})\otimes {r}_t(dx_t|x^{t-1},y^t)\Big)= {\bf P}^p(dx^n, dy^n).\nonumber
\end{align}
\end{theorem}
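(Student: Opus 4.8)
\noindent The plan is to show that the summation appearing under the supremum in \eqref{equation15a}, call it $J(\mathcal{S}_{0,n},\mathcal{R}_{0,n})$, is a lower bound for $I(X^n\rightarrow Y^n)$ for every admissible pair $\{\mathcal{S}_{0,n},\mathcal{R}_{0,n}\}$, and that the bound is attained at an explicitly identifiable pair $\{s_t^*,r_t^*\}$. First I would use the definition \eqref{CIS_6} of directed information and split the logarithm in $J$, so as to form the difference $I(X^n\rightarrow Y^n)-J(\mathcal{S}_{0,n},\mathcal{R}_{0,n})$. Since the marginal $\nu_t^p$ contributes identically to the matching terms of $I$ and of the $s_t$-part of $J$, and both are integrated against the same measure ${\bf P}^p(dx^t,dy^t)$, it cancels term-by-term, and the difference collapses to
\begin{align*}
I(X^n\rightarrow Y^n)-J(\mathcal{S}_{0,n},\mathcal{R}_{0,n})=\sum_{t=0}^n\int_{{\cal X}^t\times{\cal Y}^t}\log\Big(\frac{dq_t(\cdot|y^{t-1},x^t)}{ds_t(\cdot|y^{t-1},x^{t-1})}(y_t)\frac{dp_t(\cdot|x^{t-1},y^{t-1})}{dr_t(\cdot|x^{t-1},y^t)}(x_t)\Big){\bf P}^p(dx^t,dy^t).
\end{align*}

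\noindent Next I would disintegrate ${\bf P}^p(dx^t,dy^t)$ using the chain rule \eqref{CIS_2gg_new}, writing it as ${\bf P}^p(dx^{t-1},dy^{t-1})\otimes p_t(dx_t|x^{t-1},y^{t-1})\otimes q_t(dy_t|y^{t-1},x^t)$, and recognize the resulting inner $(x_t,y_t)$-integral of each summand as a conditional relative entropy between the true one-step joint kernel $p_t\otimes q_t$ and the candidate kernel $s_t\otimes r_t$, both viewed as laws of $(X_t,Y_t)$ given the past $(X^{t-1},Y^{t-1})=(x^{t-1},y^{t-1})$; here one uses that $s_t(dy_t|y^{t-1},x^{t-1})\otimes r_t(dx_t|x^{t-1},y^t)$ is a legitimate joint kernel precisely because $r_t$ is permitted to depend on $y_t$. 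By Gibbs' inequality each conditional divergence is nonnegative, hence every summand is nonnegative and $I(X^n\rightarrow Y^n)\ge J(\mathcal{S}_{0,n},\mathcal{R}_{0,n})$ for all admissible $\{\mathcal{S}_{0,n},\mathcal{R}_{0,n}\}$.

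\noindent To prove tightness I would exhibit the maximizer as the reverse-order conditional distributions of ${\bf P}^p$, namely $s_t^*(dy_t|y^{t-1},x^{t-1})={\bf P}^p(dy_t|y^{t-1},x^{t-1})$ (obtained by integrating out $X_t$) and $r_t^*(dx_t|x^{t-1},y^t)={\bf P}^p(dx_t|x^{t-1},y^t)$. For this choice $s_t^*\otimes r_t^*$ and $p_t\otimes q_t$ are two disintegrations of one and the same one-step joint law of $(X_t,Y_t)$ given the past, hence agree almost surely; every conditional divergence vanishes, the displayed difference is zero, and the supremum is attained. Dividing the almost-sure identity $p_t\otimes q_t=s_t^*\otimes r_t^*$ then yields $\frac{dp_t(\cdot|x^{t-1},y^{t-1})}{dr_t^*(\cdot|x^{t-1},y^t)}(x_t)\,\frac{dq_t(\cdot|y^{t-1},x^t)}{ds_t^*(\cdot|y^{t-1},x^{t-1})}(y_t)=1$, which is exactly \eqref{equation102}; equivalently, the joint distribution assembled from the maximizers coincides with ${\bf P}^p(dx^n,dy^n)$.

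\noindent I expect the main obstacle to be measure-theoretic rather than conceptual: on the abstract alphabets adopted in the preliminaries I must ensure the Radon--Nikodym derivatives in the integrand are well defined, i.e. that the absolute-continuity relations $q_t\ll s_t$ and $p_t\ll r_t$ hold (failing which the corresponding summand is $+\infty$ and the inequality is vacuous), justify the disintegration and the Fubini-type peeling-off of the one-step kernels, and invoke existence and essential uniqueness of the regular conditional distributions $s_t^*,r_t^*$. Granting the standard existence of regular conditional probabilities in this setting, these steps are routine, and the equality condition in Gibbs' inequality simultaneously delivers the attainment of the supremum and the pointwise optimality identity \eqref{equation102}.
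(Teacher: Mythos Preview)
Your argument is correct and is the standard route to this kind of variational identity: write $I(X^n\rightarrow Y^n)-J(\mathcal{S}_{0,n},\mathcal{R}_{0,n})$, cancel the $\nu_t^p$ factors, and recognize each summand as the conditional relative entropy $D\big(p_t\otimes q_t\,\big\|\,s_t\otimes r_t\,\big|\,X^{t-1},Y^{t-1}\big)$, which is nonnegative by Gibbs and vanishes precisely when the two one-step disintegrations of ${\bf P}^p$ agree, i.e.\ when \eqref{equation102} holds.

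Note, however, that the present paper does not supply its own proof of this theorem; it is quoted from \cite[Section~IV]{charalambous-stavrou2015ieeeit} and used as a tool. So there is no in-paper derivation to compare against. Your proof is exactly the argument one expects in the cited source (and is essentially the directed-information analogue of Topsoe's identity for mutual information), so you may regard it as matching the intended approach. Your caveats about absolute continuity and the existence of regular conditional probabilities are the right technical checkpoints; under the standing assumption that the alphabets are complete separable metric spaces these are routine.
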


\noi To avoid excessive technical issues, we derive the main results of this paper by restricting our attention to finite alphabet spaces $\{({\cal X}_t,{\cal Y}_t):~t=0,1,\ldots\}$. This means that we replace distributions by probability mass functions, and integrals by sums, i.e., $q_t(dy_t|y^{t-1},x^t)\longmapsto{q}_t(y_t|y^{t-1},x^t),~ p_t(dx_t|x^{t-1},y^{t-1})\longmapsto{p}_t(x_t|x^{t-1},y^{t-1})$. However, in Section~\ref{generalizations:abstract:alphabets}, we give sufficient conditions for the results derived for finite alphabet spaces to extend to abstract alphabet spaces (i.e., countable and continuous).

%
%
%
%

\section{Necessary and Sufficient Conditions for Channels of Class $A$ with Transmission Cost of Class $A$}\label{section:generalizations:lmco}

\par Consider the finite alphabet version of channel distributions of class $A$ given by \eqref{introduction:class:channel:equation1}, and a transmission cost function of class $A$ given by \eqref{TC_1}. By \cite{kourtellaris-charalambous2015aieeeit}, the characterization of FTFI capacity with average transmission cost constraint is given by 
\begin{align}
{C}_{X^n \rightarrow Y^n}^{FB,A.J}(\kappa) 
= \sup_{{\cal P}_{0,n}^{A.J}(\kappa)} \sum_{t=0}^n {\bf E}^{ \pi}\left\{
\log\Big(\frac{q_t(Y_t|Y_{t-M}^{t-1},X_t)}{\nu_t^{{\pi}}(Y_t|Y_{t-J}^{t-1})}\Big)
\right\},~J=\max\{M,N\} \label{characterization:ftfi:capacity:cost:lmco1}
\end{align}
where
\begin{align}
{\cal P}_{0,n}^{A.J}(\kappa)\triangleq\Big\{\pi_t(x_t|y_{t-J}^{t-1}), ~t=0, 1, \ldots, n: \frac{1}{n+1} {\bf E}^{\pi} \Big( c^{A.N}_{0,n}(X^n, Y^{n-1}) \Big)\leq  \kappa\Big\},~ \kappa \in [0,\infty)\label{characterization:ftfi:fmco:transmission:cost}
\end{align}
and the joint and transition probabilities are given by
\begin{align}
{\bf P}^{\pi}(y^t, x^t) =&\prod_{i=0}^tq_i(y_i|y_{i-M}^{i-1}, x_i)\pi_i(x_i|y_{i-J}^{i-1}), \label{section:lmco:equation1} \\
\nu_t^{\pi}(y_t|y_{t-J}^{t-1}) =&\sum_{x_t\in{\cal X}_t} q_t(y_t|y_{t-M}^{t-1}, x_t)\pi_t(x_t|y_{t-J}^{t-1}),~t\in\mathbb{N}_0^{n}.\label{section:umco:equation2}
\end{align}
 
\noi In this section, we utilize the characterization of FTFI given by \eqref{characterization:ftfi:capacity:cost:lmco1}, to derive the {\it sequential necessary and sufficient conditions} for any ${\cal P}^{A.J}_{0,n}(\kappa)$ to achieve ${C}_{X^n \rightarrow Y^n}^{FB,A.J}(\kappa)$.

\noi Since we have assumed all spaces $\{({\cal X}_t, {\cal Y}_t):~t\in\mathbb{N}_0^n\}$ have finite cardinality, in the subsequent analysis we use the preliminary results of Section~\ref{baa:sequential:section:problem:formulation}, with distributions replaced by probability mass functions \big(as defined in \eqref{characterization:ftfi:capacity:cost:lmco1}-\eqref{section:umco:equation2}\big).

%
%
%
%

\subsection{Sequential Necessary and Sufficient Conditions}\label{subsubsection:ftfi:lmco:cost}
 
\noi For any $\{\pi_t(x_t|y_{t-J}^{t-1}):~t\in\mathbb{N}_0^{n}\}$, let $C^\pi_t: {\cal Y}_{t-J}^{t-1}\longmapsto[0,\infty)$ represent the expected total pay-off corresponding to (\ref{characterization:ftfi:capacity:cost:lmco1}), without the maximization, on the future time horizon $\{t,t+1,\ldots,n\}$, given $Y^{t-1}_{t-J}=y^{t-1}_{t-J}$ at time $t-1$, defined by
\begin{align}
C^{\pi}_t(y_{t-J}^{t-1})={\bf E}^{\pi}\bigg\{\sum_{i=t}^n\log\Big(\frac{q_i(Y_i|y_{i-M}^{i-1},X_i)}{{\nu}^{\pi}_{i}(Y_i|y_{i-J}^{i-1})}\Big)\Big{|}Y_{t-J}^{t-1}=y_{t-J}^{t-1}\bigg\},~t\in\mathbb{N}_0^n,~\forall{y^{t-1}_{t-J}\in{\cal Y}^{t-1}_{t-J}}.\label{payoff:without:maximization:ftfi:capacity:lmco:equation1}
\end{align}
By invoking Theorem~\ref{theorem:variational_equality}, we can express \eqref{payoff:without:maximization:ftfi:capacity:lmco:equation1} as a variational problem as follows.
\begin{corollary}\label{corollary:application:variational:equalities:lmco}  {\ \\}
Consider the cost-to-go $C^\pi_t(y^{t-1}_{t-J})$, $t\in\mathbb{N}_0^{n}$, $y^{t-1}_{t-J}\in{\cal Y}^{t-1}_{t-J}$, defined by \eqref{payoff:without:maximization:ftfi:capacity:lmco:equation1}.\\
(a) The cost-to-go $C^\pi_t(y^{t-1}_{t-J})$, is the solution of the extremum problem
\begin{align}
C^{\pi}_t(y_{t-J}^{t-1})=&\sup_{\big\{r_i(x_i|y_{i-M}^{i-1},y_i):~i=t,t+1,\ldots,n\big\}}{\bf E}^{\pi}\bigg\{\sum_{i=t}^n\log\Big(\frac{r_i(X_i|y_{i-M}^{i-1},Y_i)}{{\pi}_{i}(X_i|y_{i-J}^{i-1})}\Big)\Big{|}Y_{t-J}^{t-1}=y_{t-J}^{t-1}\bigg\},~t\in\mathbb{N}_0^n\label{applications:variational:equalities:lmco:equation1}
\end{align}
and moreover, the supremum is achieved at
\begin{align}
r^{\pi}_t(x_t|y_{t-M}^{t-1},y_t)=\Big(\frac{q_t(y_t|y_{t-M}^{t-1},x_t)}{\nu^{\pi}_{t}(y_t|y_{t-J}^{t-1})}\Big){\pi}_{t}(x_t|y_{t-J}^{t-1}),~t\in\mathbb{N}_0^n.\label{applications:variational:equalities:lmco:equation2}
\end{align}
(b) The cost-to-go $C^\pi_t(y^{t-1}_{t-J})$, satisfies the following dynamic programming recursions\footnote{For the rest of the paper we use the notation $\sum_{x_t}(\cdot)\equiv\sum_{x_t\in{\cal X}_t}(\cdot)$}. 
\begin{align}
C^{\pi}_n(y_{n-J}^{n-1})&=\sup_{r_n(x_n|y_{n-M}^{n-1},y_n)}\sum_{x_n,y_n}\log\Big(\frac{r_n(x_n|y_{n-M}^{n-1},y_n)}{{\pi}_{n}(x_n|y_{n-J}^{n-1})}\Big)q_n(y_n|y_{n-J}^{n-1},x_n){\pi}_n(x_n|y_{n-J}^{n-1}),~~\forall{y^{n-1}_{n-J}\in{\cal Y}^{n-1}_{n-J}},\label{baa:sequential:applications:lmco:equation10aa}\\
C^{\pi}_t(y_{t-J}^{t-1})&=\sup_{r_n(x_t|y_{t-M}^{t-1},y_t)}\sum_{x_t,y_t}\Big(\log\Big(\frac{r_t(x_t|y_{t-M}^{t-1},y_t)}{{\pi}_{t}(x_t|y_{t-J}^{t-1})}\Big)\nonumber\\
&+C^{\pi}_{t+1}(y_{t+1-J}^t)\Big)q_t(y_t|y_{t-M}^{t-1},x_t){\pi}_t(x_t|y_{t-J}^{t-1}),~t\in\mathbb{N}^{n-1}_{0},~\forall{y^{t-1}_{t-J}\in{\cal Y}^{t-1}_{t-J}}\label{baa:sequential:applications:lmco:equation11aa}
\end{align}
and moreover, the supremum in \eqref{baa:sequential:applications:lmco:equation10aa}, \eqref{baa:sequential:applications:lmco:equation11aa} is achieved at
\eqref{applications:variational:equalities:lmco:equation2}.
\end{corollary}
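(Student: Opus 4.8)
\emph{Proof proposal.} The plan is to treat part (a) as a direct specialization of the sequential variational equality of Theorem~\ref{theorem:variational_equality} to the Class $A$ structure, and part (b) as a combination of (a) with the Markov decomposition of the pay-off induced by the joint law \eqref{section:lmco:equation1}.

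For part (a), the starting observation is a pointwise Bayes identity: defining $r^{\pi}_t$ by \eqref{applications:variational:equalities:lmco:equation2} and recalling that $\nu^{\pi}_t$ in \eqref{section:umco:equation2} is exactly the normalizing $Y_t$-marginal of $q_t(y_t|y^{t-1}_{t-M},x_t)\pi_t(x_t|y^{t-1}_{t-J})$, one has
\begin{align*}
\log\Big(\frac{q_i(y_i|y_{i-M}^{i-1},x_i)}{\nu^{\pi}_i(y_i|y_{i-J}^{i-1})}\Big)=\log\Big(\frac{r^{\pi}_i(x_i|y_{i-M}^{i-1},y_i)}{\pi_i(x_i|y_{i-J}^{i-1})}\Big)
\end{align*}
so the variational objective in \eqref{applications:variational:equalities:lmco:equation1} evaluated at $\{r_i=r^{\pi}_i\}$ reproduces the definition \eqref{payoff:without:maximization:ftfi:capacity:lmco:equation1} identically. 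It then remains to show $\{r^{\pi}_i\}$ is the maximizer. I would argue summand by summand: each term of the sum in \eqref{applications:variational:equalities:lmco:equation1} depends on a single $r_i$, so the joint supremum decouples, and for a fixed index the difference between the objective at an arbitrary admissible $r_i$ and at $r^{\pi}_i$ equals $-{\bf E}^{\pi}\{D\big(r^{\pi}_i(\cdot|y^{i-1}_{i-M},Y_i)\,\|\,r_i(\cdot|y^{i-1}_{i-M},Y_i)\big)\,|\,Y^{t-1}_{t-J}=y^{t-1}_{t-J}\}\le 0$ by nonnegativity of relative entropy. This is precisely the reverse-channel half of Theorem~\ref{theorem:variational_equality} specialized to the present conditional distributions, the forward ($s$-) optimization being already absorbed into $\nu^{\pi}_t$.

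For part (b), I would derive the dynamic programming recursions from the definition \eqref{payoff:without:maximization:ftfi:capacity:lmco:equation1}. Separating the $i=t$ summand from the tail and using the tower property, the tail $\sum_{i=t+1}^n\log(q_i/\nu^{\pi}_i)$ has conditional expectation ${\bf E}^{\pi}\{C^{\pi}_{t+1}(Y^t_{t+1-J})\,|\,Y^{t-1}_{t-J}\}$, giving $C^{\pi}_t(y^{t-1}_{t-J})={\bf E}^{\pi}\{\log(q_t/\nu^{\pi}_t)+C^{\pi}_{t+1}(Y^t_{t+1-J})\,|\,Y^{t-1}_{t-J}\}$; writing this conditional expectation explicitly as $\sum_{x_t,y_t}(\cdots)q_t(y_t|y^{t-1}_{t-M},x_t)\pi_t(x_t|y^{t-1}_{t-J})$ yields the recursion in the form \eqref{baa:sequential:applications:lmco:equation11aa}, with \eqref{baa:sequential:applications:lmco:equation10aa} the terminal case lacking a tail term. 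Finally I apply the single-index variational substitution of part (a) to the $\log(q_t/\nu^{\pi}_t)$ term only; since $C^{\pi}_{t+1}(Y^t_{t+1-J})$ does not depend on $r_t$, the supremum over $r_t$ pulls outside and is attained at $r^{\pi}_t$ of \eqref{applications:variational:equalities:lmco:equation2}.

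The main obstacle I anticipate is the Markov/sufficient-statistic step in part (b): justifying that the conditional expectation of the tail pay-off given the entire past $(X^t,Y^t)$ collapses to a function of the finite output window $Y^t_{t+1-J}$ alone, so that $C^{\pi}_{t+1}$ is well defined as a function of $Y^t_{t+1-J}$ and the tower property applies. This rests on the Class $A$ joint law \eqref{section:lmco:equation1} together with $J=\max\{M,N\}\ge M$, which guarantees that for $i\ge t+1$ both the channel kernels $q_i(\cdot|y^{i-1}_{i-M},\cdot)$ and the input kernels $\pi_i(\cdot|y^{i-1}_{i-J})$ are measurable with respect to $Y^t_{t+1-J}$ and the freshly generated outputs; everything else reduces to the nonnegativity of relative entropy and routine bookkeeping of conditional expectations.
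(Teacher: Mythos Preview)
Your proposal is correct and follows essentially the same route as the paper. The paper's own proof is a two-line citation---part (a) is referred back to the variational equality of Theorem~\ref{theorem:variational_equality} (``repeating the derivation if necessary'') and part (b) to standard dynamic programming combined with (a)---and what you have written is exactly the unpacking of those two references: the Bayes identity plus relative-entropy nonnegativity for the $r$-maximizer, and the tower property together with the Class~$A$ Markov structure \eqref{section:lmco:equation1} for the recursion.
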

\begin{proof}
(a) This follows from \cite[Section IV.1]{charalambous-stavrou2015ieeeit} by repeating the derivation if necessary. (b) This follows from dynamic programming \cite{bertsekas-shreve2007,vanschuppen2014} and (a).
\end{proof}

\noi Corollary~\ref{corollary:application:variational:equalities:lmco} illustrates that the variational equality of Theorem~\ref{theorem:variational_equality}, as expected, also holds for a running pay-off over an interval $\{t,t+1,\ldots,n\}$ conditioned on $Y^{t-1}_{t-J}=y^{t-1}_{t-J}$ at time ${t-1}$. Moreover, it is obvious that the functional $C^{\pi}_t(y_{t-J}^{t-1})\equiv\mathbb{C}_t^{\pi}(r_t,r_{t+1},\ldots,r_n;y_{t-J}^{t-1})$ over which the supremum is taken in \eqref{applications:variational:equalities:lmco:equation1}, defined by
\begin{align*}
\mathbb{C}_t^{\pi}(r_t,r_{t+1},\ldots,r_n;y_{t-J}^{t-1})\triangleq{\bf E}^{\pi}\bigg\{\sum_{i=t}^n\log\Big(\frac{r_i(X_i|y_{i-M}^{i-1},Y_i)}{{\pi}_{i}(X_i|y_{i-J}^{i-1})}\Big)\Big{|}Y_{t-J}^{t-1}=y_{t-J}^{t-1}\bigg\},~t\in\mathbb{N}_0^n
\end{align*}
is concave in $\{r_t(x_t|y^{t-1}_{t-M}),\ldots,r_n(x_n|y^{n-1}_{n-M})\}\in{\cal M}({\cal X}_t)\times\ldots\times{\cal M}({\cal X}_n)$.\\

\noi Next, we introduce the dynamic programming recursions, when \eqref{payoff:without:maximization:ftfi:capacity:lmco:equation1} is maximized over channel input distributions  from the set ${\cal P}^{A.J}_{0,n}(\kappa)$.\\
\noi Throughout this section, we assume existence of an interior point of the constraint set ${\cal P}^{A.J}_{0,n}(\kappa)$ and existence of an optimal channel input distribution which maximizes ${C}_{X^n \rightarrow Y^n}^{FB,A.J}(\kappa)$. Hence, in view of the convexity of optimization problem \eqref{characterization:ftfi:capacity:cost:lmco1}, we can apply Lagrange Duality Theorem (see \cite{dluenberger1969}) to convert the problem into an unconstrained optimization problem over the space of probability distributions $\{\pi(x_t|y^{t-1}_{t-J})\in{\cal M}({\cal X}_n):~t\in\mathbb{N}_0^n\}$.\\

\noi Let $C_t: {\cal Y}^{t-1}_{t-J}\longmapsto[0,\infty)$ represent the maximum expected total pay-off in (\ref{characterization:ftfi:capacity:cost:lmco1}) on the future time horizon $\{t,t+1,\ldots,n\}$, given $Y^{t-1}_{t-J}=y^{t-1}_{t-J}$ at time $t-1$, defined by 
\begin{align}
&C_t(y^{t-1}_{t-J})=\sup_{\big\{\pi_i(x_i|y^{i-1}_{i-J}):~i=t,t+1,\ldots,n\big\}}{\bf E}^{\pi}\bigg\{\sum_{i=t}^n\log\Big(\frac{q_i(Y_i|y^{i-1}_{i-M},X_i)}{{\nu}^{\pi}_{i}(Y_i|y^{i-1}_{i-J})}\Big)\nonumber\\
&\qquad-s\Big(\sum_{i=t}^n\gamma_i(x_i,y^{i-1}_{i-N})-(n+1)\kappa\Big)\Big{|}Y^{t-1}_{t-J}=y^{t-1}_{t-J}\bigg\}\label{algorithms:generalizations:lmco:equation17}\\
&\stackrel{(*)}\equiv\sup_{\big\{\pi_i(x_i|y^{i-1}_{i-J}):~i=t,t+1,\ldots,n\big\}}\bigg\{C_t^{\pi}(y^{t-1}_{t-J})-s\Big({\bf E}^{\pi}\Big\{\sum_{i=t}^n\gamma_i(x_i,y^{i-1}_{i-N})\Big{|}Y^{t-1}_{t-J}=y^{t-1}_{t-J}\Big\}-(n+1)\kappa\Big)\bigg\}\label{algorithms:generalizations:lmco:equation17aaa}
\end{align}
where $(*)$ follows from Corollary~\ref{corollary:application:variational:equalities:lmco}, and $s\geq{0}$ is the Lagrange multiplier associated with the constraint.\\ 
\noi By standard dynamic programming arguments \cite{bertsekas-shreve2007,vanschuppen2014}, it follows that (\ref{algorithms:generalizations:lmco:equation17}) satisfies the following dynamic programming recursions.
\begin{align}
&C_n(y^{n-1}_{n-J})=\sup_{\pi_n(x_n|y^{t-1}_{t-J})}\Bigg\{\sum_{x_n,y_n}\log\Big(\frac{q_n(y_n|y^{n-1}_{n-M},x_n)}{{\nu}^{\pi}_{n}(y_n|y^{n-1}_{n-J})}\Big)q_n(y_n|y^{n-1}_{n-M},x_n){\pi}_n(x_n|y^{n-1}_{n-J})\nonumber\\
&-s\Big(\sum_{x_n}\gamma_n(x_n,y^{n-1}_{n-N})\pi_n(x_n|y^{n-1}_{n-J})-(n+1)\kappa\Big)\Bigg\},\label{algorithms:generalizations:lmco:equation18}\\
&C_t(y^{t-1}_{t-J})=\sup_{\pi_t(x_t|y^{t-1}_{t-J})}\Bigg\{\sum_{x_t,y_t}\Big(\log\big(\frac{q_t(y_t|y^{t-1}_{t-M},x_t)}{{\nu}^{\pi}_{t}(y_t|y^{t-1}_{t-J})}\big)+C_{t+1}(y^t_{t+1-J})\Big)\nonumber\\
&\qquad{q}_t(y_t|y^{t-1}_{t-M},x_t){\pi}_t(x_t|y^{t-1}_{t-J})-s\Big(\sum_{x_t}\gamma_t(x_t,y^{t-1}_{t-N})\pi_t(x_t|y^{t-1}_{t-J})-(n+1)\kappa\Big)\Bigg\},~t\in\mathbb{N}_0^{n-1}.\label{algorithms:generalizations:lmco:equation19}
\end{align}

\noi Next, we apply variational equality \eqref{equation15a} to show that the supremum in (\ref{algorithms:generalizations:lmco:equation18}), (\ref{algorithms:generalizations:lmco:equation19}), can be expressed as an extremum problem involving a double maximization problem over specific sets of distributions.

\begin{theorem}(Sequential double maximization with transmission cost)\label{theorem:generalization:ftfi:lmco:cost:double maximization}{\ \\}
Consider the sequence of channel distributions ${\cal C}^{A.M}_{0,n}\triangleq\{q_t(y_t|y^{t-1}_{t-M},x_t):~t\in\mathbb{N}_0^{n}\}$, and $C^{FB,A.J}_{X^n\rightarrow{Y^n}}(\kappa)$ defined by (\ref{characterization:ftfi:capacity:cost:lmco1}), for a fixed ${\mu}(y^{-1}_{-J})$. Assume there exist interior point to the constraint set ${\cal P}^{A.J}_{0,n}(\kappa)$. Then the following hold.\\
\noi{(a)} The dynamic programming recursions (\ref{algorithms:generalizations:lmco:equation18}), (\ref{algorithms:generalizations:lmco:equation19}) are equivalent to the following sequential double maximization dynamic programming recursions.
\begin{align}
C_n(y^{n-1}_{n-J})&=\sup_{\pi_n(x_n|y^{n-1}_{n-J})}\sup_{r_n(x_n|y^{n-1}_{n-M},y_n)}\Bigg\{\sum_{x_n,y_n}\log\Big(\frac{r_n(x_n|y^{n-1}_{n-M},y_n)}{{\pi}_{n}(x_n|y^{n-1}_{n-J})}\Big){q}_n(y_n|y^{n-1}_{n-M},x_n)\pi_n(x_n|y^{n-1}_{n-J})\nonumber\\
&-s\Big(\sum_{x_n}\gamma_n(x_n,y^{n-1}_{n-N})\pi_n(x_n|y^{n-1}_{n-J})-(n+1)\kappa\Big)\Bigg\},\label{algorithms:generalizations:lmco:equation21}\\
C_t(y^{t-1}_{t-J})&=\sup_{\pi_t(x_t|y^{t-1}_{t-J})}\sup_{r_t(x_t|y^{t-1}_{t-M},y_t)}\Bigg\{\sum_{x_t,y_t}\Big(\log\Big(\frac{r_t(x_t|y^{t-1}_{t-M},y_t)}{{\pi}_{t}(x_t|y^{t-1}_{t-J})}\Big)+C_{t+1}(y^t_{t+1-J})\Big)q_t(y_t|y^{t-1}_{t-M},x_t){\pi}_t(x_t|y^{t-1}_{t-J})\nonumber\\
&-s\Big(\sum_{x_t}\gamma_t(x_t,y^{t-1}_{t-N})\pi_t(x_t|y^{t-1}_{t-J})-(n+1)\kappa\Big)\Bigg\},~t\in\mathbb{N}_0^{n-1}\label{algorithms:generalizations:lmco:equation22}
\end{align}
and $C^{FB, A.J}_{X^n\rightarrow{Y}^n}(\kappa)$ is given by
\begin{align}
C^{FB, A.J}_{X^n\rightarrow{Y^n}}(\kappa)=\inf_{s\geq{0}}\sum_{y^{-1}_{-J}}C_0(y^{-1}_{-J})\mu(y^{-1}_{-J}).\label{algorithms:generalizations:lmco:equation20}
\end{align}
In addition, the following hold.\\
\noi {(i)} For a fixed $\pi_n(x_n|y^{n-1}_{n-J})$, the maximum in (\ref{algorithms:generalizations:lmco:equation21}) over $r_n(x_n|y^{n-1}_{n-M},y_n)$ occurs at $r^{*,\pi}_n(x_n|y^{n-1}_{n-M},y_n)$ given by
\begin{align}
r^{*,\pi}_n(x_n|y^{n-1}_{n-M},y_n)=\Big(\frac{q_n(y_n|y^{n-1}_{n-M},x_n)}{\nu^{\pi}_{n}(y_n|y^{n-1}_{n-J})}\Big){\pi}_{n}(x_n|y^{n-1}_{n-J})
\label{algorithms:generalizations:lmco:equation25aa}
\end{align}
and for a fixed $r_n(x_n|y^{n-1}_{n-M},y_n)$, the maximum in (\ref{algorithms:generalizations:lmco:equation21}) over $\pi_n(x_n|y^{n-1}_{n-J})$ is given by
\begin{align}
\pi^{}_n(x_n|y^{n-1}_{n-J})&=\frac{\exp{\Big\{\sum_{y_n}\log\big(r_n(x_n|y^{n-1}_{n-M},y_n)\big)q_n(y_n|y^{n-1}_{n-M},x_n)-s\gamma_n(x_n,y^{n-1}_{n-N})\Big\}}}{\sum_{x_n}\exp{\Big\{\sum_{y_n}\log\big(r_n(x_n|y^{n-1}_{n-M},y_n)\big)q_n(y_n|y^{n-1}_{n-M},x_n)-s\gamma_n(x_n,y^{n-1}_{n-N})\Big\}}},~\forall{x_n}\in{\cal X}_n.\label{algorithms:generalizations:lmco:equation26}
\end{align}
\noi{(ii)} For a fixed $\pi_t(x_t|y^{t-1}_{t-J})$, the maximum in (\ref{algorithms:generalizations:lmco:equation22}) over $r_t(x_t|y^{t-1}_{t-M},y_t)$ occurs at $r^{*,\pi}_t(x_t|y^{t-1}_{t-M},y_t)$ given by
\begin{align}
r^{*,\pi}_t(x_t|y^{t-1}_{t-M},y_t)=\Big(\frac{q_t(y_t|y^{t-1}_{t-M},x_t)}{\nu^{\pi}_{t}(y_t|y^{t-1}_{t-J})}\Big){\pi}_{t}(x_t|y^{t-1}_{t-J}),~t\in\mathbb{N}_0^{n-1}
\label{algorithms:generalizations:lmco:equation28}
\end{align}
and for a fixed $r_t(x_t|y^{t-1}_{t-M},y_t)$, the maximum in (\ref{algorithms:generalizations:lmco:equation22}) over $\pi_t(x_t|y^{t-1}_{t-J})$ is given by
\begin{align}
&\pi^{}_t(x_t|y^{t-1}_{t-J})=\frac{\exp{\Big\{\sum_{y_t}\Big(\log\big(r_t(x_t|y^{t-1}_{t-M},y_t)\big)+C_{t+1}(y^{t}_{t+1-J})\Big)q_t(y_t|y^{t-1}_{t-M},x_t)-s\gamma_t(x_t,y^{t-1}_{t-N})\Big\}}}{\sum_{x_t}\exp{\Big\{\sum_{y_t}\Big(\log\Big(r_t(x_t|y^{t-1}_{t-M},y_t)\Big)+C_{t+1}(y^t_{t+1-J})\Big)q_t(y_t|y^{t-1}_{t-M},x_t)-s\gamma_t(x_t,y^{t-1}_{t-N})\Big\}}},\nonumber\\
&\hspace{9.5cm}~\forall{x_t}\in{\cal X}_t,~t\in\mathbb{N}_0^{n-1}.\label{algorithms:generalizations:lmco:equation29}
\end{align}
\noi(iii) When (\ref{algorithms:generalizations:lmco:equation26}) is evaluated at $r_n(\cdot|\cdot,\cdot)=r^{*,\pi}_n(\cdot|\cdot,\cdot)$ given by (\ref{algorithms:generalizations:lmco:equation25aa}) then
\begin{align}
\pi^{}_n(x_n|y^{n-1}_{n-J})=&\frac{\exp{\Big\{\sum_{y_n}\log\big(\frac{q_n(y_n|y^{n-1}_{n-M},x_n)}{\nu^{\pi}_{n}(y_n|y^{n-1}_{n-J})}\big)q_n(y_n|y^{n-1}_{n-M},x_n)-s\gamma_n(x_n,y^{n-1}_{n-N})\Big\}}\pi_n(x_n|y^{n-1}_{n-J})}{\sum_{x_n}\exp{\Big\{\sum_{y_n}\log\big(\frac{q_n(y_n|y^{n-1}_{n-M},x_n)}{\nu^{\pi}_{n}(y_n|y^{n-1}_{n-J})}\big)q_n(y_n|y^{n-1}_{n-M},x_n)-s\gamma_n(x_n,y^{n-1}_{n-N})\Big\}}\pi_n(x_n|y^{n-1}_{n-J})},\nonumber\\
&\hspace*{10cm}~\forall{x_n}\in{\cal X}_n.\label{algorithms:generalizations:lmco:equation27}
\end{align}
When (\ref{algorithms:generalizations:lmco:equation29}) is evaluated at $r^{*,\pi}_t(x_t|y^{t-1}_{t-M},y_t)=r_t(\cdot|\cdot,\cdot)$ given by (\ref{algorithms:generalizations:lmco:equation28}) then
\begin{align}
&\pi^{}_t(x_t|y^{t-1}_{t-J})\nonumber\\
&=\frac{\exp{\Big\{\sum_{y_t}\Big(\log\big(\frac{q_t(y_t|y^{t-1}_{t-M},x_t)}{\nu^{\pi}_{t}(y_t|y^{t-1}_{t-J})}\big)+C_{t+1}(y^{t}_{t+1-J})\Big)q_t(y_t|y^{t-1}_{t-M},x_t)-s\gamma_t(x_t,y^{t-1}_{t-N})\Big\}}\pi_t(x_t|y^{t-1}_{t-J})}{\sum_{x_t}\exp{\Big\{\sum_{y_t}\Big(\log\big(\frac{q_t(y_t|y^{t-1}_{t-M},x_t)}{\nu^{\pi}_{t}(y_t|y^{t-1}_{t-J})}\big)+C_{t+1}(y^t_{t+1-J})\Big)q_t(y_t|y^{t-1}_{t-M},x_t)-s\gamma_t(x_t,y^{t-1}_{t-N})\Big\}}\pi_t(x_t|y^{t-1}_{t-J})},\nonumber\\
&\hspace{9.5cm}~\forall{x_t}\in{\cal X}_t,~t\in\mathbb{N}_0^{n-1}.\label{algorithms:generalizations:lmco:equation30}
\end{align}
\noi{(b)} The extremum problem $C^{FB,A.J}_{X^n\rightarrow{Y^n}}(\kappa)$ defined by (\ref{characterization:ftfi:capacity:cost:lmco1}) is equivalent to the following sequential double maximization problem.
\begin{align}
&C^{FB,A.J}_{X^n\rightarrow{Y}^n}(\kappa)=\inf_{s\geq{0}}\sup_{\pi_0(x_0|y^{-1}_J)}\sup_{r_0(x_0|y^{-1}_M,y_0)}\ldots\sup_{\pi_n(x_n|y^{n-1}_{n-J})}\sup_{r_n(x_n|y^{n-1}_{n-M},y_n)}\sum_{t=0}^n\Bigg\{{\bf E}\Big\{\log\Big(\frac{r_t(x_t|y^{t-1}_{t-M},y_t)}{\pi_{t}(x_t|y^{t-1}_{t-J})}\Big)\Big\}\nonumber\\
&\hspace{5cm}-s\Big({\bf E}\big\{\gamma_t(x_t,y^{t-1}_{t-N})\big\}-(n+1)\kappa\Big)\Bigg\}.\label{algorithms:generalizations:lmco:equation31}
\end{align}
\end{theorem}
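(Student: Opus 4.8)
The plan is to build the double-maximization recursions \eqref{algorithms:generalizations:lmco:equation21}--\eqref{algorithms:generalizations:lmco:equation22} directly out of the single-stage dynamic programming recursions \eqref{algorithms:generalizations:lmco:equation18}--\eqref{algorithms:generalizations:lmco:equation19}, applying the variational equality of Theorem~\ref{theorem:variational_equality} stage by stage. First I would observe that in each recursion the only term containing the output marginal $\nu^{\pi}_t$ is $\sum_{x_t,y_t}\log\big(q_t/\nu^{\pi}_t\big)q_t\pi_t$, and that by the single-stage instance of Corollary~\ref{corollary:application:variational:equalities:lmco} this equals $\sup_{r_t}\sum_{x_t,y_t}\log\big(r_t/\pi_t\big)q_t\pi_t$, the supremum being attained at \eqref{applications:variational:equalities:lmco:equation2}. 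Since neither the future value $C_{t+1}(y^t_{t+1-J})$ nor the running cost $\gamma_t$ depends on $r_t$, this replacement leaves those terms untouched, turning \eqref{algorithms:generalizations:lmco:equation18}--\eqref{algorithms:generalizations:lmco:equation19} into \eqref{algorithms:generalizations:lmco:equation21}--\eqref{algorithms:generalizations:lmco:equation22}; peeling off the time-$0$ stage against the fixed $\mu(y^{-1}_{-J})$, together with the Lagrange duality already set up in \eqref{algorithms:generalizations:lmco:equation17}--\eqref{algorithms:generalizations:lmco:equation17aaa}, then yields \eqref{algorithms:generalizations:lmco:equation20}. This establishes the equivalence asserted in part~(a).

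For the coordinate characterizations (i)--(ii), the inner maximization over $r_t$ for fixed $\pi_t$ is exactly the single-stage variational equality, so its maximizer is \eqref{algorithms:generalizations:lmco:equation25aa} (resp. \eqref{algorithms:generalizations:lmco:equation28}); the term $C_{t+1}$ is inert here for the reason just noted. The maximization over $\pi_t$ for fixed $r_t$ requires a short computation: expanding the stage objective and using $\sum_{y_t}q_t(y_t|y^{t-1}_{t-M},x_t)=1$ collapses $-\sum_{x_t,y_t}\log(\pi_t)q_t\pi_t$ to the entropy $-\sum_{x_t}\pi_t\log\pi_t$, so the functional reduces to $\sum_{x_t}\pi_t(x_t)b(x_t)-\sum_{x_t}\pi_t\log\pi_t$ plus a constant, where $b(x_t)$ collects $\sum_{y_t}\big(\log r_t+C_{t+1}\big)q_t-s\gamma_t$. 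This is a Gibbs free-energy functional on the simplex ${\cal M}({\cal X}_t)$; it is strictly concave, so adjoining a Lagrange multiplier for the normalization constraint and zeroing the stationarity condition gives the softmax maximizer, which is precisely \eqref{algorithms:generalizations:lmco:equation26} (resp. \eqref{algorithms:generalizations:lmco:equation29}). Strict concavity, already implied by Lemma~\ref{conv-DI} together with the joint concavity of $(\pi_t,r_t)\mapsto \pi_t\log(r_t/\pi_t)$, certifies that the stationary point is the global maximizer.

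The closed forms (iii), equations \eqref{algorithms:generalizations:lmco:equation27} and \eqref{algorithms:generalizations:lmco:equation30}, then follow by substituting the optimal $r^{*,\pi}_t$ of \eqref{algorithms:generalizations:lmco:equation25aa}, \eqref{algorithms:generalizations:lmco:equation28} into the $\pi_t$-maximizer: writing $\log r^{*,\pi}_t=\log(q_t/\nu^{\pi}_t)+\log\pi_t$ and again using $\sum_{y_t}q_t=1$ factors the multiplicative $\pi_t(x_t|y^{t-1}_{t-J})$ out of both numerator and denominator, reproducing the stated fixed-point equations. Finally, part~(b) is obtained by unrolling: starting from \eqref{algorithms:generalizations:lmco:equation20} and recursively inserting \eqref{algorithms:generalizations:lmco:equation21}--\eqref{algorithms:generalizations:lmco:equation22}, the tower property merges the successively conditioned expectations ${\bf E}^{\pi}\{\cdot\mid Y^{t-1}_{t-J}\}$ into a single expectation over the joint law \eqref{section:lmco:equation1}, while the nested suprema line up into the alternating $\sup_{\pi_t}\sup_{r_t}$ order of \eqref{algorithms:generalizations:lmco:equation31}, the outer $\inf_{s\ge 0}$ being carried along from the Lagrangian. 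I expect the main obstacle to be bookkeeping rather than conceptual: one must verify that the inner $r_t$-suprema may legitimately be pulled outside the conditional expectations and interleaved with the $\pi_t$-suprema, which rests on the joint concavity of the per-stage integrand and on the fact that each $r_t$ enters only the stage-$t$ term; and one must confirm that a single multiplier $s$ suffices across all stages, which is exactly what the Lagrange duality argument preceding \eqref{algorithms:generalizations:lmco:equation18} supplies.
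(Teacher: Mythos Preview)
Your proposal is correct and follows essentially the same route as the paper: the equivalence \eqref{algorithms:generalizations:lmco:equation21}--\eqref{algorithms:generalizations:lmco:equation22} is obtained by inserting the single-stage variational equality of Corollary~\ref{corollary:application:variational:equalities:lmco} into \eqref{algorithms:generalizations:lmco:equation18}--\eqref{algorithms:generalizations:lmco:equation19}, the $r_t$-maximizer is read off from \eqref{applications:variational:equalities:lmco:equation2}, the $\pi_t$-maximizer is obtained by differentiating with a Lagrange multiplier for the simplex constraint (your Gibbs/softmax observation is exactly the computation the paper carries out in \eqref{section:ftfi:lmco:derivative:equation1}--\eqref{section:ftfi:lmco:derivative:equation7}), and parts (iii) and (b) are the same substitution and unrolling you describe. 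The bookkeeping worries you flag are non-issues here: the $r_t$-supremum is taken pointwise in $y^{t-1}_{t-J}$ so it commutes with the conditional expectation trivially, and the single multiplier $s$ is inherited from the global Lagrangian \eqref{algorithms:generalizations:lmco:equation17} before the dynamic programming decomposition, exactly as you note.
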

\begin{proof}
The derivation is given in Appendix~\ref{proof:theorem:generalization:ftfi:lmco:cost:double maximization}.
\end{proof}

\noi In the next remark, we make some observations regarding Theorem~\ref{theorem:generalization:ftfi:lmco:cost:double maximization}.

\begin{remark}(Comments on Theorem~\ref{theorem:generalization:ftfi:lmco:cost:double maximization})
\begin{itemize}
\item[(a)] Theorem~\ref{theorem:generalization:ftfi:lmco:cost:double maximization} is a sequential version of the one derived for DMC in \cite[Theorem 8]{blahut1972}, which is crucial for the development of Blahut-Arimoto algorithm, to compute channel capacity of memoryless channels with transmission cost. That is, if we degrade the channel to a memoryless channel, and the transmission cost function to $\gamma_t(x_t,y^{t-1})\equiv\bar{\gamma}(x_t)$,~$t\in\mathbb{N}_0^n$, then Theorem~\ref{theorem:generalization:ftfi:lmco:cost:double maximization} is precisely \cite[Theorem 8]{blahut1972}. However, unlike \cite[Theorem 8]{blahut1972}, since the channel in our case is not memoryless, all equations involve the cost-to-go or value function. 
\item[(b)] The optimal channel input distribution satisfies the implicit nonlinear recursive equations \eqref{algorithms:generalizations:lmco:equation27}, \eqref{algorithms:generalizations:lmco:equation30}. These can be used to develop sequential algorithms to compute feedback capacity of channels with memory, with and without transmission cost constraint.
\end{itemize}
\end{remark}

\noi Next, we derive necessary and sufficient conditions for any input distribution $\{\pi_t(x_t|y_{t-J}^{t-1})\in{\cal M}({\cal X}_t):~t\in\mathbb{N}_0^{n}\}$ to achieve the supremum of the characterization of FTFI capacity with transmission cost given by \eqref{characterization:ftfi:capacity:cost:lmco1}. We obtain these conditions using two different methods. The first method is based on Theorem~\ref{theorem:generalization:ftfi:lmco:cost:double maximization}, while the second method is based on maximizing directly (\ref{algorithms:generalizations:lmco:equation18}), (\ref{algorithms:generalizations:lmco:equation19}). The derivation applies Karush-Kuhn-Tucker (KKT) theorem (see \cite{boyd-vandenberghe2004}), in view of the convexity of the optimization problems (\ref{algorithms:generalizations:lmco:equation18}), (\ref{algorithms:generalizations:lmco:equation19}) over the space of channel input distributions.

\begin{theorem}(Sequential necessary and sufficient conditions)\label{theorem:necessary:sufficient:ftfi:lmco:cost}{\ \\}
The necessary and sufficient conditions for any input distribution $\{\pi_{t}(x_t|y^{t-1}_{t-J}):~t\in\mathbb{N}_0^{n}\}$,~$J=\max\{M,N\}$, to achieve the supremum in $C^{FB,A.J}_{X^n\rightarrow{Y}^n}(\kappa)$  given by \eqref{characterization:ftfi:capacity:cost:lmco1} are the following.\\
\noi{(a)}  For each $y^{n-1}_{n-J}\in{\cal Y}^{n-1}_{n-J}$, there exist a ${K}^s_n(y^{n-1}_{n-J})$, which depends on $s\geq{0}$, such that the following hold.
\begin{align}
&\sum_{y_{n}}\Big(\log\big(\frac{q_n(y_n|y^{n-1}_{n-M},x_n)}{\nu^{\pi}_{t}(y_n|y^{n-1}_{n-J})}\big)\Big)q_n(y_n|y^{n-1}_{n-M},x_n)-s\gamma_n(x_n,y^{n-1}_{n-N})={K}^s_n(y^{n-1}_{n-J}),~\forall{x_n},~\mbox{if}~\pi_n(x_n|y^{n-1}_{n-J})\neq{0},\label{theorem:ftfi:lmco:cost:necessary:sufficient:equation1}\\
&\sum_{y_{n}}\Big(\log\big(\frac{q_n(y_n|y^{n-1}_{n-M},x_n)}{\nu^{\pi}_{n}(y_n|y^{n-1}_{n-J})}\big)\Big)q_n(y_n|y^{n-1}_{n-M},x_n)-s\gamma_n(x_n,y^{n-1}_{n-N})\leq{K}^s_n(y^{n-1}_{n-J}),~\forall{x_n},~\mbox{if}~\pi_n(x_n|y^{n-1}_{n-J})={0}.\label{theorem:ftfi:lmco:cost:necessary:sufficient:equation2}
\end{align}
Moreover, $C_t(y^{t-1}_{t-J})={K}^s_n(y^{n-1}_{n-J})+s(n+1)\kappa$ corresponds to the value function $C_t(y^{t-1}_{t-J})$, defined by \eqref{algorithms:generalizations:lmco:equation17}, evaluated at $t=n$.\\
\noi{(b)} For each $t$, $y^{t-1}_{t-J}\in{\cal Y}^{t-1}_{t-J}$, there exist a ${K}^s_t(y^{t-1}_{t-J})$, which depends on $s\geq{0}$, such that the following hold. \begin{align}
&\sum_{y_{t}}\Big(\log\big(\frac{q_t(y_t|y^{t-1}_{t-M},x_t)}{\nu^{\pi}_{t}(y_t|y^{t-1}_{t-J})}\big)+K^s_{t+1}(y^{t}_{t+1-J})\Big)q_t(y_t|y^{t-1}_{t-M},x_t)\nonumber\\
&\hspace{5cm}-s\gamma_t(x_t,y^{t-1}_{t-N})=K^s_t(y^{t-1}_{t-J}),~\forall{x_t},~\mbox{if}~\pi_t(x_t|y^{t-1}_{t-J})\neq{0},\label{theorem:ftfi:lmco:cost:necessary:sufficient:equation3}\\
&\sum_{y_{t}}\Big(\log\big(\frac{q_t(y_t|y^{t-1}_{t-M},x_t)}{\nu^{\pi}_{t}(y_t|y^{t-1}_{t-J})}\big)+K^s_{t+1}(y^{t}_{t+1-J})\Big)q_t(y_t|y^{t-1}_{t-M},x_t)\nonumber\\
&\hspace{5cm}-s\gamma_t(x_t,y^{t-1}_{t-N})\leq{K}^s_t(y^{t-1}_{t-J}),~\forall{x_t},~\mbox{if}~\pi_t(x_t|y^{t-1}_{t-J})={0}\label{theorem:ftfi:lmco:cost:necessary:sufficient:equation4}
\end{align}
for $t=n-1,\ldots,0$. Moreover, $C_t(y^{t-1}_{t-J})={K}^s_t(y^{t-1}_{t-J})+s(n+1)\kappa$ corresponds to the value function $C_t(y^{t-1}_{t-J})$, defined by \eqref{algorithms:generalizations:lmco:equation17}, evaluated at $t=n-1,\ldots,0$.
\end{theorem}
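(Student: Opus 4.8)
The plan is to combine the convexity established in Theorem~\ref{thm-pr_fb} with the backward dynamic programming recursions \eqref{algorithms:generalizations:lmco:equation18}, \eqref{algorithms:generalizations:lmco:equation19}, and then to apply the Karush--Kuhn--Tucker (KKT) theorem at each stage separately. Since an interior point of ${\cal P}^{A.J}_{0,n}(\kappa)$ is assumed to exist and the characterization \eqref{characterization:ftfi:capacity:cost:lmco1} is a convex program, Lagrange duality allows us to replace the average transmission cost constraint by the penalized objective with a multiplier $s\geq0$, which is precisely what produces \eqref{algorithms:generalizations:lmco:equation18}, \eqref{algorithms:generalizations:lmco:equation19}. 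The argument then reduces to the key observation that, for each fixed conditioning symbol $y^{t-1}_{t-J}$, the per-stage functional being maximized is concave in the single-stage probability vector $\{\pi_t(x_t|y^{t-1}_{t-J}):x_t\in{\cal X}_t\}$ over the simplex ${\cal M}({\cal X}_t)$, so that the KKT conditions are both necessary and sufficient.

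First I would treat the terminal stage $t=n$. Viewing the objective inside \eqref{algorithms:generalizations:lmco:equation18} as a function of the vector $\pi_n(\cdot|y^{n-1}_{n-J})$, I would compute its Gateaux derivative with respect to $\pi_n(x_n|y^{n-1}_{n-J})$. The only delicate term is $\sum_{x_n,y_n}\log(q_n/\nu^\pi_n)\,q_n\,\pi_n$: rewriting it as $\sum_{x_n}\pi_n\sum_{y_n}q_n\log q_n-\sum_{y_n}\nu^\pi_n\log\nu^\pi_n$ exposes it as a linear term plus the concave functional $-\sum_{y_n}\nu^\pi_n\log\nu^\pi_n$, and, using that $\nu^\pi_n$ is linear in $\pi_n$ with $\partial\nu^\pi_n(y_n)/\partial\pi_n(x_n)=q_n(y_n|\cdot,x_n)$, differentiation yields $\sum_{y_n}q_n\log(q_n/\nu^\pi_n)-1$. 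Adjoining the linear term $-s\gamma_n$ and forming the Lagrangian for the simplex constraints $\pi_n\geq0$, $\sum_{x_n}\pi_n=1$, complementary slackness gives equality to a common constant on the support of $\pi_n$ and the reverse inequality off the support; absorbing the constant $-1$ together with the multiplier of the normalization constraint into a single $K^s_n(y^{n-1}_{n-J})$ produces \eqref{theorem:ftfi:lmco:cost:necessary:sufficient:equation1}, \eqref{theorem:ftfi:lmco:cost:necessary:sufficient:equation2}. Evaluating the objective at the optimizer and using $\sum_{x_n}\pi_n=1$ collapses the support value to $K^s_n$, so the penalized value function equals $K^s_n$ and the value function \eqref{algorithms:generalizations:lmco:equation17} at $t=n$ is $C_n=K^s_n+s(n+1)\kappa$.

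Next I would run the backward induction. At stage $t$ the continuation value enters \eqref{algorithms:generalizations:lmco:equation19} as the additive, $\pi_t$-independent function $K^s_{t+1}(y^t_{t+1-J})$ of $y_t$, so it contributes the extra $\pi_t$-linear piece $\sum_{y_t}K^s_{t+1}\,q_t$ to the derivative; repeating the terminal-stage computation verbatim then yields the stationarity condition with $K^s_{t+1}$ appearing inside the expectation against $q_t$, i.e. \eqref{theorem:ftfi:lmco:cost:necessary:sufficient:equation3}, and the off-support inequality \eqref{theorem:ftfi:lmco:cost:necessary:sufficient:equation4}. Identifying the common support value with $K^s_t$ and restoring the additive penalty constant once gives $C_t=K^s_t+s(n+1)\kappa$ for $t=n-1,\ldots,0$, closing the induction; the outer infimum over $s\geq0$ in \eqref{algorithms:generalizations:lmco:equation20} then recovers $C^{FB,A.J}_{X^n\rightarrow Y^n}(\kappa)$.

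The main obstacle is twofold, and both parts are bookkeeping rather than conceptual. The first is carrying the dependence of the output marginal $\nu^\pi_t$ on the very distribution $\pi_t$ being optimized correctly through the Gateaux derivative; the clean device is the entropy rewriting above, which isolates the nonlinearity in the single concave term $-\sum_{y_t}\nu^\pi_t\log\nu^\pi_t$ and simultaneously makes the concavity (hence sufficiency of KKT) and the additive constant $-1$ transparent. The second is keeping the affine shift $s(n+1)\kappa$ between the dynamic programming value function $C_t$ of \eqref{algorithms:generalizations:lmco:equation17} and the multiplier-adjusted quantity $K^s_t$ of the stated conditions, so that the recursion for $K^s_t$ matches \eqref{theorem:ftfi:lmco:cost:necessary:sufficient:equation3} while $C_t=K^s_t+s(n+1)\kappa$ holds at every stage; this is purely a matter of adding the penalty constant once rather than at each step. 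Sufficiency needs no separate argument beyond concavity, and the alternative derivation via the double-maximization representation of Theorem~\ref{theorem:generalization:ftfi:lmco:cost:double maximization} provides an independent check that the same stationarity conditions are obtained.
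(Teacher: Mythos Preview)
Your proposal is correct and follows essentially the same approach as the paper: apply the KKT theorem to the per-stage convex programs \eqref{algorithms:generalizations:lmco:equation18}, \eqref{algorithms:generalizations:lmco:equation19} in backward order, differentiate with respect to the single-stage probability vector, and identify the Lagrange multiplier of the normalization constraint (up to the absorbed constant) with $K^s_t$. Your entropy rewriting of the mutual-information term is a slightly cleaner way to see both the concavity and the $-1$ in the derivative than the paper's direct chain-rule computation in Appendix~\ref{proof:theorem:necessary:sufficient:ftfi:lmco:cost}, and your remark about the alternative route through Theorem~\ref{theorem:generalization:ftfi:lmco:cost:double maximization} matches the paper's second proof in Appendix~\ref{remark:necessary:sufficient:alternative:proof}.
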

\begin{proof}
See Appendix~\ref{proof:theorem:necessary:sufficient:ftfi:lmco:cost}.
\end{proof}

\noi Before we proceed, we make the following comments about Theorem~\ref{theorem:necessary:sufficient:ftfi:lmco:cost}.

\begin{remark}(Comments on Theorem~\ref{theorem:necessary:sufficient:ftfi:lmco:cost})
\begin{itemize}
\item[(a)] An alternative derivation of Theorem~\ref{theorem:necessary:sufficient:ftfi:lmco:cost} based on Theorem~\ref{theorem:generalization:ftfi:lmco:cost:double maximization} is given in Appendix~\ref{proofs:section:ftfi:lmco},~Remark~\ref{remark:necessary:sufficient:alternative:proof}.
\item[(b)] Theorem~\ref{theorem:necessary:sufficient:ftfi:lmco:cost}  degenerates to Theorem~\ref{introduction:baa:sequential:theorem:lmco:necessary:sufficient} given in Section~\ref{section:necessary:sufficient:introduction} if there is no transmission cost constraint. 
\item[(c)] The sequential necessary and sufficient conditions derived in Theorem~\ref{theorem:necessary:sufficient:ftfi:lmco:cost}  are important for the following reasons.\\
{(i)} They characterize explicitly any input distribution that achieves the supremum of the characterization of FTFI capacity, in extremum problems of feedback capacity of channels with finite memory with and without transmission cost.\\
{(ii)} They can be used to develop sequential algorithms to facilitate numerical evaluation of feedback capacity problems \cite{stavrou-charalambous-tzortzis2016ieeeit}. 
\end{itemize}
\end{remark}

\noi Chen and Berger in the seminal paper \cite{chen-berger2005}, gave sufficient conditions for Unit Memory Channel Output (UMCO) channels\footnote{channels of class $A$ given by \eqref{introduction:class:channel:equation1}, with $M=1$.} to obtain the ergodic feedback capacity. We summarize the main one in the following remark. 
\begin{remark}(Conditions for ergodic feedback capacity of UMCO)\label{remark:conditions:ergodic:capacity}{\ \\}
Suppose the channel is time-invariant, i.e., $\{q_t(y_t|y_{t-1},x_t)\equiv{q}(y_t|y_{t-1},x_t):~t\in\mathbb{N}_0^n\}$. If the channel is strongly indecomposable and strongly aperiodic, as defined by Chen and Berger \cite[Definitions 2, 4]{chen-berger2005} the following  hold.
\begin{itemize}
\item[(a)] The optimal channel input distributions $\{\pi_t(x_t|y_{t-1}):~t\in\mathbb{N}_0^n\}$ converge asymptotically to time-invariant distributions denoted by $\pi^\infty(x|y), x\in{\cal X}, y\in{\cal Y}$, and the corresponding channel output transition probabilities converges to time-invariant transition probabilities $\nu^{\pi^\infty}(z|y),~z\in{\cal Y},~y\in{\cal Y}$. Moreover, there is a unique invariant distribution $\nu^{\pi^\infty}(y)$ corresponding to $\nu^{\pi^\infty}(z|y)$.
\item[(b)] The ergodic feedback capacity is given by
\begin{subequations}
\begin{align}
C^{FB, A.1}
=&\lim_{n\longrightarrow\infty}\sup_{\pi_t(x_t|y_{t-1}):~t\in\mathbb{N}_0^n}\frac{1}{n+1}{\bf E}^{\pi}\bigg\{\sum_{t=0}^n\log\Big(\frac{q(Y_t|Y_{t-1},X_t)}{\nu_t^{\pi}(Y_t|Y_{t-1})}\Big)\bigg\}\\
=&\sup_{\pi^\infty(x_t|y_{t-1}):~t=0,\ldots,\infty}\lim_{n\longrightarrow\infty}\frac{1}{n+1}{\bf E}^{\pi^{\infty}}\bigg\{\sum_{t=0}^n\log\Big(\frac{q(Y_t|Y_{t-1},X_t)}{\nu_t^{\pi}(Y_t|Y_{t-1})}\Big)\bigg\}\\
=&\sup_{\pi^\infty(x_0|y_{-1})}{\bf E}^{\pi^{\infty}}\bigg\{\log\Big(\frac{q(Y_0|Y_{-1},X_0)}{\nu^{\pi^{\infty}}(Y_0|Y_{-1})}\Big)\bigg\}\\
=&\sup_{\pi^\infty(x_0|y_{-1})}\sum_{y_{-1}}\Big(\sum_{x_0,y_0}\log\Big(\frac{q(y_0|y_{-1},x_0)}{\nu^{\pi^{\infty}}(y_0|y_{-1})}\Big)q(y_0|y_{-1},x_0)\pi^\infty(x_0|y_{-1})\Big)\nu^{\pi^{\infty}}(y_{-1})
.\end{align}\label{chen-burger_2} 
\end{subequations}
\item[(c)] The previous results extend to the case of feedback capacity with average transmission cost as follows.
\begin{subequations}
\begin{align}
C^{FB, A.1}(\kappa)
=&\lim_{n\longrightarrow\infty}\sup_{{\cal P}^{A.1}_{0,n}(\kappa)}\frac{1}{n+1}{\bf E}^{\pi}\bigg\{\sum_{t=0}^n\log\Big(\frac{q(Y_t|Y_{t-1},X_t)}{\nu_t^{\pi}(Y_t|Y_{t-1})}\Big)\bigg\}\\
=&\sup_{{\cal P}^{A.1,\infty}(\kappa)}\lim_{n\longrightarrow\infty}\frac{1}{n+1}{\bf E}^{\pi^{\infty}}\bigg\{\sum_{t=0}^n\log\Big(\frac{q(Y_t|Y_{t-1},X_t)}{\nu_t^{\pi}(Y_t|Y_{t-1})}\Big)\bigg\}\\
=&\sup_{\bar{\cal P}^{A.1,\infty}(\kappa)}{\bf E}^{\pi^{\infty}}\bigg\{\log\Big(\frac{q(Y_0|Y_{-1},X_0)}{\nu^{\pi^{\infty}}(Y_0|Y_{-1})}\Big)\bigg\}\\
=&\sup_{\bar{\cal P}^{A.1,\infty}(\kappa)}\sum_{y_{-1},x_0,y_0}\log\Big(\frac{q(y_0|y_{-1},x_0)}{\nu^{\pi^{\infty}}(y_0|y_{-1})}\Big)q(y_0|y_{-1},x_0)\pi^\infty(x_0|y_{-1})\nu^{\pi^{\infty}}(y_{-1})
\end{align}\label{chen-burger_3} 
\end{subequations}
where
\begin{align*}
{\cal P}^{A.1,\infty}(\kappa)=&\Big\{\pi^{\infty}(x_t|y_{t-1}),~t\in\mathbb{N}_0:~\lim_{n\longrightarrow\infty}\frac{1}{n+1}{\bf E}^{\pi^\infty}\big\{\sum_{t=0}^n \gamma(X_t,Y_{t-1}) \big\}\leq\kappa\Big\}\\
\bar{\cal P}^{A.1,\infty}(\kappa)=&\Big\{\pi^{\infty}(x_0|y_{-1}):~{\bf E}^{\pi^\infty}\big\{\gamma(X_0,Y_{-1}) \big\}\leq\kappa\Big\}.
\end{align*}
\end{itemize}
\end{remark}

The results derived in \cite{chen-berger2005} can be extended to channels  of class $A$. However, we do not proceed to do so, because for all application examples presented in this paper, we can show that $\frac{1}{n+1}C^{FB}_{X^n\rightarrow{Y^n}}$ \big(or $\frac{1}{n+1}C^{FB}_{X^n\rightarrow{Y^n}}(\kappa)$\big) corresponds to feedback capacity by investigating the ergodic asymptotic properties of the FTFI capacity.

\begin{remark}(Generalizations){\ \\}
The analysis presented in this subsection extends naturally to any combination of channels of classes $A$, $B$ and transmission cost constraint of classes $A$, $B$. This is shown in Section~\ref{section:generalizations}.
\end{remark}

%
%
%
%

\section{Application Examples}\label{section:application:examples}

\par In this section, we derive {\it closed form expressions of the optimal (nonstationary) channel input conditional distributions and the corresponding channel output transition probability distributions} of the characterization of the FTFI capacity, for the following channels. 
\begin{itemize}
\item[(a)] The time-varying Binary Unit Memory Channel Output (BUMCO)  channel defined by \eqref{introduction:section:example:matrix:general:bumco} with and without transmission cost constraint. 
\item[(b)] The time-varying Binary Erasure Unit Memory Channel Output (BEUMCO) channel defined by \eqref{introduction:example:conjecture:matrix:beumco}. 
\item[(c)] The time-varying Binary Symmetric Two Memory Channel Output (BSTMCO) channel defined by \eqref{introduction:section:example:matrix:general:bstmco}. 
\end{itemize}
For the time-invariant BUMCO channel and the BEUMCO channel, we also investigate the asymptotic properties of the optimal channel input conditional distribution via the per unit time limit of the characterization of FTFI capacity.

%
%
%
%

\subsection{The FTFI Capacity of Time-Varying BUMCO Channel and Feedback Capacity}\label{subsection:applications:ftfi:capacity:bumco}

\par In this subsection,  we give the derivation of equations \eqref{section:example:bumco:equation1}-\eqref{introduction:theorem:value:function:eq2}, \eqref{introduction:example:ergodic:feedback:capacity:bumco}-\eqref{section:example:bumco:time-invariant:equation1} of Theorem~\ref{introduction:theorem:bumco:optimal:solutions}, and we present numerical evaluations based on the closed form expressions for various scenarios.\\ 

\subsubsection{Proof of Equations \eqref{section:example:bumco:equation1}-\eqref{introduction:theorem:value:function:eq2}}\label{proof:ftfi:feedback:capacity} We provide the derivation of the backward recursive equations \eqref{section:example:bumco:equation1}-\eqref{introduction:theorem:value:function:eq2}.\\
 Denote the optimal distributions as follows. 
\begin{equation}
 \nu^{\pi^*}_t(y_t|y_{t-1})\triangleq\bordermatrix{&0&1\cr
            0&c_0(t)&1-c_1(t)\cr
            1&1-c_0(t)&c_1(t)\cr},\\
~~\pi^*_t(x_t|y_{t-1})\triangleq\bordermatrix{&0&1\cr
            0&d_0(t)&1-d_1(t)\cr
            1&1-d_0(t)&d_1(t)\cr},~t\in\mathbb{N}_0^n.\label{section:example:matrix:nominal:equation1}
\end{equation} 
We shall derive recursive expressions for $\{c_0(t),c_1(t),d_0(t),d_1(t):~t\in\mathbb{N}_0^n\}$.\\
Define 
\begin{align}
\Delta{C}_{t}\triangleq{C}_t(1)-C_t(0),~{t}\in\mathbb{N}_0^{n+1},~\Delta{C}_{n+1}(0)=\Delta{C}_{n+1}(1)=0.\label{section:example:time-n:equation13a}
\end{align}
$\bullet\underline{\mbox{Time t=n:}}$\\
By Theorem~\ref{introduction:baa:sequential:theorem:lmco:necessary:sufficient}, the necessary and sufficient condition for $\pi^*_n(x_n|y_{n-1})\neq{0}$ to achieve the supremum of the FTFI capacity of BUMCO channel is the following.
\begin{align}
C_n(y_{n-1})=\sum_{y_n\in\{0,1\}}\log\Big(\frac{q_n(y_n|x_n,y_{n-1})}{\nu^{\pi^*}_n(y_n|y_{n-1})}\Big)q_n(y_n|x_n,y_{n-1}),~\forall{x_n}.\label{section:example:bumco:ns:time-n}
\end{align}
Next, we evaluate $C_n(y_{n-1})$ for $x_n\in\{0,1\}$, for fixed $y_{n-1}$.\\
\underline{$y_{n-1}=0,~x_n=0$:}
\begin{align}
C_n(0)&=\sum_{y_n\in\{0,1\}}\log\Big(\frac{q_n(y_n|0,0)}{\nu^{\pi^*}_n(y_n|0)}\Big)q_n(y_n|0,0)=\log\Big(\frac{q_n(0|0,0)}{\nu^{\pi^*}_n(0|0)}\Big)q_n(0|0,0)+\log\Big(\frac{q_n(1|0,0)}{\nu^{\pi^*}_n(1|0)}\Big)q_n(1|0,0)\nonumber\\
&=\alpha_n\log\big(\frac{1-c_0(n)}{c_0(n)}\big)+\log\big(\frac{1}{1-c_0(n)}\big)-H(\alpha_n).\label{section:example:time-n:equation1}
\end{align}
\underline{$y_{n-1}=0,~x_n=1$:}
\begin{align}
C_n(0)&=\sum_{y_n\in\{0,1\}}\log\Big(\frac{q_n(y_n|1,0)}{\nu^{\pi^*}_n(y_n|0)}\Big)q_n(y_n|1,0)=\log\Big(\frac{q_n(0|1,0)}{\nu^{\pi^*}_n(0|0)}\Big)q_n(0|1,0)+\log\Big(\frac{q_n(1|1,0)}{\nu^{\pi^*}_n(1|0)}\Big)q_n(1|1,0)\nonumber\\
&=\gamma_n\log\big(\frac{1-c_0(n)}{c_0(n)}\big)+\log\big(\frac{1}{1-c_0(n)}\big)-H(\gamma_n).\label{section:example:time-n:equation2}
\end{align}
Since \eqref{section:example:time-n:equation1}=\eqref{section:example:time-n:equation2}, we obtain
\begin{align}
\nu^{\pi^*}_n(0|0)\equiv{c}_0(n)=\frac{1}{1+2^{\mu_0(n)}},~\mu_0(n)\triangleq\frac{H(\gamma_n)-H(\alpha_n)}{\gamma_n-\alpha_n}.\label{section:example:time-n:equation4}
\end{align}
The channel output transition probability at time $t=n$ is given by
\begin{align}
\nu^{\pi^*}_n(y_n|y_{n-1})=\sum_{x_n\in\{0,1\}}q_n(y_n|x_n,y_{n-1})\pi^*_n(x_n|y_{n-1}).\label{section:example:time-n:equation5}
\end{align}
We use \eqref{section:example:time-n:equation5} to find the values $\pi^*_n(0|0)\equiv{d}_0(n)$.\\
\underline{$y_{n-1}=0,~y_n=0$:}
\begin{align}
\nu^{\pi^*}_n(0|0)=\sum_{x_n\in\{0,1\}}q_n(0|x_n,0)\pi^*_n(x_n|0)=q_n(0|0,0)\pi_n(0|0)+q_n(0|1,0)\pi^*_n(1|0).\label{section:example:time-n:equation6a}
\end{align}
Substituting \eqref{section:example:time-n:equation4} into  \eqref{section:example:time-n:equation6a} we obtain
\begin{align}
\pi^*_n(0|0)\equiv{{d_0}(n)=\frac{1-\gamma_n(1+2^{\mu_0(n)})}{(\alpha_n-\gamma_n)(1+2^{\mu_0(n)})}}.\label{section:example:time-n:equation6}
\end{align}
We repeat the above procedure to compute the expressions of $C_n(1)$, $\nu^{\pi^*}_n(0|1)$, $\nu^{\pi^*}_n(1|1)$, $\pi^*_n(0|1)$ and $\pi^*_n(1|1)$. After some algebra, we obtain
\begin{align}
\nu^{\pi^*}_n(1|1)\equiv{{c}_1(n)=\frac{2^{\mu_1(n)}}{1+2^{\mu_1(n)}}},~\pi^*_n(1|1)\equiv{d_1(n)}=\frac{\beta_n(1+2^{\mu_1(n)})-1}{(\beta_n-\delta_n)(1+2^{\mu_1(n)})},~\mu_1(n)\triangleq\frac{H(\beta_n)-H(\delta_n)}{\beta_n-\delta_n}.\label{section:example:time-n:equation10}
\end{align}
Finally, we substitute \eqref{section:example:time-n:equation4}, \eqref{section:example:time-n:equation6} and \eqref{section:example:time-n:equation10}, in \eqref{section:example:matrix:nominal:equation1} to obtain \eqref{section:example:bumco:equation1} evaluated at $t=n$. Next, we evaluate $C_n(0)$, $C_n(1)$, since these are required in the next time step. After some algebra, we obtain the following expressions.
\begin{align}
C_n(0)=\mu_0(n)(\alpha_n-1)+\log(1+2^{\mu_0(n)})-H(\alpha_n),~C_n(1)=\mu_1(n)(\beta_n-1)+\log(1+2^{\mu_1(n)})-H(\beta_n).\label{section:example:time-n:equation13}
\end{align} 
Using \eqref{section:example:time-n:equation13} in \eqref{section:example:time-n:equation13a} we obtain \eqref{section:example:bumco:equation2b} at $t=n$ as follows.
\begin{align}
{\Delta{C}_{n}={C}_n(1)-C_n(0)=\big(\mu_1(n)(\beta_n-1)-\mu_0(n)(\alpha_n-1)\big)+H(\alpha_n)-H(\beta_n)+\log\Big(\frac{1+2^{\mu_1(n)}}{1+2^{\mu_0(n)}}\Big)}.  \label{section:example:time-n:equation14}
\end{align}
We proceed with the computation at the next time step.\\
$\bullet\underline{\mbox{Time t={n-1}:}}$\\
By Theorem~\ref{introduction:baa:sequential:theorem:lmco:necessary:sufficient}, 
\begin{align}
C_{n-1}(y_{n-2})=\sum_{y_{n-1}\in\{0,1\}}\Big(\log\big(\frac{q_{n-1}(y_{n-1}|x_{n-1},y_{n-2})}{\nu^{\pi^*}_{n-1}(y_{n-1}|y_{n-2})}\big)+C_n(y_{n-1})\Big)q_{n-1}(y_{n-1}|x_{n-1},y_{n-2}),~\forall{x_{n-1}}.\label{section:example:bumco:ns:time-n-1}
\end{align}
Next, we evaluate $C_{n-1}(y_{n-2})$ for $x_{n-1}\in\{0,1\}$, for fixed $y_{n-2}$.\\
\underline{$y_{n-2}=0,~x_{n-1}=0$:}
\begin{align}
&C_{n-1}(0)=\sum_{y_{n-1}\in\{0,1\}}\Big(\log\big(\frac{q_{n-1}(y_{n-1}|0,0)}{\nu^{\pi^*}_{n-1}(y_{n-1}|0)}\big)+C_n(y_{n-1})\Big)q_{n-1}(y_{n-1}|0,0)\nonumber\\
&=\Big(\log\big(\frac{q_{n-1}(0|0,0)}{\nu^{\pi^*}_{n-1}(0|0)}\big)+C_n(0)\Big)q_{n-1}(0|0,0)+\Big(\log\big(\frac{q_{n-1}(1|0,0)}{\nu^{\pi^*}_{n-1}(1|0)}\big)+C_n(1)\Big)q_{n-1}(1|0,0)\nonumber\\
&=\alpha_{n-1}\log\big(\frac{1-c_0(n-1)}{c_0(n-1)}\big)+\log\big(\frac{1}{1-c_0(n-1)}\big)-H(\alpha_{n-1})-\alpha_{n-1}{C}_n(0)+(1-\alpha_{n-1})C_n(1).\label{section:example:time-n-1:equation1}
\end{align}
\underline{$y_{n-2}=0,~x_{n-1}=1$:}
\begin{align}
&C_{n-1}(0)=\sum_{y_{n-1}\in\{0,1\}}\Big(\log\big(\frac{q_{n-1}(y_{n-1}|1,0)}{\nu^{\pi^*}_{n-1}(y_{n-1}|0)}\big)+C_n(y_{n-1})\Big)q_{n-1}(y_{n-1}|1,0)\nonumber\\
&=\Big(\log\big(\frac{q_{n-1}(0|1,0)}{\nu^{\pi^*}_{n-1}(0|0)}\big)+C_n(0)\Big)q_{n-1}(0|1,0)+\Big(\log\big(\frac{q_{n-1}(1|1,0)}{\nu^{\pi^*}_{n-1}(1|0)}\big)+C_n(1)\Big)q_{n-1}(1|1,0)\nonumber\\
&=\gamma_{n-1}\log\big(\frac{1-c_0(n-1)}{c_0(n-1)}\big)+\log\big(\frac{1}{1-c_0(n-1)}\big)-H(\gamma_{n-1})-\gamma_{n-1}{C}_n(0)+(1-\gamma_{n-1})C_n(1).\label{section:example:time-n-1:equation2}
\end{align}
Since \eqref{section:example:time-n-1:equation1}=\eqref{section:example:time-n-1:equation2}, we obtain
\begin{align}
\nu^{\pi^*}_{n-1}(0|0)\equiv{{c}_0(n-1)=\frac{1}{1+2^{\mu_0(n-1)+\Delta{C}_n}}},~\mu_0(n-1)\triangleq\frac{H(\gamma_{n-1})-H(\alpha_{n-1})}{\gamma_{n-1}-\alpha_{n-1}}.\label{section:example:time-n-1:equation4}
\end{align}
The channel output transition probability at time $t=n-1$ is given by
\begin{align}
\nu^{\pi^*}_{n-1}(y_{n-1}|y_{n-2})=\sum_{x_{n-1}\in\{0,1\}}q_{n-1}(y_{n-1}|x_{n-1},y_{n-2})\pi^*_{n-1}(x_{n-1}|y_{n-2}).\label{section:example:time-n-1:equation5}
\end{align}
We use \eqref{section:example:time-n-1:equation5} to find the values of $\pi^*_{n-1}(0|0)$ and $\pi^*_{n-1}(1|0)$.\\
\underline{$y_{n-2}=0,~y_{n-1}=0$:}
\begin{align}
\nu^{\pi^*}_{n-1}(0|0)&=\sum_{x_{n-1}\in\{0,1\}}q_{n-1}(0|x_{n-1},0)\pi^*_{n-1}(x_{n-1}|0)=q_{n-1}(0|0,0)\pi^*_{n-1}(0|0)+q_{n-1}(0|1,0)\pi^*_{n-1}(1|0)\label{section:example:time-n-1:equation5a}
\end{align}
Substituting \eqref{section:example:time-n-1:equation4} into \eqref{section:example:time-n-1:equation5a} we obtain
\begin{align}
\pi^*_{n-1}(0|0)\equiv{{d_0(n-1)}=\frac{1-\gamma_{n-1}(1+2^{\mu_0(n-1)+\Delta{C}_n})}{(\alpha_{n-1}-\gamma_{n-1})(1+2^{\mu_0(n-1)+\Delta{C}_n})}}.\label{section:example:time-n-1:equation6}
\end{align}
Repeating the above procedure we obtain the expressions for $C_{n-1}(1)$, $\nu^{\pi^*}_{n-1}(0|1)$, $\nu^{\pi^*}_{n-1}(1|1)$, $\pi^*_n(0|1)$ and $\pi^*_{n-1}(1|1)$. After some algebra, we obtain
\begin{align}
\nu^{\pi^*}_{n-1}(1|1)\equiv{{c}_1(n-1)=\frac{2^{\mu_1(n-1)}}{2^{\mu_1(n-1)+\Delta{C}_n}}},~\pi^*_{n-1}(1|1)\equiv{{d_1(n-1)}=\frac{\beta_{n-1}(1+2^{\mu_1(n-1)+\Delta{C}_n})-1}{(\beta_{n-1}-\delta_{n-1})(1+2^{\mu_1(n-1)+\Delta{C}_n})}}\label{section:example:time-n-1:equation10}
\end{align}
where
\begin{align}
\mu_1(n-1)\triangleq\frac{H(\beta_{n-1})-H(\delta_{n-1})}{\beta_{n-1}-\delta_{n-1}}.
\end{align}
Finally, we substitute \eqref{section:example:time-n-1:equation4}, \eqref{section:example:time-n-1:equation6} and \eqref{section:example:time-n-1:equation10} in \eqref{section:example:matrix:nominal:equation1} to obtain \eqref{section:example:bumco:equation1} evaluated at $t=n-1$.
Similarly as before, we evaluate $C_{n-1}(0)$, $C_{n-1}(1)$, which are required in the next time step. After some algebra, we obtain the following expressions.
\begin{align}
C_{n-1}(0)&=\mu_0(n-1)(\alpha_{n-1}-1)+C_n(0)+\log(1+2^{\mu_0(n-1)+\Delta{C}_n})-H(\alpha_{n-1}),\nonumber\\
~C_{n-1}(1)&=\mu_1(n-1)(\beta_{n-1}-1)+{C}_n(0)+\log(1+2^{\mu_1(n-1)+\Delta{C}_n})-H(\beta_{n-1}).\label{section:example:time-n-1:equation13}
\end{align} 
Finally, using \eqref{section:example:time-n-1:equation13} in \eqref{section:example:time-n:equation13a} we obtain \eqref{section:example:bumco:equation2b} at $t=n-1$.\\
To complete the derivation we need to apply induction hypothesis, i.e., to show validity of the solution for $t=n-k$, provided it is valid for $t=n,n-1,n-2,\ldots,n-k+1$. This is done precisely as the derivation of the time step $t=n-1$, hence we omit it.  This completes the derivation.

\subsubsection{Proof of Equations \eqref{introduction:example:ergodic:feedback:capacity:bumco}-\eqref{section:example:bumco:time-invariant:equation1}}\label{proof:ergodic:feedback:capacity} Next, we address the asymptotic convergence of the optimal channel input conditional distribution and the corresponding channel output transition probability distribution given in \eqref{section:example:bumco:equation1}, by investigating the convergence properties of the value functions $\{C_t(0),~C_t(1),~t\in\mathbb{N}_0^n\}$ in terms of their difference $\{\Delta{C}_t:~t\in\mathbb{N}_0^n\}$. Conditions for convergence of the sequence $\{\Delta{C}_t:~t\in\mathbb{N}_0^n\}$, can be expressed in terms of parameters $\{\alpha_t, \beta_t, \gamma_t, \delta_t:~t\in\mathbb{N}_0^n\}$. From \eqref{section:example:bumco:equation2b}, it follows by contradiction, that the sequence $\{\Delta{C}_t:~t\in\mathbb{N}_0^n\}$ cannot diverge, i.e., it is bounded. \\
\noi Consider the time-invariant version of BUMCO $\{q_t(y_t|y_{t-1},x_t)=q(y_t|y_{t-1},x_t):~t\in\mathbb{N}^n_0\}$, denoted by BUMCO$(\alpha,\beta,\gamma,\delta)$. First, recall that recursion \eqref{section:example:bumco:equation2b} is expressed as follows
\begin{align}
\Delta{C}_t=&\big(\mu_1(\beta-1)-\mu_0(\alpha-1)\big)+H(\alpha)-H(\beta)+\log\Big(\frac{1+2^{\mu_1+\Delta{C}_{t+1}}}{1+2^{\mu_0+\Delta{C}_{t+1}}}\Big),~\Delta{C}_{n+1}=0,\label{discussion:equation1}\\
=&f(\alpha,\beta,\mu_0,\mu_1,\Delta{C}_{t+1}),~t\in\{n,\ldots,0\}\nonumber
\end{align}
where
\begin{align*}
\mu_0(\alpha_t,\gamma_t)\longmapsto\mu_0(\alpha,\gamma)=\frac{H(\gamma)-H(\alpha)}{\gamma-\alpha}\equiv{\mu}_0,~~~\mu_1(\beta_t,\delta_t)\longmapsto\mu_1(\beta,\delta)=\frac{H(\beta)-H(\delta)}{\beta-\delta}\equiv\mu_1,~\forall{t}.
\end{align*}
Define $\{\Delta\bar{C}_t=\Delta{C}_{n-t}:~t\in\mathbb{N}_0^{n+1}\}$. Then by \eqref{discussion:equation1} we obtain the following forward recursions
\begin{align}
\Delta\bar{C}_t=&\big(\mu_1(\beta-1)-\mu_0(\alpha-1)\big)+H(\alpha)-H(\beta)+\log\Big(\frac{1+2^{\mu_1+\Delta\bar{C}_{t-1}}}{1+2^{\mu_0+\Delta{C}_{t-1}}}\Big),~\Delta\bar{C}_{-1}=0,~t\in\mathbb{N}_0^n.\label{discussion:equation2}
\end{align}

\noi Since $\Big{|}\frac{\partial}{\partial{\Delta\bar{C}_t}}f(\alpha,\beta,\mu_0,\mu_1,\Delta\bar{C}_{t-1})\Big{|}<1$, then $\lim_{t\longrightarrow\infty}\Delta\bar{C}_t=\Delta\bar{C}^{\infty}\equiv\Delta{C}^{\infty}$, where $\Delta{C}^{\infty}$ satisfies the following algebraic equation.
\begin{align}
\Delta{C}^\infty=&\big(\mu_1(\beta-1)-\mu_0(\alpha-1)\big)+H(\alpha)-H(\beta)+\log\Big(\frac{1+2^{\mu_1+\Delta{C}^{\infty}}}{1+2^{\mu_0+\Delta{C}^{\infty}}}\Big).\label{remark:example:steady:state:updated:factor}
\end{align}
The real solution of the nonlinear equation \eqref{remark:example:steady:state:updated:factor} is 
\begin{align}
\Delta{C}^\infty=\log\Big((2^{\ell_1}-1)+\sqrt{(1-2^{\ell_1})^2+2^{\ell_0+2}}\Big)-\mu_0-1\label{solutions:steady:state:dc}
\end{align}
where
\begin{align*}
\ell_0\equiv\ell_0(\alpha,\beta,\gamma,\delta)\triangleq&\mu_1(\beta-1)-\mu_0(\alpha-2)+H(\alpha)-H(\beta),\\
\ell_1\equiv\ell_1(\alpha,\beta,\gamma,\delta)\triangleq&\mu_1\beta-\mu_0(\alpha-1)+H(\alpha)-H(\beta).
\end{align*}
Hence, by \eqref{solutions:steady:state:dc}, the optimal channel input conditional distribution and the corresponding output transition probability distribution converge asymptotically to the time-invariant transition probabilities given by \eqref{section:example:bumco:time-invariant:equation1}. It remains to show that the channel output transition probability distribution given by \eqref{section:example:bumco:time-invariant:equation1}, has a unique invariant distribution $\{\nu^{{\pi^{*,\infty}}}(y):~y\in\{0,1\}\}$.\\
Solving the equation
\begin{align}
\bordermatrix{&\cr
            &\nu^{{\pi^{*,\infty}}}(0)\cr
            &\nu^{{\pi^{*,\infty}}}(1)\cr}=\bordermatrix{&\cr
            &\nu^{\pi^{*,\infty}}(0|0)&\nu^{\pi^{*,\infty}}(0|1)\cr
            &\nu^{\pi^{*,\infty}}(1|0)&\nu^{\pi^{*,\infty}}(1|1)\cr}\bordermatrix{&\cr
            &\nu^{{\pi^{*,\infty}}}(0)\cr
            &\nu^{{\pi^{*,\infty}}}(1)\cr}\label{remark:section:example:matrix:general:invariant:distribution}
\end{align}
we obtain the unique solution
\begin{align}
\nu^{{\pi^{*,\infty}}}(0)=\frac{1+2^{\mu_0+\Delta{C}^{\infty}}}{1+2^{\mu_0+\mu_1+2\Delta{C}^{\infty}}+2^{\mu_0+1+\Delta{C}^{\infty}}},~\nu^{{\pi^{*,\infty}}}(1)=\frac{2^{\mu_0+\Delta{C}^{\infty}}(1+2^{\mu_1+\Delta{C}^{\infty}})}{1+2^{\mu_0+\mu_1+2\Delta{C}^{\infty}}+2^{\mu_0+1+\Delta{C}^{\infty}}}.\nonumber
\end{align}
Since $\nu^{{\pi^{*,\infty}}}$ is unique, then the feedback capacity of time-invariant BUMCO($\alpha,\beta,\gamma,\delta$) is given by the following expression.
\begin{align}
C^{FB,A.1}=\sum_{y\in\{0,1\}}\bigg(\sum_{x\in\{0,1\},z\in\{0,1\}}\log\Big(\frac{q(z|y,x)}{\nu^{*,\infty}(z|y)}\Big)q(z|y,x)\pi^{*,\infty}(x|y)\bigg)\nu^{{\pi^{*,\infty}}}(y).
\end{align}
After some algebra, we obtain \eqref{introduction:example:ergodic:feedback:capacity:bumco}.

\subsubsection{Numerical evaluations} Fig.~\ref{fig:bsumco:without:cost} depicts numerical simulations of the optimal (nonstationary) channel input conditional distribution and the corresponding channel output transition probability distribution given by \eqref{section:example:bumco:equation1}, for a time-invariant channel $$BUMCO(\alpha_t,\beta_t,\gamma_t,\delta_t)=BUMCO(0.9,0.1,0.2,0.4),$$ for $n=1000$.\\
\noi Fig.~\ref{fig:bumco:ftfi:capacity} depicts the corresponding value of $\frac{1}{n+1}C_{X^n\rightarrow{Y^n}}^{FB,A.1}=\frac{1}{n+1}{\bf E}^{\pi^*}\Big\{\sum_{t=0}^n\log\big(\frac{q(y_t|y_{t-1},x_t)}{\nu^{\pi^*}(y_t|y_{t-1})}\big)\Big\}$ where $\{\pi_t^*(x_t|y_{t-1}):~t=0,1,\ldots,n\}$ is given by \eqref{section:example:bumco:equation1}, for $n=1000$. From Fig.~\ref{fig:bumco:ftfi:capacity}, at $n\approx{1000}$, the characterization of FTFI capacity is $\frac{1}{n+1}C_{X^n\rightarrow{Y^n}}^{FB,A.1}=0.2148~\mbox{bits/channel use}$, while the actual ergodic feedback capacity evaluated from \eqref{introduction:example:ergodic:feedback:capacity:bumco} is $C^{FB,A.1}=0.215~\mbox{bits/channel use}$. \\
Based on our simulations, it is interesting to point out the fact that the optimal channel input conditional distribution and the corresponding channel output transition probability converge to their asymptotic values at $n\approx{400}$, with respect to an error tolerance of $10^{-3}$.

\begin{figure}
    \centering
    \begin{subfigure}[b]{0.49\textwidth}
        \centering
        \includegraphics[width=\textwidth]{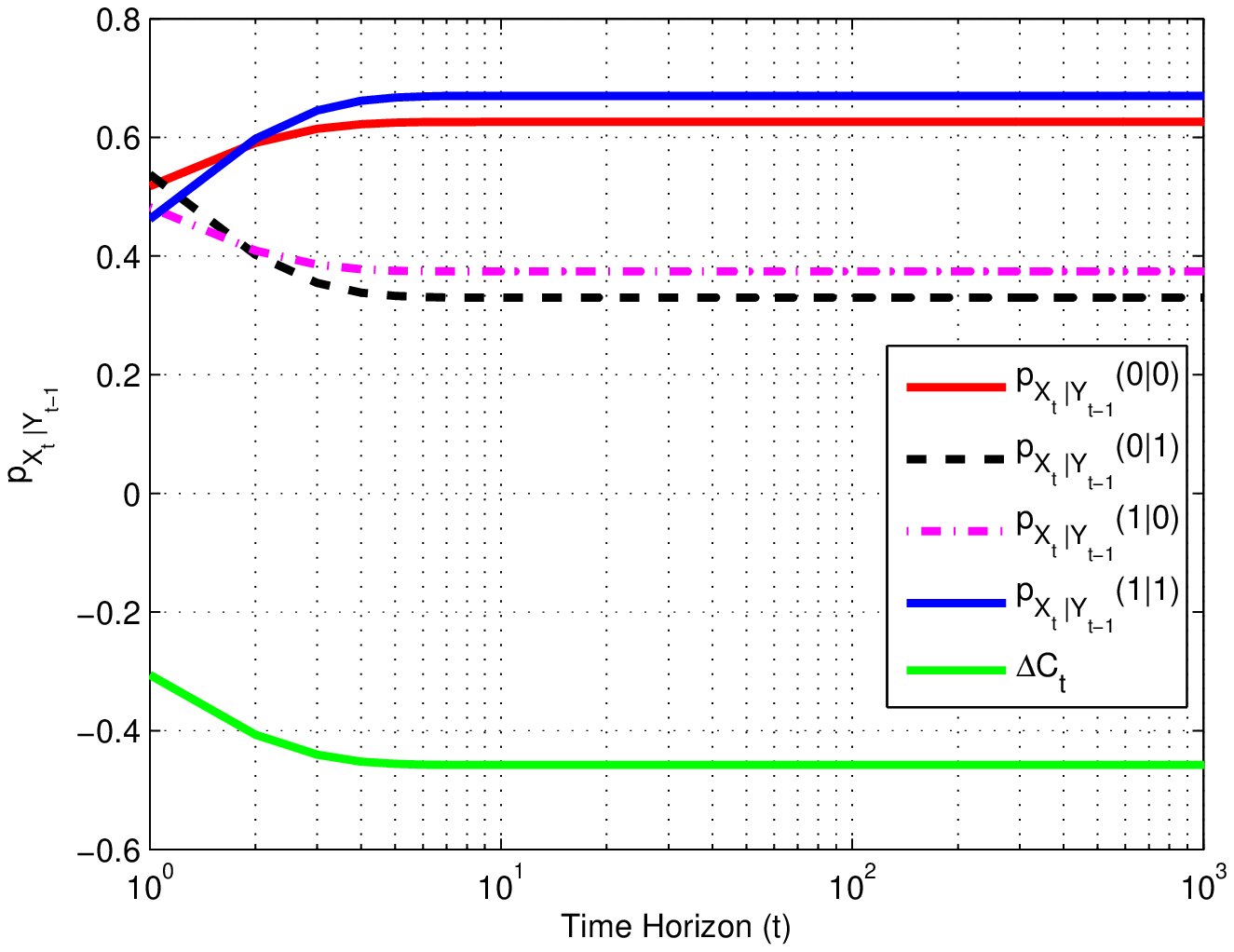}
        \caption{Optimal Distributions $\pi_t^*(x_t|y_{t-1})$ and $\Delta{C}_t$.}\label{fig:bsumco:optimal:input:distribution}
    \end{subfigure}
    \hfill
    \begin{subfigure}[b]{0.49\textwidth}
        \centering
        \includegraphics[width=\textwidth]{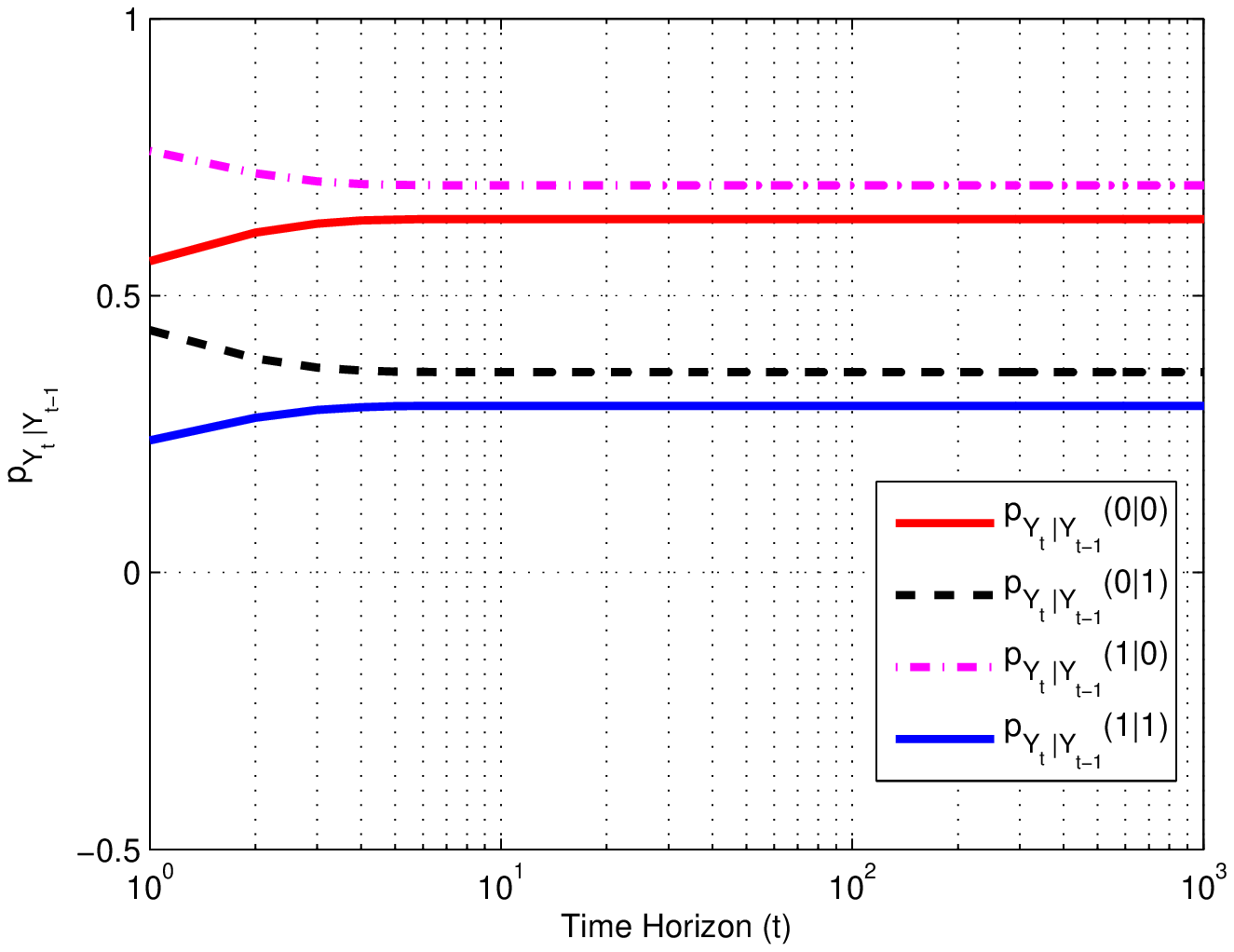}      
        \caption{Optimal Distributions $\nu_t^{\pi^*}(y_t|y_{t-1})$.}\label{fig:bsumco:optimal:output:distribution}
    \end{subfigure}
    \caption{Optimal distributions of $BUMCO(0.9,0.1,0.2,0.4)$ for $n=1000$.}\label{fig:bsumco:without:cost}
\end{figure}
\begin{figure}[h]
\centering
\includegraphics[width=0.49\textwidth]{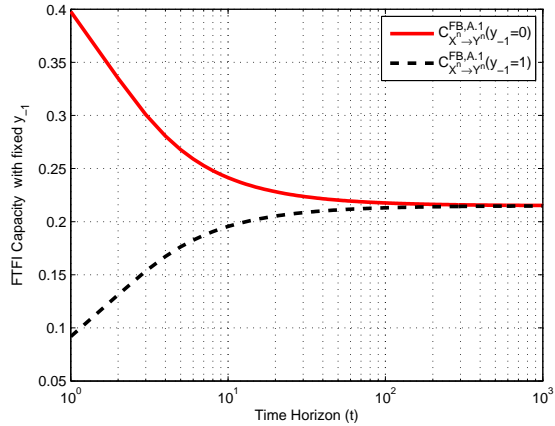}
\caption{ $\frac{1}{n+1}C^{FB,A.1}_{X^n\rightarrow{Y^n}}$ of BUMCO ($0.9$, $0.1$, $0.2$, $0.4$) for $n=1000$ with a choice of the initial distribution ${\bf P}_{Y_{-1}}(y_{-1}=0)=0$ with its complement ${\bf P}_{Y_{-1}}(y_{-1}=1)=1$.}
\label{fig:bumco:ftfi:capacity}
\end{figure}

\subsubsection{Special Cases of Equations \eqref{section:example:bumco:equation1}-\eqref{section:example:bumco:equation2b}}\label{example:special case} Next, we discuss special cases of $BUMCO(\alpha,\beta,\gamma,\delta)$.
\begin{itemize}
\item[$\bullet$] The POST channel investigated in \cite{permuter-asnani-weissman2014ieeeit} corresponds to the degenerated channel BUMCO$(\alpha,1-\beta,\beta,1-\alpha)$. The authors in \cite{permuter-asnani-weissman2014ieeeit} derived the expression of feedback capacity $C^{FB,A.1}$ and the optimal channel output distribution using known expressions of the so called $Z$ and $S$ channels without, however, determining the capacity achieving input distribution.
\item[$\bullet$] The BSCC investigated in \cite{kourtellaris-charalambous2015itw}, corresponds to the degenerated channel BUMCO$(\alpha,\beta,1-\beta,1-\alpha)$. The authors in \cite{kourtellaris-charalambous2015itw} derived the feedback capacity and the corresponding channel input conditional distribution with and without transmission cost constraint, and they have also shown that feedback does not increase the capacity. Our general expressions  \eqref{section:example:bumco:equation1}-\eqref{section:example:bumco:equation2b} give, as degenerated cases, the expressions obtained in \cite{permuter-asnani-weissman2014ieeeit,kourtellaris-charalambous2015itw}.
\item[$\bullet$] For the special case of BUMCO$(\alpha,\alpha,1-\alpha,1-\alpha)$, the channel is memoryless, and the recursive equations \eqref{section:example:bumco:equation1}-\eqref{section:example:bumco:equation2b} degenerate to the well-known results of memoryless Binary Symmetric Channels (BSC), where the optimal channel input distribution is uniform \cite{cover-thomas2006}.
\end{itemize}

%
%
%
%

\subsection{The FTFI Capacity of Time-Varying BUMCO Channel  with Transmission Cost and Feedback Capacity}\label{subsection:applications:ftfi:capacity:bumco:cost}

\par In this subsection, we apply Theorem~\ref{theorem:necessary:sufficient:ftfi:lmco:cost}, for $M=1$ and $N=1$, to derive closed form expressions for the optimal channel input and output distributions of BUMCO given by \eqref{introduction:section:example:matrix:general:bumco}. \\
We consider a transmission cost function $c^{A.1}(x^n,y^{n-1})\triangleq\sum_{t=0}^n\gamma_t(x_t,y_{t-1})$, where 
\begin{align}
 \gamma_t(x_t,y_{t-1})\triangleq\bordermatrix{&0&1\cr
            0&1&0\cr
            1&0&1\cr},\quad~t\in\mathbb{N}_0.\label{section:example:general:bumco:cost:function}
\end{align}  
The optimal solution of the characterization of FTFI capacity is given in the next theorem.

\begin{theorem}(Optimal solution of the characterization of FTFI capacity of time-varying BUMCO with transmission cost)\label{section:example:theorem:bumco:cost}{\ \\}
Consider the BUMCO($\alpha_t$,$\beta_t$,$\gamma_t$,$\delta_t$) defined in \eqref{introduction:section:example:matrix:general:bumco}, when the cost function \eqref{section:example:general:bumco:cost:function} is imposed. 
\begin{itemize}
\item[(a)] The optimal channel input distribution and corresponding channel output transition probability distribution corresponding to $C^{FB,A.1}_{X^n\rightarrow{Y^n}}(\kappa)$, defined by \eqref{characterization:ftfi:capacity:cost:lmco1}, when $\{\pi^*_t(x_t|y_{t-1})\neq{0},~\forall{x_t\in{\cal X}_t},~t\in\mathbb{N}_0^{n}\}$ and $s\geq{0}$, are the following. 
\begin{subequations}
\begin{align}
\pi^*_t(0|0)&=\frac{1-\gamma_t(1+2^{\mu^s_0(t)+\Delta{K}^s_{t+1}})}{(\alpha_t-\gamma_t)(1+2^{\mu^s_0(t)+\Delta{K}^s_{t+1}})},~&~\pi^*_t(0|1)&=\frac{1-\delta_t(1+2^{\mu^s_1(t)+\Delta{K}^s_{t+1}})}{(\beta_t-\delta_t)(1+2^{\mu^s_1(t)+\Delta{K}^s_{t+1}})},\\
\pi^*_t(1|0)&=1-\pi^*_t(0|0),~&~\pi^*_t(1|1)&=1-\pi^*_t(0|1),\\
\nu^{\pi^*}_t(0|0)&=\frac{1}{1+2^{\mu^s_0(t)+\Delta{K}^s_{t+1}}},~&~\nu^{\pi^*}_t(0|1)&=\frac{1}{1+2^{\mu^s_1(t)+\Delta{K}^s_{t+1}}},\\~\nu^{\pi^*}_t(1|0)&=1-\nu^{\pi^*}_t(0|0),~&~\nu^{\pi^*}_t(1|1)&=1-\nu^{\pi^*}_t(0|1)
\end{align}\label{section:example:bumco:cost:equation1}
\end{subequations}
where $\{\Delta{K}^s_t(\alpha_t,\beta_t,\gamma_t,\delta_t,s)\equiv{\Delta}K^s_t\triangleq{K}^s_{t}(0)-K^s_{t}(1):~ t\in\mathbb{N}_0^{n+1}\}$ is the difference of the value functions at each time, satisfying the backward recursions
\begin{subequations}
\begin{align}
\Delta{K}^s_{n+1}&=0\\
\Delta{K}^s_{t}&=\big(\mu^s_1(t)(\beta_t-1)-\mu^s_0(t)(\alpha_t-1)\big)+H(\alpha_t)-H(\beta_t)\nonumber\\
&\qquad\qquad+\log\Big(\frac{1+2^{\mu^s_1(t)+\Delta{K}^s_{t+1}}}{1+2^{\mu^s_0(t)+\Delta{K}^s_{t+1}}}\Big)+s,~t\in\{n,\ldots,0\}.  
\end{align}\label{section:example:bumco:cost:equation2b}
\end{subequations}
and
\begin{align*}
\mu_0(\alpha_t,\gamma_t,s)\triangleq\frac{H(\gamma_t)-H(\alpha_t)-s}{\gamma_t-\alpha_t}\equiv{\mu}^s_0(t),~\mu_1(\beta_t,\delta_t,s)\triangleq\frac{H(\beta_t)-H(\delta_t)-s}{\beta_t-\delta_t}\equiv\mu^s_1(t).
\end{align*}
\item[(b)] The solution of the value functions is given recursively by the following expressions.
\begin{align}
K^s_{t}(0)&=\mu_0(t)(\alpha_{t}-1)+K^s_{t+1}(0)+\log(1+2^{\mu_0(t)+\Delta{K}^s_{t+1}})-H(\alpha_{t}),~K^s_{n+1}(0)=0,\label{introduction:theorem:value:bumco:cost:function:eq1}\\
~K^s_{t}(1)&=\mu_1(t)(\beta_{t}-1)+{K}^s_{t+1}(0)+\log(1+2^{\mu_1(t)+\Delta{K}^s_{t+1}})-H(\beta_{t}),~K^s_{n+1}(1)=0,~t\in\{n,\ldots,0\}.\label{introduction:theorem:value:bumco:cost:function:eq2}
\end{align}
\item[(c)] The characterization of the FTFI capacity is given by
\begin{align*}
C^{FB,A.1}_{X^n\rightarrow{Y^n}}(\kappa)=\inf_{s\geq{0}}\sum_{y_{-1}\in\{0,1\}}\Big(K^s_0(y_{-1})\mu(y_{-1})+(n+1)\kappa\Big),~\mu(y_{-1})~\mbox{is fixed}.
\end{align*}
\end{itemize}
\end{theorem}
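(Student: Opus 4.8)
The plan is to replay, line for line, the backward dynamic-programming derivation of Theorem~\ref{introduction:theorem:bumco:optimal:solutions} carried out in Section~\ref{proof:ftfi:feedback:capacity}, but to invoke the cost-constrained sequential necessary and sufficient conditions of Theorem~\ref{theorem:necessary:sufficient:ftfi:lmco:cost} in place of the unconstrained conditions of Theorem~\ref{introduction:baa:sequential:theorem:lmco:necessary:sufficient}, specialized to $M=N=J=1$, to the BUMCO transition matrix \eqref{introduction:section:example:matrix:general:bumco}, and to the cost matrix \eqref{section:example:general:bumco:cost:function}. First I would parametrize the unknown optimal output transition and input distributions exactly as in \eqref{section:example:matrix:nominal:equation1} by $\{c_0(t),c_1(t),d_0(t),d_1(t)\}$, and let $\Delta K^s_t$ denote the difference of the value functions. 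Since we restrict attention to an interior optimizer with $\pi^*_t(x_t|y_{t-1})\neq 0$ for every $x_t$, only the equality branches \eqref{theorem:ftfi:lmco:cost:necessary:sufficient:equation1} and \eqref{theorem:ftfi:lmco:cost:necessary:sufficient:equation3} of the KKT conditions are active, so the whole computation reduces to solving a sequence of scalar equations for a fixed Lagrange multiplier $s\ge 0$.

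At the terminal stage $t=n$ I would write \eqref{theorem:ftfi:lmco:cost:necessary:sufficient:equation1} for $x_n=0$ and $x_n=1$ at a fixed $y_{n-1}$, using the BUMCO entries. The only departure from the no-cost case is the additive term $-s\gamma_n(x_n,y_{n-1})$; because the cost matrix \eqref{section:example:general:bumco:cost:function} is diagonal, for $y_{n-1}=0$ it contributes $-s$ on the $\alpha_n$-branch $(x_n=0)$ and $0$ on the $\gamma_n$-branch $(x_n=1)$, so equating the two expressions replaces the slope $H(\gamma_n)-H(\alpha_n)$ by $H(\gamma_n)-H(\alpha_n)-s$ and yields $c_0(n)=1/(1+2^{\mu^s_0(n)})$. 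Substituting this back into the output-consistency identity \eqref{section:umco:equation2} returns $\pi^*_n(0|0)$, and the same argument at $y_{n-1}=1$ gives $c_1(n)$ and $\pi^*_n(1|1)$ in terms of $\mu^s_1(n)$; together these reproduce \eqref{section:example:bumco:cost:equation1} at $t=n$. Evaluating the common value of the active bracket gives $K^s_n(0),K^s_n(1)$, hence \eqref{introduction:theorem:value:bumco:cost:function:eq1}, \eqref{introduction:theorem:value:bumco:cost:function:eq2} at $t=n$, and their difference yields \eqref{section:example:bumco:cost:equation2b}. The standalone $+s$ in the $\Delta K^s_t$ recursion is exactly the fingerprint of the diagonal cost: the penalty sits on the $x=0$ letter when $y_{t-1}=0$ but on the $x=1$ letter when $y_{t-1}=1$, so it fails to cancel upon subtracting the two value-function branches and survives as an additive $s$.

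For the inductive step $t=n-k$ I would assume the formulas for $t=n,\ldots,n-k+1$ and apply \eqref{theorem:ftfi:lmco:cost:necessary:sufficient:equation3}, whose integrand now carries both $\log(q_t/\nu^\pi_t)$ and the cost-to-go $K^s_{t+1}$. Since $J=1$, the argument $y^t_{t+1-J}$ of $K^s_{t+1}$ collapses to $y_t$, so the dependence on the future enters solely through $\Delta K^s_{t+1}$, and the stage becomes formally identical to $t=n$ with every exponent $\mu^s_i(t)$ shifted to $\mu^s_i(t)+\Delta K^s_{t+1}$; this propagates \eqref{section:example:bumco:cost:equation1} and the recursion \eqref{section:example:bumco:cost:equation2b} to all $t$. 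Finally, part (c) follows by combining the dual value \eqref{algorithms:generalizations:lmco:equation20} with the identity $C_t(y^{t-1}_{t-J})=K^s_t(y^{t-1}_{t-J})+s(n+1)\kappa$ furnished by Theorem~\ref{theorem:necessary:sufficient:ftfi:lmco:cost}, averaging $K^s_0(y_{-1})$ against the fixed $\mu(y_{-1})$ and reinstating the outer $\inf_{s\ge 0}$ from Lagrange duality. The hard part will be the sign bookkeeping of the Lagrange term $-s\gamma_t(x_t,y_{t-1})$: it must simultaneously shift $\mu_i\mapsto\mu^s_i$ inside every exponent and emerge as the lone $+s$ in the $\Delta K^s_t$ recursion, and I would need to confirm that the postulated interior solution $\pi^*_t\neq 0$ is self-consistent with the inequality branches \eqref{theorem:ftfi:lmco:cost:necessary:sufficient:equation2}, \eqref{theorem:ftfi:lmco:cost:necessary:sufficient:equation4} over the admissible range of $(\alpha_t,\beta_t,\gamma_t,\delta_t,s)$.
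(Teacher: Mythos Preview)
Your proposal is correct and is exactly the approach the paper takes: the paper's own proof reads, in its entirety, ``The derivation is similar to the one of subsubsection~\ref{proof:ftfi:feedback:capacity}, hence we omit it,'' and you have supplied precisely those omitted details---rerunning the backward recursion of Section~\ref{proof:ftfi:feedback:capacity} with Theorem~\ref{theorem:necessary:sufficient:ftfi:lmco:cost} in place of Theorem~\ref{introduction:baa:sequential:theorem:lmco:necessary:sufficient} and tracking how the diagonal cost $-s\gamma_t(x_t,y_{t-1})$ both shifts $\mu_i(t)\mapsto\mu^s_i(t)$ and deposits the residual $+s$ in the $\Delta K^s_t$ recursion. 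Your closing remark on verifying that the interior assumption $\pi^*_t\neq 0$ is consistent with the inequality branches is a reasonable caveat that the paper itself sidesteps by hypothesis.
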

\begin{proof}
The derivation is similar to the one of subsubsection~\ref{proof:ftfi:feedback:capacity}, hence we omit it.
\end{proof}

\noi Next, we comment on the time-invariant version of Theorem~\ref{section:example:theorem:bumco:cost}. 

\subsubsection{Time-Invariant BUMCO with Transmission Cost} Consider the steady state version of \eqref{section:example:bumco:cost:equation2b}, defined by the following algebraic equation.
\begin{align}
\Delta{K}^{s,\infty}=&\big(\mu^s_1(\beta-1)-\mu^s_0(\alpha-1)\big)+H(\alpha)-H(\beta)+s+\log\Big(\frac{1+2^{\mu^s_1+\Delta{K}^{s,\infty}}}{1+2^{\mu^s_0+\Delta{K}^{s,\infty}}}\Big).\label{discussion:example:steady:state:cost:updated:factor}
\end{align}
where
\begin{align*}
\mu^s_0(\alpha_t,\gamma_t)\longmapsto\mu^s_0(\alpha,\gamma)=\frac{H(\gamma)-H(\alpha)}{\gamma-\alpha}\equiv{\mu}^s_0,~~~\mu^s_1(\beta_t,\delta_t)\longmapsto\mu^s_1(\beta,\delta)=\frac{H(\beta)-H(\delta)}{\beta-\delta}\equiv\mu^s_1,~\forall{t}.
\end{align*}
The real solution of the nonlinear equation \eqref{discussion:example:steady:state:cost:updated:factor} is 
\begin{align}
\Delta{K}^{s,\infty}=\log\Big((2^{\ell_1}-1)+\sqrt{(1-2^{\ell_1})^2+2^{\ell_0+2}}\Big)-\mu_0-1\label{solutions:steady:state:dc:cost}
\end{align}
where
\begin{align*}
\ell_0\equiv\ell_0(\alpha,\beta,\gamma,\delta)\triangleq&\mu_1(\beta-1)-\mu_0(\alpha-2)+H(\alpha)-H(\beta)+s,\\
\ell_1\equiv\ell_1(\alpha,\beta,\gamma,\delta)\triangleq&\mu_1\beta-\mu_0(\alpha-1)+H(\alpha)-H(\beta)+s.
\end{align*}
By \eqref{solutions:steady:state:dc:cost}, the optimal time-invariant channel input conditional distribution and the corresponding output transition probability distribution are the following. 
\begin{subequations}
\begin{align}
\pi^{*,\infty}(0|0)&=\frac{1-\gamma(1+2^{\mu^s_0+\Delta{K}^{s,\infty}})}{(\alpha-\gamma)(1+2^{\mu^s_0+\Delta{K}^{s,\infty}})},~&~\pi^{*,\infty}(0|1)&=\frac{1-\delta(1+2^{\mu^s_1+\Delta{K}^{s,\infty}})}{(\beta-\delta)(1+2^{\mu^s_1+\Delta{K}^{s,\infty}})},\\
\pi^{*,\infty}(1|0)&=1-\pi^{*,\infty}(0|0),~&~\pi^{*,\infty}(1|1)&=1-\pi^{*,\infty}(0|1),\\
\nu^{\pi^{*,\infty}}(0|0)&=\frac{1}{1+2^{\mu^s_0+\Delta{K}^{S,\infty}}},~&~\nu^{\pi^{*,\infty}}(0|1)&=\frac{1}{1+2^{\mu^s_1+\Delta{K}^{s,\infty}}},\\
~\nu^{\pi^{*,\infty}}(1|0)&=1-\nu^{\pi^{*,\infty}}(0|0),~&~\nu^{\pi^{*,\infty}}(1|1)&=1-\nu^{\pi^{*,\infty}}(0|1).
\end{align}
\label{remark:section:example:matrix:general:input-output:steady:state:cost}
\end{subequations}
Utilizing the channel output transition probability distribution given by \eqref{remark:section:example:matrix:general:input-output:steady:state:cost}, we obtain the following unique invariant distribution $\{\nu^{{\pi^{*,\infty}}}(y):~y\in\{0,1\}\}$ corresponding to $\{\nu^{{\pi^{*,\infty}}}(z|y):~(z,y)\in\{0,1\}\times\{0,1\}\}$.
\begin{align}
\nu^{{\pi^{*,\infty}}}(0)=\frac{1+2^{\mu^s_0+\Delta{K}^{s,\infty}}}{1+2^{\mu^s_0+\mu^s_1+2\Delta{K}^{s,\infty}}+2^{\mu^s_0+1+\Delta{K}^{s,\infty}}},~\nu^{{\pi^{*,\infty}}}(1)=\frac{2^{\mu^s_0+\Delta{K}^{s,\infty}}(1+2^{\mu^s_1+\Delta{K}^{s,\infty}})}{1+2^{\mu^s_0+\mu^s_1+2\Delta{K}^{s,\infty}}+2^{\mu^s_0+1+\Delta{K}^{s,\infty}}}.\label{unique:invariant:distribution:cost}
\end{align}
The feedback capacity of time-invariant BUMCO($\alpha,\beta,\gamma,\delta$) with transmission cost $\kappa$, is given by the following expression (following \eqref{remark:section:example:matrix:general:input-output:steady:state:cost} and \eqref{unique:invariant:distribution:cost}).
\begin{align}
C^{FB,A.1}(\kappa)=&\nu_0\Big(H(\nu_{0|0})-H(\gamma)\Big)+(1-\nu_0)\Big(H(\nu_{0|1})-H(\delta)\Big)+\xi_0\Big(H(\gamma)-H(\alpha)\Big)\nonumber\\
&+\xi_1\Big(H(\delta)-H(\beta)\Big)\label{remark:section:example:ergodic:feedback:capacity:cost}
\end{align}
where
\begin{align}
\nu_0&=\nu^{{\pi^{*,\infty}}}(0),~~\qquad\qquad\xi_0=\frac{1-\gamma(1+2^{\mu^s_0+\Delta{K}^{s,\infty}})}{(\alpha-\gamma)\big(1+2^{\mu^s_0+\mu^s_1+2\Delta{K}^{s,\infty}}+2^{\mu^s_0+1+\Delta{K}^{s,\infty}}\big)},\nonumber\\
\xi_1&=\frac{2^{\mu^s_0+\Delta{K}^{s,\infty}}\big(1-\delta(1+2^{\mu^s_1+\Delta{K}^{s,\infty}})\big)}{(\beta-\delta)\big(1+2^{\mu^s_0+\mu^s_1+2\Delta{K}^{s,\infty}}+2^{\mu^s_0+1+\Delta{K}^{s,\infty}}\big)},~\quad\nu_{0|0}=\nu^{\pi^{*,\infty}}(0|0),~~~\nu_{0|1}=\nu^{\pi^{*,\infty}}(0|1).\nonumber
\end{align}
Note that by Theorem~\ref{thm-pr_fb}, at $s=0$, $\kappa=\kappa_{max}$, and $C^{FB,A.1}(\kappa)=C^{FB,A.1}$.
Utilizing \eqref{remark:section:example:matrix:general:input-output:steady:state:cost} and \eqref{unique:invariant:distribution:cost} we can find ($s(\kappa),\kappa$) from the following expression.
\begin{align}
&\lim_{n\longrightarrow\infty}\frac{1}{n+1}{\bf E}\big\{\sum_{t=0}^n\gamma(X_t,Y_{t-1})\big\}={\bf E}\big\{\gamma(X_0,Y_{-1})\big\},~(x_0,y_{-1})\in{\cal X}\times{\cal Y}\nonumber\\
&=\frac{1-\gamma(1+2^{\mu^s_0+\Delta{K}^{s,\infty}})}{(\alpha-\gamma)\big(1+2^{\mu^s_0+\mu^s_1+2\Delta{K}^{s,\infty}}+2^{\mu^s_0+1+\Delta{K}^{s,\infty}}\big)}+\frac{2^{\mu^s_0+\Delta{K}^{s,\infty}}\big(\beta(1+2^{\mu^s_1+\Delta{K}^{s,\infty}})-1\big)}{(\beta-\delta)\big(1+2^{\mu^s_0+\mu^s_1+2\Delta{K}^{s,\infty}}+2^{\mu^s_0+1+\Delta{K}^{s,\infty}}\big)}\nonumber\\
&=\kappa,~\kappa\in[0,\kappa_{max}].\nonumber
\end{align}

\subsubsection{Numerical Evaluations} Fig.~\ref{fig:bsumco:cost} depicts numerical simulations of the optimal (nonstationary) channel input conditional distribution and the corresponding channel output transition probability distribution given by \eqref{section:example:bumco:cost:equation1}-\eqref{section:example:bumco:cost:equation2b}, for a time-invariant channel $$BUMCO(\alpha_t,\beta_t,\gamma_t,\delta_t)=BUMCO(0.9,0.1,0.2,0.4)$$, with transmission cost given by \eqref{section:example:general:bumco:cost:function}, $s=0.05$, i.e., $\kappa=0.5992$, for $n=1000$.\\
\noi Fig.~\ref{fig:bumco:ftfi:capacity:cost} depicts the corresponding value of $\frac{1}{n+1}C_{X^n\rightarrow{Y^n}}^{FB,A.1}(\kappa)=\frac{1}{n+1}{\bf E}^{\pi^*}\Big\{\sum_{t=0}^n\log\big(\frac{q(y_t|y_{t-1},x_t)}{\nu^{\pi^*}(y_t|y_{t-1})}\big)\Big\}$, where $\{\pi_t^*(x_t|y_{t-1}):~t=0,1,\ldots,n\}$ is given by \eqref{section:example:bumco:cost:equation1}, for $n=1000$. From Fig.~\ref{fig:bumco:ftfi:capacity}, at $n\approx{1000}$, the constrained FTFI capacity for $s=0.05, \kappa=0.5992$ is $\frac{1}{n+1}C_{X^n\rightarrow{Y^n}}^{FB,A.1}(\kappa)=0.2135~\mbox{bits/channel use}$, while the actual constrained feedback capacity evaluated by \eqref{remark:section:example:ergodic:feedback:capacity:cost} for $s=0.05$ and $\kappa=0.5992$ is $C^{FB,A.1}(\kappa)=0.2137~\mbox{bits/channel use}$.

\begin{figure}
    \centering
    \begin{subfigure}[b]{0.49\textwidth}
        \centering
        \includegraphics[width=\textwidth]{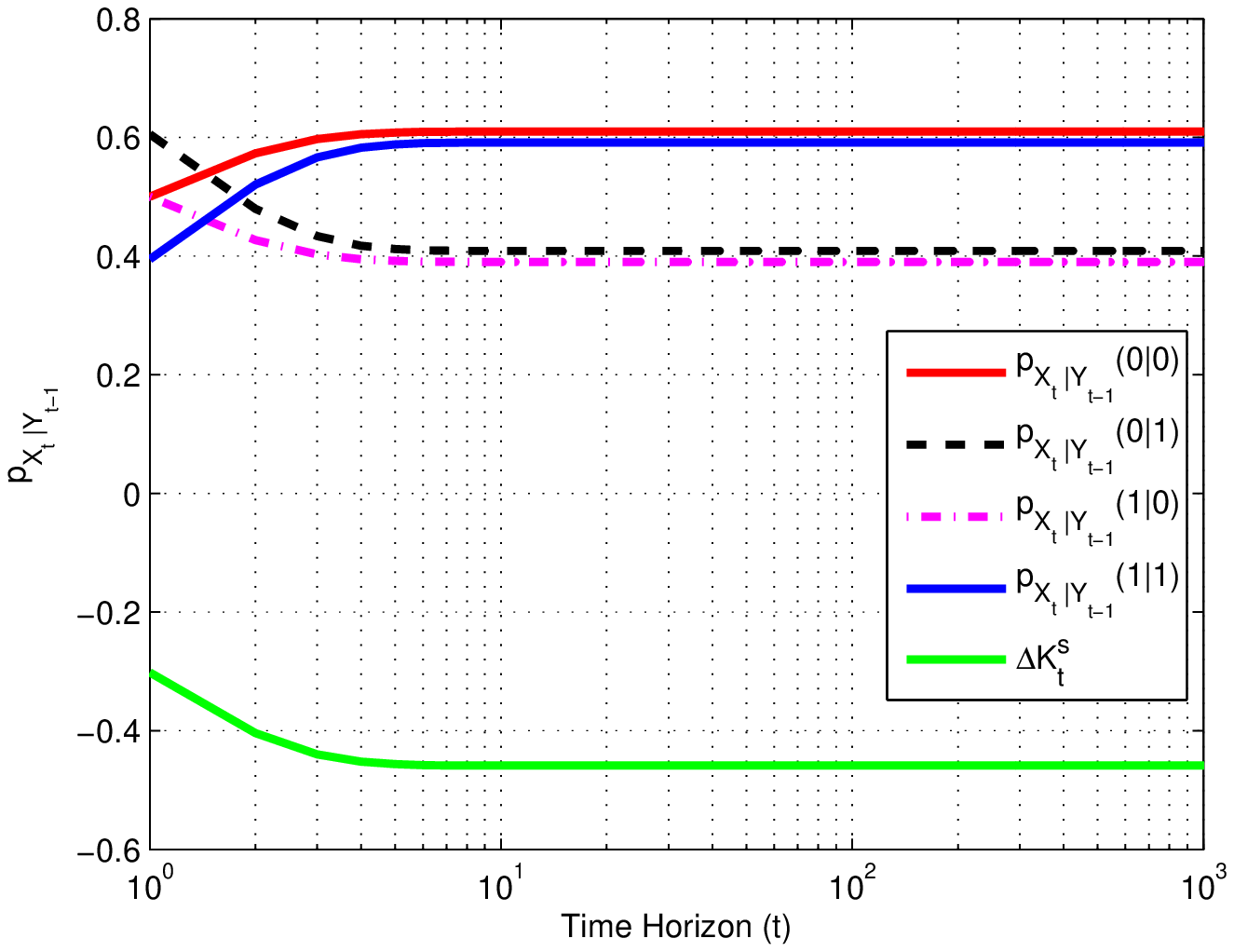}
        \caption{Optimal Distributions $\pi_t^*(x_t|y_{t-1})$ and $\Delta{K}^s_t$.}\label{fig:bsumco:optimal:input:distribution:cost}
    \end{subfigure}
    \hfill
    \begin{subfigure}[b]{0.49\textwidth}
        \centering
        \includegraphics[width=\textwidth]{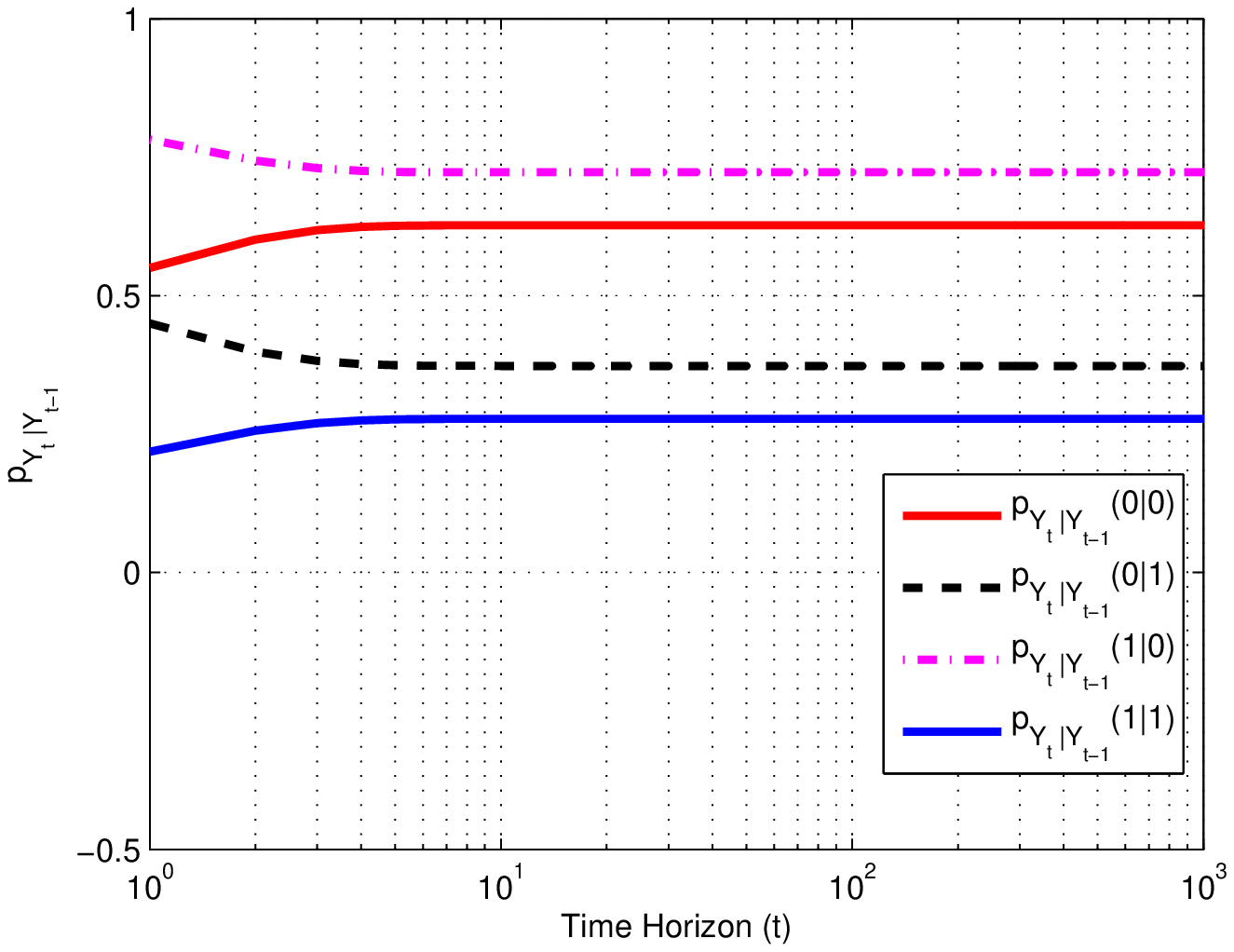}      
        \caption{Optimal Distributions $\nu_t^{\pi^*}(y_t|y_{t-1})$.}\label{fig:bsumco:cost:optimal:output:distribution}
    \end{subfigure}
    \caption{Optimal transition probability distributions of $BUMCO(0.9,0.1,0.2,0.4)$ with transmission cost function given by \eqref{section:example:general:bumco:cost:function}, $s=0.05$, for $n=1000$.}\label{fig:bsumco:cost}
\end{figure}
\begin{figure}[h]
\centering
\includegraphics[width=0.49\textwidth]{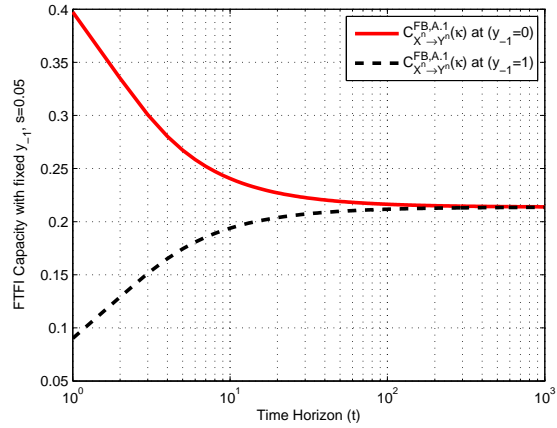}
\caption{$\frac{1}{n+1}C^{FB,A.1}_{X^n\rightarrow{Y^n}}(\kappa)$ of BUMCO ($0.9$, $0.1$, $0.2$, $0.4$), $s=0.05$, $\kappa=0.5992$, for $n=1000$ with a choice of the initial distribution ${\bf P}_{Y_{-1}}(y_{-1}=0)=0$ with its complement ${\bf P}_{Y_{-1}}(y_{-1}=1)=1$.}
\label{fig:bumco:ftfi:capacity:cost}
\end{figure}

%
%
%

\subsection{The FTFI Capacity of Time-Varying BEUMCO}\label{subsection:applications:ftfi:capacity:beumco}

\par In this subsection, we apply Theorem~\ref{introduction:baa:sequential:theorem:lmco:necessary:sufficient}, for $M=1$, to derive closed form expressions for the optimal channel input conditional distribution and the corresponding output transition probability distribution of time-varying $\{BEUMCO(\alpha_t,\gamma_t,\beta_t):t~\in\mathbb{N}_0^n\}$ channel defined by
\begin{align}
q_t(dy_t|y_{t-1},x_t)=\bordermatrix{&0,0&e,0&1,0&0,1&e,1&1,1\cr
            0&\alpha_t&\gamma_t&\beta_t&0&0&0\cr
            e&1-\alpha_t&1-\gamma_t&1-\beta_t&1-\alpha_t&1-\gamma_t&1-\beta_t\cr
            1&0&0&0&\alpha_t&\gamma_t&\beta_t\cr},~ \alpha_t, \beta_t, \gamma_t\in[0,1].\label{introduction:example:conjecture:matrix:beumco}
\end{align}
\noi The results given in the next theorem, state that feedback does not increase the FTFI capacity of this channel.

\begin{theorem}(Optimal solution of the characterization of FTFI capacity of time-varying BEMCO)\label{section:example:theorem:beumco}{\ \\}
Consider the $\{$BEUMCO$(\alpha_t$, $\gamma_t,\beta_t):~t\in\mathbb{N}_0^n\}$ defined in \eqref{introduction:example:conjecture:matrix:beumco}.
\begin{itemize}
\item[(a)] The optimal channel input conditional distribution and the corresponding output transition probability distribution of the characterization of FTFI capacity $C^{FB,A.1}_{X^n\rightarrow{Y^n}}$, i.e., \eqref{introduction:cor-ISR_B.2} with $M=1$,  when $\{{\pi}^*_{t}(x_t|y_{t-1})\neq{0},~\forall{x_t\in{\cal X}_t},~t\in\mathbb{N}_0^n\}$, are given by the following expressions. 
\begin{subequations}\label{introduction:example:matrix:beumco:input-output}
\begin{align}
&{\pi}^*_{t}(x_t|y_{t-1})\equiv{\pi}^*_{t}(x_t)=\bordermatrix{&\cr
            0&{\pi}^*_{t}(0)\cr
            1&{\pi}^*_{t}(1)\cr}, \forall{y_{t-1}\in{\cal Y}_{t-1}},~t\in\mathbb{N}_0^n,\label{theorem:conjecture:example:beumco:equation1}\\
&\nu_{t}^{\pi^*}(y_t|y_{t-1})=\bordermatrix{&0&e&1\cr
            0&\nu_{t}^{\pi^*}(0|0)&\nu_{t}^{\pi^*}(0|e)&\nu_{t}^{\pi^*}(0|1)\cr
            e&\nu_{t}^{\pi^*}(e|0)&\nu_{t}^{\pi^*}(e|e)&\nu_{t}^{\pi^*}(e|1)\cr
            1&\nu_{t}^{\pi^*}(1|0)&\nu_{t}^{\pi^*}(1|e)&\nu_{t}^{\pi^*}(1|1)\cr},~t\in\mathbb{N}_0^n\label{theorem:conjecture:example:beumco:equation1a}
\end{align}
\end{subequations} 
where
\begin{subequations}\label{introduction:conjecture:example:beumco:equation2a}
\begin{align}
{\pi}^*_{t}(0)&=\frac{2^{\Delta{C}^1_{t+1}}}{1+2^{\Delta{C}^1_{t+1}}},~&~{\pi}^*_{t}(1)&=\frac{1}{1+2^{\Delta{C}^1_{t+1}}},\\
\nu_{t}^{\pi^*}(0|0)&=\frac{\alpha_t2^{\Delta{C}^1_{t+1}}}{1+2^{\Delta{C}^1_{t+1}}},~&~\nu_{t}^{\pi^*}(0|e)&=\frac{\gamma_t2^{\Delta{C}^1_{t+1}}}{1+2^{\Delta{C}^1_{t+1}}},~&~\nu_{t}^{\pi^*}(0|1)&=\frac{\beta_t2^{\Delta{C}^1_{t+1}}}{1+2^{\Delta{C}^1_{t+1}}},\\
\nu_{t}^{\pi^*}(e|0)&=1-\alpha_t,~&~\nu_{t}^{\pi^*}(e|e)&=1-\gamma_t,~&~\nu_{t}^{\pi^*}(e|1)&=1-\beta_t,\\
\nu_{t}^{\pi^*}(1|0)&=\frac{\alpha_t}{1+2^{\Delta{C}^1_{t+1}}},~&~\nu_{t}^{\pi^*}(1|e)&=\frac{\gamma_t}{1+2^{\Delta{C}^1_{t+1}}},~&~\nu_{t}^{\pi^*}(1|1)&=\frac{\beta_t}{1+2^{\Delta{C}^1_{t+1}}}
\end{align}
\end{subequations}
and $\{\Delta{C}^1_t(\alpha_t,\gamma_t,\beta_t)\equiv{\Delta}C^1_t\triangleq{C}_t(0)-C_t(1):~t\in\mathbb{N}_0^{n+1}\}$ is the difference of the value functions $\{{C}_t(0),~C_t(1):~t\in\mathbb{N}_0^{n+1}\}$ at each time, satisfying the following backward recursions.
\begin{align}
\Delta{C}^1_{t}=(\alpha_t-\beta_t)\Big(\Delta{C}^2_{t+1}+\log\big(1+2^{\Delta{C}^1_{t+1}}\big)\Big),~\Delta{C}^1_{n+1}=0,~t\in\{n,\ldots,0\},\label{introduction:conjecture:example:beumco:equation2}
\end{align}
with $\{\Delta{C}^2_t(\alpha_t,\gamma_t,\beta_t)\equiv{\Delta}C^2_t\triangleq{C}_t(1)-C_t(e):~t\in\mathbb{N}_0^{n+1}\}$ is the difference of the value functions $\{{C}_t(1),~C_t(e):~t\in\mathbb{N}_0^{n+1}\}$ at each time, satisfying the following backward recursions
\begin{align}
\Delta{C}^2_{t}=(\beta_t-\gamma_t)\Big(\Delta{C}^2_{t+1}+\log\big(1+2^{\Delta{C}^1_{t+1}}\big)\Big),~\Delta{C}^2_{n+1}=0,~t\in\{n,\ldots,0\}.  \label{introduction:conjecture:example:beumco:equation3}
\end{align}
\item[(b)] The solution of the value functions is given recursively by the following expressions.
\begin{align}
C_{t}(0)&=\alpha_tC_{t+1}(1)+(1-\alpha_t)C_{t+1}(e)+\alpha_t\log(1+2^{\Delta{C}^1_{t+1}})-H(\alpha_{t}),~C_{n+1}(0)=0,\label{introduction:theorem:beumco:value:function:eq1}\\
C_{t}(e)&=\gamma_tC_{t+1}(1)+(1-\gamma_t)C_{t+1}(e)+\gamma_t\log(1+2^{\Delta{C}^1_{t+1}})-H(\alpha_{t}),~C_{n+1}(e)=0,\label{introduction:theorem:beumco:value:function:eq2}\\
~C_{t}(1)&=\beta_tC_{t+1}(1)+(1-\beta_t)C_{t+1}(e)+\beta_t\log(1+2^{\Delta{C}^1_{t+1}})-H(\alpha_{t}),~C_{n+1}(1)=0,~t\in\{n,\ldots,0\}.\label{introduction:theorem:beumco:value:function:eq3}
\end{align}
\item[(c)] The characterization of the FTFI capacity is given by
\begin{align*}
C^{FB,A.1}_{X^n\rightarrow{Y^n}}=\sum_{y_{-1}\in\{0,e,1\}}C_0(y_{-1})\mu(y_{-1}),~\mu(y_{-1})~\mbox{is fixed}.
\end{align*}
\end{itemize}
\end{theorem}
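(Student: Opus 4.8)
The plan is to mirror the backward-in-time derivation carried out for the BUMCO channel in Section~\ref{subsection:applications:ftfi:capacity:bumco} (subsubsection~\ref{proof:ftfi:feedback:capacity}), applying the sequential necessary and sufficient conditions of Theorem~\ref{introduction:baa:sequential:theorem:lmco:necessary:sufficient} with $M=1$. First I would treat the base case $t=n$, establish the claimed expressions for $\pi^*_n$, $\nu^{\pi^*}_n$ and the value functions $C_n(\cdot)$, and then close the argument by an induction on the reverse time index whose inductive step is structurally identical to the base step. The unknowns at each time $t$ are the entries of the conditional output law $\nu^{\pi^*}_t(\cdot|y_{t-1})$ and the input law $\pi^*_t(\cdot|y_{t-1})$ for each $y_{t-1}\in\{0,e,1\}$.

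For the base step I would write the optimality condition \eqref{introduction:baa:sequential:theorem:lmco:necessary:sufficient:3} (with $C_{n+1}=0$) for each fixed $y_{t-1}\in\{0,e,1\}$ and each $x_t\in\{0,1\}$, reading the transition probabilities off the matrix \eqref{introduction:example:conjecture:matrix:beumco}. Two structural features of the channel drive the computation: the erasure column satisfies $q_t(e|y_{t-1},0)=q_t(e|y_{t-1},1)$, so that $\nu^{\pi^*}_t(e|y_{t-1})$ equals this common value and the associated logarithmic term vanishes identically; and the off-diagonal entries $q_t(1|y_{t-1},0)=q_t(0|y_{t-1},1)=0$ restrict the reachable outputs to $\{0,e\}$ when $x_t=0$ and to $\{e,1\}$ when $x_t=1$. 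Equating the condition for $x_t=0$ with that for $x_t=1$ and cancelling the common erasure contribution then collapses everything to the single ratio identity $\nu^{\pi^*}_t(1|y_{t-1})/\nu^{\pi^*}_t(0|y_{t-1})=2^{-\Delta C^1_{t+1}}$, where $\Delta C^1_{t+1}=C_{t+1}(0)-C_{t+1}(1)$.

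Next I would recover the input law from $\nu^{\pi^*}_t(y_t|y_{t-1})=\sum_{x_t}q_t(y_t|y_{t-1},x_t)\pi^*_t(x_t|y_{t-1})$. Because the nonzero entries of the $0$-output and $1$-output rows both carry the \emph{same} $y_{t-1}$-dependent success factor ($\alpha_t$, $\gamma_t$, or $\beta_t$), that factor cancels in the ratio and the identity above becomes $\pi^*_t(1|y_{t-1})/\pi^*_t(0|y_{t-1})=2^{-\Delta C^1_{t+1}}$, which is \emph{independent of} $y_{t-1}$. This is exactly the assertion that feedback does not increase the FTFI capacity: normalising yields $\pi^*_t(x_t|y_{t-1})=\pi^*_t(x_t)$ as in \eqref{introduction:conjecture:example:beumco:equation2a}, and back-substitution returns the stated entries of $\nu^{\pi^*}_t$. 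Evaluating the optimal pay-off at these distributions produces $C_t(0)$, $C_t(e)$, $C_t(1)$; forming the two differences $\Delta C^1_t=C_t(0)-C_t(1)$ and $\Delta C^2_t=C_t(1)-C_t(e)$ cancels the common additive terms and yields the coupled backward recursions \eqref{introduction:conjecture:example:beumco:equation2}, \eqref{introduction:conjecture:example:beumco:equation3}, with terminal data $\Delta C^1_{n+1}=\Delta C^2_{n+1}=0$ inherited from the dynamic-programming boundary. Part~(c) is then immediate, since $C^{FB,A.1}_{X^n\rightarrow Y^n}=\sum_{y_{-1}}C_0(y_{-1})\mu(y_{-1})$ is the value function at $t=0$ averaged over the fixed initial law.

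The main obstacle is the three-symbol output alphabet, which separates this argument from the BUMCO case in two respects. First, one must verify carefully that the erasure symbol contributes nothing to the optimality conditions and that the surviving $y_{t-1}$-dependent factors cancel in precisely the right way to render the ratio $y_{t-1}$-free; this cancellation is what makes the feedback-independence claim true and is the delicate point of the whole computation. Second, the value-function recursion now couples all three quantities $C_t(0)$, $C_t(e)$, $C_t(1)$, so a single scalar difference no longer suffices: one needs the pair $(\Delta C^1_t,\Delta C^2_t)$ and must check that their recursions close among themselves, which they do because the $\log(1+2^{\Delta C^1_{t+1}})$ contribution enters each value function with coefficient equal to the relevant success probability ($\alpha_t$, $\gamma_t$, $\beta_t$) and hence survives the differencing with the clean coefficients $(\alpha_t-\beta_t)$ and $(\beta_t-\gamma_t)$.
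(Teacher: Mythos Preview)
Your proposal is correct and follows exactly the approach the paper indicates: the paper's own proof consists solely of the sentence ``The derivation is similar to the one of subsubsection~\ref{proof:ftfi:feedback:capacity}, hence we omit it,'' and your plan is precisely to replay that BUMCO-style backward recursion via Theorem~\ref{introduction:baa:sequential:theorem:lmco:necessary:sufficient}. You have in fact supplied more detail than the paper does, correctly isolating the two structural features of \eqref{introduction:example:conjecture:matrix:beumco} that make the computation work --- the $x_t$-independence of the erasure probability $q_t(e|y_{t-1},\cdot)$ and the vanishing cross terms $q_t(1|y_{t-1},0)=q_t(0|y_{t-1},1)=0$ --- and correctly explaining why the surviving ratio $\nu^{\pi^*}_t(1|y_{t-1})/\nu^{\pi^*}_t(0|y_{t-1})=2^{-\Delta C^1_{t+1}}$ is $y_{t-1}$-free, which is the crux of the feedback-independence claim.
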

\begin{proof}
The derivation is similar to the one of subsubsection~\ref{proof:ftfi:feedback:capacity}, hence we omit it.
\end{proof}

For Theorem~\ref{section:example:theorem:beumco}, \eqref{theorem:conjecture:example:beumco:equation1}, it follows that feedback does not increase the characterization of FTFI capacity, and consequently feedback capacity.

\subsubsection{Time-Invariant BEUMCO}\label{subsubsection:time-invariant:beumco}  Here, we discuss the results of Theorem~\ref{section:example:theorem:beumco}, when the channel is time-invariant, i.e., $BEUMCO(\alpha_t,\gamma_t,\beta_t)=BEUMCO(\alpha,\gamma,\beta)$. The steady state versions of \eqref{introduction:conjecture:example:beumco:equation2}, \eqref{introduction:conjecture:example:beumco:equation3}, are defined by the following algebraic equations.
\begin{align}
\Delta{C}^{1,\infty}=&(\alpha-\beta)\Big(\Delta{C}^{2,\infty}+\log\big(1+2^{\Delta{C}^{1,\infty}}\big)\Big)\label{section:steady:state:dc1:beumco:equation1}\\
\Delta{C}^{2,\infty}=&(\beta-\gamma)\Big(\Delta{C}^{2,\infty}+\log\big(1+2^{\Delta{C}^{1,\infty}}\big)\Big).  \label{section:steady:state:dc2:beumco:equation1}
\end{align}
After some algebra, it can be shown that the solutions of the nonlinear equation \eqref{section:steady:state:dc1:beumco:equation1} is given by 
\begin{align}
\Delta{C}^{1,\infty}=\Big({\frac{\alpha-\beta}{1-(\beta-\gamma)}}\Big)\log(1+2^{\Delta{C}^{1,\infty}}).\label{section:steady:state:beumco:equation1}
\end{align}
Moreover, the time-invariant versions of \eqref{theorem:conjecture:example:beumco:equation1}-\eqref{theorem:conjecture:example:beumco:equation1a} denoted by ${\pi}_t^{*}(x_t)\equiv{\pi}^{*,\infty}(x_t)$ and $\nu_t^{\pi^{*}}(y_t|y_{t-1})\equiv\nu^{\pi^{*,\infty}}(y_t|y_{t-1})$, are given as follows. 
\begin{subequations}
\begin{align}
{\pi}^{*,\infty}(0)&=\frac{2^{\Delta{C}^{1,\infty}}}{1+2^{\Delta{C}^{1,\infty}}},~&~{\pi}^{*,\infty}(1)&=1-{\pi}^{*,\infty}(0),\label{section:steady-state:beumco:input-output}\\
\nu^{\pi^{*,\infty}}(0|0)&=\frac{\alpha2^{\Delta{C}^{1,\infty}}}{1+2^{\Delta{C}^{1,\infty}}},~&~\nu^{\pi^{*,\infty}}(0|e)&=\frac{\gamma2^{\Delta{C}^{1,\infty}}}{1+2^{\Delta{C}^{1,\infty}}},~&~\nu^{\pi^{*,\infty}}(0|1)&=\frac{\beta2^{\Delta{C}^{1,\infty}}}{1+2^{\Delta{C}^{1,\infty}}},\label{section:steady-state:beumco:input-output1}\\
\nu^{\pi^{*,\infty}}(e|0)&=1-\alpha,~&~\nu^{\pi^{*,\infty}}(e|e)&=1-\gamma,~&~\nu^{\pi^{*,\infty}}(e|1)&=1-\beta,\label{section:steady-state:beumco:input-output2}\\
\nu^{\pi^{*,\infty}}(1|0)&=\frac{\alpha}{1+2^{\Delta{C}^{1,\infty}}},~&~\nu^{\pi^{*,\infty}}(1|e)&=\frac{\gamma}{1+2^{\Delta{C}^{1,\infty}}},~&~\nu^{\pi^{*,\infty}}(1|1)&=\frac{\beta}{1+2^{\Delta{C}^{1,\infty}}}\label{section:steady-state:beumco:input-output3}
.\end{align}
\end{subequations}
It can be shown that the channel output transition probability distribution given by \eqref{section:steady-state:beumco:input-output1}-\eqref{section:steady-state:beumco:input-output3}, has a unique invariant distribution $\{\nu^{{\pi^{*,\infty}}}(y):~y\in\{0,e,1\}\}$ given by
\begin{align}
\nu^{{\pi^{*,\infty}}}(0)=&\frac{\gamma2^{\Delta{C}^{1,\infty}}}{1-(\beta-\gamma)+2^{\Delta{C}^{1,\infty}}(1-\alpha+\gamma)},~\nu^{{\pi^{*,\infty}}}(e)=\frac{1-\beta+2^{\Delta{C}^{1,\infty}}(1-\alpha)}{1-(\beta-\gamma)+2^{\Delta{C}^{1,\infty}}(1-\alpha+\gamma)},\nonumber\\
\nu^{{\pi^{*,\infty}}}(1)=&\frac{\gamma}{1-(\beta-\gamma)+2^{\Delta{C}^{1,\infty}}(1-\alpha+\gamma)}.\nonumber
\end{align}
Hence, the feedback capacity of time-invariant $BEUMCO(\alpha,\gamma,\beta$) is given by the following expression.
\begin{align}
C^{FB,A.1}=\sum_{y\in\{0,e,1\}}\bigg(\sum_{x\in\{0,1\},z\in\{0,e,1\}}\log\Big(\frac{q(z|y,x)}{\nu^{*,\infty}(z|y)}\Big)q(z|y,x)\pi^{*,\infty}(x|y)\bigg)\nu^{{\pi^{*,\infty}}}(y).
\end{align}
After some algebra, we obtain the following
\begin{align}
C^{FB,A.1}=(1-\nu_e)\log(1+2^{\Delta{C}^{1,\infty}})-\nu_0\Delta{C}^{1,\infty}\label{section:beumco:ergodic:feedback:capacity}
\end{align}
where
\begin{align}
\nu_e=\nu^{{\pi^{*,\infty}}}(e),~~~\nu_0=\nu^{{\pi^{*,\infty}}}(0).\nonumber
\end{align}

\subsubsection{Numerical evaluations} Fig.~\ref{fig:beumco:without:cost} depicts numerical simulations of the optimal (nonstationary) channel input conditional distribution and the corresponding channel output transition probability distribution given by \eqref{section:steady-state:beumco:input-output1}-\eqref{section:steady-state:beumco:input-output3}, for a time-invariant channel $BEUMCO(\alpha,\gamma,\beta)=BEUMCO(0.95,0.6,0.8)$, for $n=1000$.\\
\noi Fig.~\ref{fig:beumco:ftfi:capacity} depicts the corresponding value of $\frac{1}{n+1}C_{X^n\rightarrow{Y^n}}^{FB,A.1}=\frac{1}{n+1}{\bf E}^{\pi^*}\Big\{\sum_{t=0}^n\log\big(\frac{q(y_t|y_{t-1},x_t)}{\nu^{\pi^*}(y_t|y_{t-1})}\big)\Big\}$, where $\{\pi_t^*(x_t|y_{t-1})\equiv\pi_t^*(x_t):~t=0,1,\ldots,n\}$ is given by \eqref{section:steady-state:beumco:input-output1}-\eqref{section:steady-state:beumco:input-output3}, for $n=1000$. From Fig.~\ref{fig:beumco:ftfi:capacity}, at $n\approx{1000}$, the FTFI capacity is $\frac{1}{n+1}C_{X^n\rightarrow{Y^n}}^{FB,A.1}=0.8306~\mbox{bits/channel use}$, while the actual ergodic feedback capacity evaluated from \eqref{section:beumco:ergodic:feedback:capacity} is $C^{FB,A.1}=0.8307~\mbox{bits/channel use}$. \\
Based on our simulations, it is interesting to note that the optimal channel input conditional distribution and the corresponding channel output transition probability converge to their asymptotic limits at $n\approx{6}$, with respect to an error tolerance of $10^{-4}$.

\begin{figure}
    \centering
    \begin{subfigure}[b]{0.49\textwidth}
        \centering
        \includegraphics[width=\textwidth]{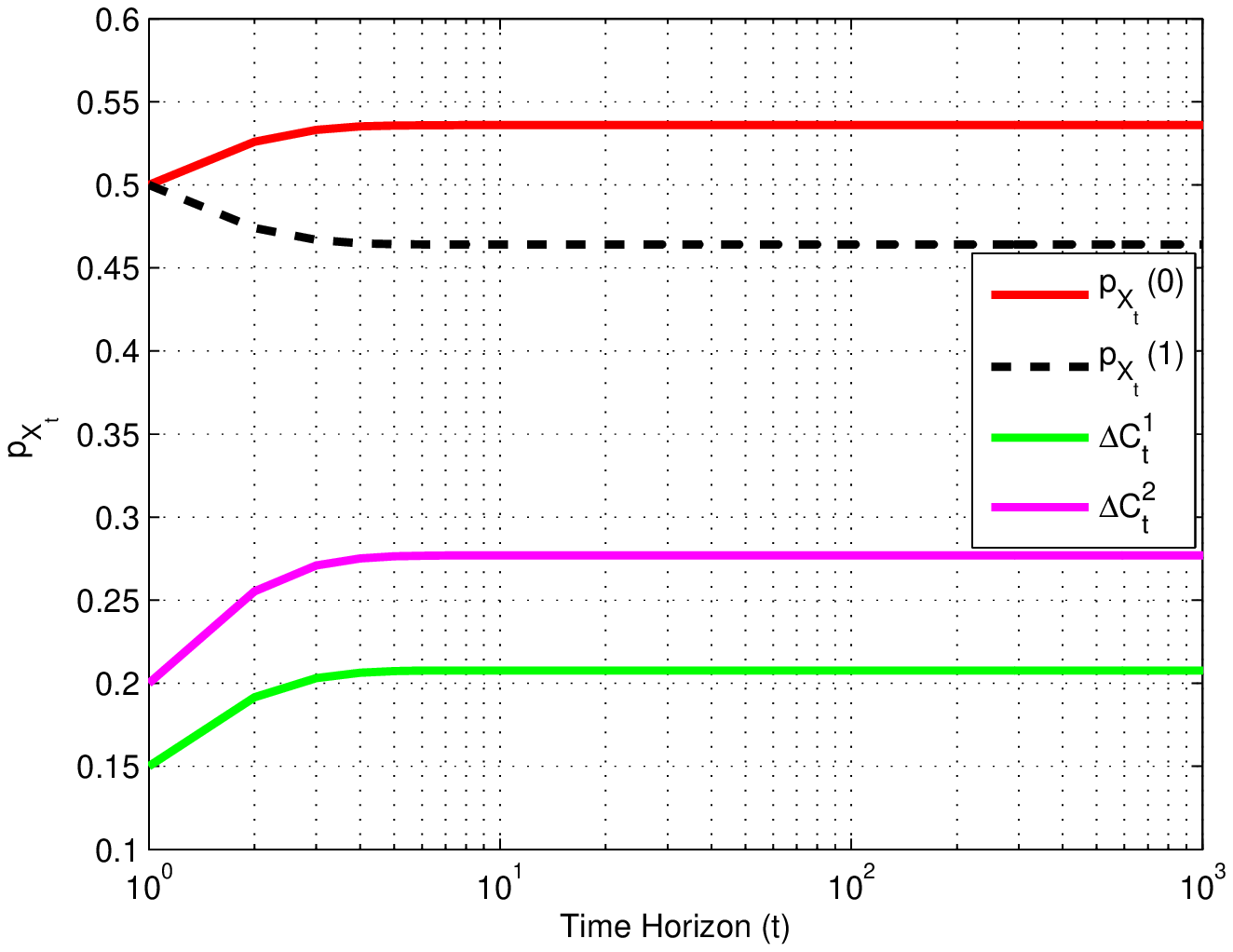}
        \caption{Optimal Distributions $\pi_t^*(x_t|y_{t-1})\equiv\pi_t^*(x_t)$ and $\Delta{C}^1_t, \Delta{C}^2_t$.}\label{fig:beumco:optimal:input:distribution}
    \end{subfigure}
    \hfill
    \begin{subfigure}[b]{0.49\textwidth}
        \centering
        \includegraphics[width=\textwidth]{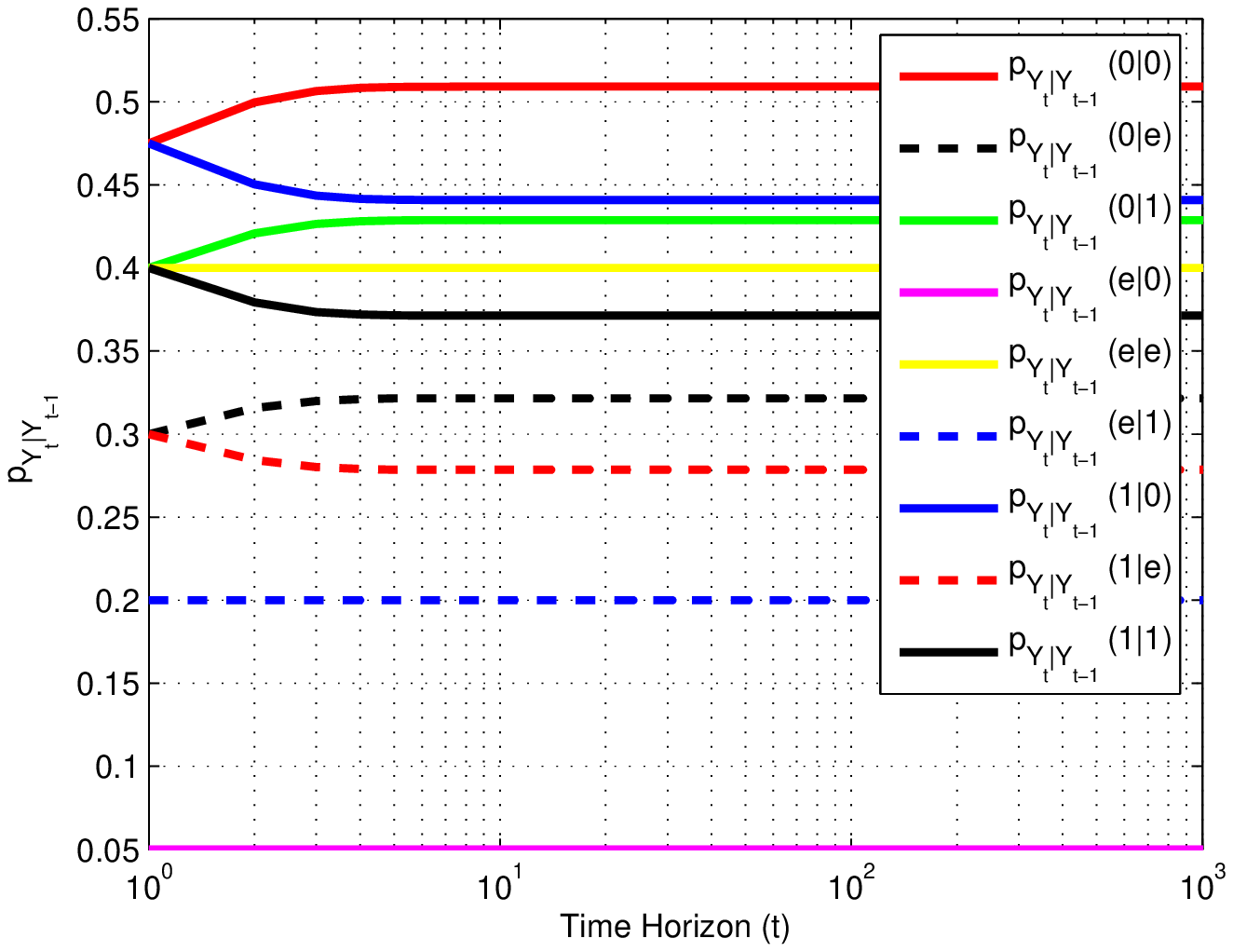}      
        \caption{Optimal Distributions $\nu_t^{\pi^*}(y_t|y_{t-1})$.}\label{fig:beumco:optimal:output:distribution}
    \end{subfigure}
    \caption{Optimal transition probability distributions of $BEUMCO(0.95,0.6,0.8)$ for $n=1000$.}\label{fig:beumco:without:cost}
\end{figure}
\begin{figure}[h]
\centering
\includegraphics[width=0.49\textwidth]{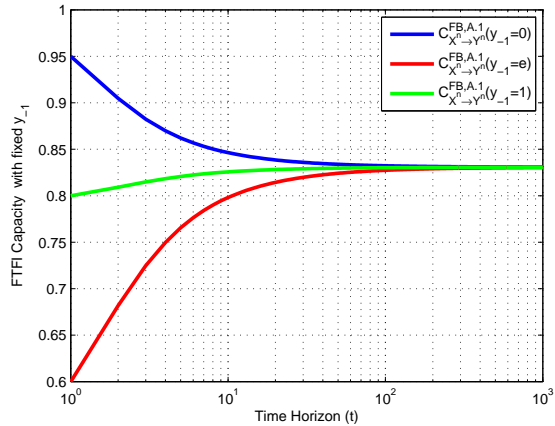}
\caption{ $\frac{1}{n+1}C^{FB,A.1}_{X^n\rightarrow{Y^n}}$ of BEUMCO ($0.95$, $0.6$, $0.8$) for $n=1000$ with a choice of the initial distribution ${\bf P}_{Y_{-1}}(y_{-1}=0)=1$ with its complements ${\bf P}_{Y_{-1}}(y_{-1}=e)=0$ ${\bf P}_{Y_{-1}}(y_{-1}=1)=0$.}
\label{fig:beumco:ftfi:capacity}
\end{figure}

\subsubsection{Special Cases of Theorem~\ref{section:example:theorem:beumco}}\label{example:beumco:special case} Next, we discuss certain degenerated cases.
\begin{itemize}
\item[$\bullet$] For the time-invariant channel $BEUMCO(1-\alpha,\gamma,1-\alpha)$, by \eqref{section:steady-state:beumco:input-output} the optimal channel input conditional distribution is uniform, the corresponding output transition probability distribution is stationary, and the ergodic feedback capacity is equal to the corresponding no-feedback capacity given by 
\begin{align}
C^{NFB,A.1}=C^{FB,A.1}=\frac{\gamma}{\alpha+\gamma}.\label{example:beumco:special:case}
\end{align}
\item[$\bullet$] For the channel $BEUMCO(1-\alpha,1-\alpha,1-\alpha)$, the channel is memoryless, and it degenerates to the well-known memoryless Binary Erasure Channel (BEC), where the optimal channel input distribution is uniform \cite{cover-thomas2006}. This follows from \eqref{example:beumco:special:case}, by setting $\gamma=1-\alpha$.
\end{itemize}

%
%
%

\subsection{The FTFI Capacity of Time-Varying BSTMCO}\label{subsection:applications:ftfi:capacity:bstmco}

\par In this subsection, we apply Theorem~\ref{introduction:baa:sequential:theorem:lmco:necessary:sufficient}, for $M=2$, to derive closed form expressions for the optimal channel input conditional distribution and the corresponding channel output transition probability distribution of the time-varying $\{BSTMCO(\alpha_t,\beta_t,\gamma_t,\delta_t):~t\in\mathbb{N}_0^n\}$ channel defined by 
\begin{align}
q_t(dy_t|y_{t-1},y_{t-2},x_t)=&\bordermatrix{&0,0,0&0,0,1&0,1,0&0,1,1&1,0,0&1,0,1&1,1,0&1,1,1\cr
            0&\alpha_t&\beta_t&\gamma_t&\delta_t&1-\delta_t&1-\gamma_t&1-\beta_t&1-\alpha_t\cr
            1&1-\alpha_t&1-\beta_t&1-\gamma_t&1-\delta_t&\delta_t&\gamma_t&\beta_t&\alpha_t\cr}, \label{introduction:section:example:matrix:general:bstmco} \\
            &~~\hspace*{5cm}\alpha_t, \beta_t, \gamma_t, \delta_t\in[0,1],~t=0,\ldots,n.\nonumber 
\end{align}
The results are given in the next theorem.
\begin{theorem}(Optimal solution of the characterization of time-varying BSTMCO)\label{section:example:theorem:bstmco}{\ \\}
Consider the $\{BSTMCO(\alpha_t,\beta_t,\gamma_t,\delta_t):~t\in\mathbb{N}_0^n\}$ defined in \eqref{introduction:section:example:matrix:general:bstmco}. Then the following hold.
\begin{itemize}
\item[(a)] The optimal channel input distribution and the corresponding channel output transition probability distribution, of the characterization of $C^{FB,A.2}_{X^n\rightarrow{Y^n}}$, i.e., \eqref{introduction:cor-ISR_B.2} with $M=2$,  denoted by  $\big\{{\pi}_t^*(x_t|y_{t-1},y_{t-2}):  (x_t, y_{t-1},y_{t-2})\in \{0,1\} \times \{0,1\}\times\{0,1\}, t\in\mathbb{N}_0^n\big\},~\big\{ {\nu}_t^{\pi^*}(y_t|y_{t-1},y_{t-2}):  (y_t, y_{t-1},y_{t-2})\in \{0,1\} \times \{0,1\}\times \{0,1\}, t\in\mathbb{N}_0^n\big\}$ are the following.
\begin{subequations}
\begin{align}
&{\pi}^*_t(0|0,0)={\pi}^*_t(1|1,1)=\frac{1-\beta_t(1+2^{\mu_0(t)+\Delta{C}_{t+1}})}{(\alpha_t-\beta_t)(1+2^{\mu_0(t)+\Delta{C}_{t+1}})},\\
&{\pi}^*_t(0|0,1)={\pi}^*_t(1|1,0)=\frac{1-\delta_t(1+2^{\mu_1(t)+\Delta{C}_{t+1}})}{(\gamma_t-\delta_t)(1+2^{\mu_1(t)+\Delta{C}_{t+1}})},\\
&{\pi}^*_t(0|1,0)={\pi}^*_t(1|0,1)=\frac{\gamma_t(1+2^{\mu_1(t)+\Delta{C}_{t+1}})-1}{(\gamma_t-\delta_t)(1+2^{\mu_1(t)+\Delta{C}_{t+1}})},\\ 
&{\pi}^*_t(0|1,1)={\pi}^*_t(1|0,0)=\frac{\alpha_t(1+2^{\mu_0(t)+\Delta{C}_{t+1})-1}}{(\alpha_t-\beta_t)(1+2^{\mu_0(t)+\Delta{C}_{t+1}})},\\
\nu_t^{\pi^*}(0|0,0)&=\nu_t^{\pi^*}(1|1,1)=\frac{1}{1+2^{\mu_0(t)+\Delta{C}_{t+1}}},~~\nu_t^{\pi^*}(0|0,1)=\nu_t^{\pi^*}(1|1,0)=\frac{1}{1+2^{\mu_1(t)+\Delta{C}_{t+1}}},\\
\nu_t^{\pi^*}(1|0,0)&=\nu_t^{\pi^*}(0|1,1)=\frac{2^{\mu_0(t)+\Delta{C}_{t+1}}}{1+2^{\mu_0(t)+\Delta{C}_{t+1}}},~~\nu_t^{\pi^*}(1|0,1)=\nu_t^{\pi^*}(0|1,0)=\frac{2^{\mu_1(t)+\Delta{C}_{t+1}}}{1+2^{\mu_1(t)+\Delta{C}_{t+1}}},\\
\mu_0(\alpha_t,\beta_t)&=\frac{H(\beta_t)-H(\alpha_t)}{\beta_t-\alpha_t}\equiv{\mu}_0(t),~~~\mu_1(\gamma_t,\delta_t)=\frac{H(\delta_t)-H(\gamma_t)}{\delta_t-\gamma_t}\equiv\mu_1(t),\end{align}\label{section:example:bstmco:equation1}
\end{subequations}
$\{\Delta{C}_t(\alpha_t,\beta_t,\gamma_t,\delta_t)\equiv{\Delta}C_t\triangleq{C}_t(1,1)-C_t(0,1):~t\in\mathbb{N}_0^{n+1}\}$ satisfies the following backward recursions.
\begin{subequations}
\begin{align}
\Delta{C}_{n+1}=&0,\\
\Delta{C}_{t}=&\big(\mu_1(t)(\gamma_t-1)-\mu_0(t)(\alpha_t-1)\big)+H(\alpha_t)-H(\gamma_t)\nonumber\\
&\qquad\qquad+\log\Big(\frac{1+2^{\mu_1(t)+\Delta{C}_{t+1}}}{1+2^{\mu_0(t)+\Delta{C}_{t+1}}}\Big),~~t\in\{n,\ldots,0\}.  
\end{align}\label{section:example:bstmco:equation2b}
\end{subequations}
\item[(b)] The solution of the value function is given recursively by the following expressions.
\begin{align}
&C_{t}(1,1)=C_t(0,0)=\mu_0(t)(\alpha_{t}-1)+C_{t+1}(0,0)+\log(1+2^{\mu_0(t)+\Delta{C}_{t+1}})\nonumber\\
&\hspace*{5cm}-H(\alpha_{t}),~C_{n+1}(1,1)=C_{n+1}(0,0)=0,\label{theorem:value:function:bstmco:eq1}\\
~&C_{t}(0,1)=C_t(1,0)=\mu_1(t)(\beta_{t}-1)+{C}_{t+1}(0,0)+\log(1+2^{\mu_1(t)+\Delta{C}_{t+1}})\nonumber\\
&\hspace*{5cm}-H(\beta_{t}),~C_{n+1}(0,1)=C_{n+1}(1,0)=0,~t\in\{n,\ldots,0\}.\label{theorem:value:function:bstmco:eq2}
\end{align}
\item[(c)] The characterization of the FTFI capacity is given by
\begin{align*}
C^{FB,A.2}_{X^n\rightarrow{Y^n}}=\sum_{y_{-1}\in\{0,1\},y_{-2}\in\{0,1\}}C_t(y^{-1}_{-2})\mu(y_{-2}^{-1}),~\mu(y_{-2}^{-1})~\mbox{is fixed}.
\end{align*}
\end{itemize}
\end{theorem}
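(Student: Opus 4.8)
The plan is to replicate the backward dynamic programming argument of subsubsection~\ref{proof:ftfi:feedback:capacity}, now invoking Theorem~\ref{introduction:baa:sequential:theorem:lmco:necessary:sufficient} with $M=2$. Here the value function $C_t(y_{t-1},y_{t-2})$ lives on the four states $(y_{t-1},y_{t-2})\in\{0,1\}^2$, and the recursion \eqref{introduction:algorithms:generalizations:lmco:equation19} couples a state to the \emph{shifted} future state $(y_t,y_{t-1})$ through the term $C_{t+1}(y_t,y_{t-1})$. First I would fix a state and write out the stationarity conditions \eqref{introduction:baa:sequential:theorem:lmco:necessary:sufficient:3}--\eqref{introduction:baa:sequential:theorem:lmco:necessary:sufficient:4} for $x_t=0$ and $x_t=1$, after parametrizing the unknown kernels $\nu_t^{\pi^*}(\cdot|y_{t-1},y_{t-2})$ and $\pi_t^*(\cdot|y_{t-1},y_{t-2})$ by scalars, analogously to \eqref{section:example:matrix:nominal:equation1}.

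The decisive structural observation, and what makes the two-memory channel reduce to the same scalar recursion as the one-memory BUMCO channel, is the bit-flip symmetry of \eqref{introduction:section:example:matrix:general:bstmco}, namely $q_t(y_t|y_{t-1},y_{t-2},x_t)=q_t(1-y_t|1-y_{t-1},1-y_{t-2},1-x_t)$. Since the objective in \eqref{introduction:cor-ISR_B.2} is invariant under the simultaneous flip of inputs and outputs and is concave in the input kernel by Lemma~\ref{conv-DI}, averaging any maximizer with its flipped counterpart yields a flip-invariant maximizer; I would therefore search among flip-invariant kernels. Invariance forces $C_t(0,0)=C_t(1,1)$ and $C_t(0,1)=C_t(1,0)$, together with the identifications of $\pi_t^*$ and $\nu_t^{\pi^*}$ recorded in \eqref{section:example:bstmco:equation1}, so that the four-state value function is pinned down by the two representatives $C_t(0,0)$ and $C_t(0,1)$, and their gap is the single scalar $\Delta C_t\triangleq C_t(1,1)-C_t(0,1)$.

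With the symmetry in place I would run the induction. At $t=n$ (where $C_{n+1}\equiv 0$), equating the $x_n=0$ and $x_n=1$ instances of \eqref{introduction:baa:sequential:theorem:lmco:necessary:sufficient:1} to the common value $C_n(y_{n-1},y_{n-2})$ and subtracting eliminates $C_n$; inserting the channel parameters $(\alpha_n,\beta_n,\gamma_n,\delta_n)$ then produces $\nu_n^{\pi^*}$ in terms of $\mu_0(n),\mu_1(n)$, and back-substitution into $\nu_n^{\pi^*}=\sum_{x_n}q_n\pi_n^*$ recovers $\pi_n^*$. For a generic stage $t<n$ the only new feature is the shifted term $C_{t+1}(y_t,y_{t-1})$: the aged coordinate $y_{t-1}$ is frozen by the current state while $y_t$ is summed, so the entire influence of the future on the stationarity condition is the gap $C_{t+1}(1,y_{t-1})-C_{t+1}(0,y_{t-1})$, which by the symmetry pairings collapses to $\pm\Delta C_{t+1}$. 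Absorbing this gap shifts the exponents $\mu_0(t),\mu_1(t)$ by $\Delta C_{t+1}$, reproducing the closed forms \eqref{section:example:bstmco:equation1} and the scalar recursion \eqref{section:example:bstmco:equation2b}; re-inserting the optimizers into the recursion gives the value functions \eqref{theorem:value:function:bstmco:eq1}--\eqref{theorem:value:function:bstmco:eq2}, and averaging $C_0$ against the fixed $\mu(y_{-2}^{-1})$ yields part (c).

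The step I expect to be the main obstacle is confirming that flip-invariance genuinely propagates backward through the shift $(y_{t-1},y_{t-2})\mapsto(y_t,y_{t-1})$: one must check that a stage-$(t+1)$ solution obeying the pairings forces a stage-$t$ solution obeying the same pairings, so that the four coupled fixed-point conditions close consistently into the two-branch scalar recursion for $\Delta C_t$. The delicate part is the sign bookkeeping, because the shift mixes the two symmetry classes and the sign with which $\Delta C_{t+1}$ enters depends on the frozen coordinate $y_{t-1}$; this coupling is precisely what distinguishes the $M=2$ computation from the $M=1$ BUMCO case and must be tracked carefully across all four states before the single scalar recursion can be asserted.
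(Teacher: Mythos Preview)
Your proposal is correct and follows essentially the same approach as the paper, which simply states that the derivation is similar to that of subsubsection~\ref{proof:ftfi:feedback:capacity} (the BUMCO computation) and omits it. Your explicit use of the bit-flip symmetry $q_t(y_t|y_{t-1},y_{t-2},x_t)=q_t(1-y_t|1-y_{t-1},1-y_{t-2},1-x_t)$ to collapse the four-state value function to two representatives is exactly the mechanism behind the pairings $C_t(0,0)=C_t(1,1)$, $C_t(0,1)=C_t(1,0)$ that the paper records without justification.
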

\begin{proof}
The derivation is similar to the one of subsubsection~\ref{proof:ftfi:feedback:capacity}, hence we omit it.
\end{proof}

\subsubsection{Discussion on Theorem~\ref{section:example:theorem:bstmco}} 
Theorem~\ref{section:example:theorem:bstmco} illustrates that the channel symmetry, when $y_{t-2}=0$ or $y_{t-2}=1$,~$t\in\mathbb{N}^n_0$, imposes a symmetry on the structure of the optimal channel input conditional distribution. 

\begin{remark}(Discussion of the results){\ \\}
 Next, we make some observations regarding the results obtained in subsection~\ref{subsection:applications:ftfi:capacity:bumco} and in subsection~\ref{subsection:applications:ftfi:capacity:beumco}. \\
If  $card({\cal X})=T$ and $card({\cal Y})=S$, where $T, S \geq 3$ then it is very hard and sometimes impossible to find closed form expressions for the optimal channel input distributions corresponding to $C^{FB,A.M}_{X^n\rightarrow{Y^n}}$. However, the necessary and sufficient conditions of Theorem~\ref{theorem:necessary:sufficient:ftfi:lmco:cost} are simplified considerably, when the channel distribution has certain symmetry similar to the one in Theorem~\ref{section:example:theorem:bstmco}, and for such channels closed form expressions are expected.
\end{remark}

%
%
%

\section{Generalizations to Abstract Alphabet Spaces}\label{generalizations:abstract:alphabets}

\par The theorems of Section~\ref{section:generalizations:lmco} extend to abstract alphabet spaces (i.e., countable, continuous alphabets etc.). However, for these  extensions to hold,  it is necessary to impose sufficient conditions related to the existence of an optimal channel input conditional distribution, G\^ateaux differentiability of directed information functional, and continuity with respect to channel input conditional distribution. 

\noi Below, we state sufficient conditions for Theorem~ \ref{theorem:necessary:sufficient:ftfi:lmco:cost} to hold on abstract alphabet spaces.
\begin{itemize}
\item[(C1)] $\{X_t:~t\in\mathbb{N}_0\}$, $\{Y_t:~t\in\mathbb{N}_0\}$ are complete separable metric spaces.
\item[(C2)] The directed information functional $\mathbb{I}_{X^n\rightarrow{Y^n}}(\overleftarrow{P}_{0,n},\overrightarrow{Q}_{0,n})$  (see \eqref{eqdi5}) is continuous on $\overleftarrow{P}_{0,n}(\cdot|y^{n-1})\in{\cal M}({\cal X}^n)$ for a fixed $\overrightarrow{Q}_{0,n}(\cdot|x^n)\in{\cal M}({\cal Y}^n)$.
\item[(C3)] There exist an optimal input distribution $\overleftarrow{P}^*_{0,n}(\cdot|y^{n-1})\in{\cal M}({\cal X}^n)$, which achieves the supremum of directed information.
\item[(C4)] The value function $\{C_{t}(y^{t-1}_{t-J}):~t\in\mathbb{N}_0^n\}$ is G\^ateaux differentiable with respect to $\{\pi_t(dx_t|y^{t-1}_{t-J}):~t\in\mathbb{N}_0^n\}$.
\end{itemize}
General theorems for the validity of (C2) and (C3) are derived in \cite{charalambous-stavrou2015ieeeit}.

%
%
%

\subsection{Channels of Class A and Transmission Cost of Class A}\label{subsection:extensions}
\noi Let $C_t: {\cal Y}^{t-1}_{t-J}\longmapsto[0,\infty)$ represent the maximum expected total pay-off in (\ref{characterization:ftfi:capacity:cost:lmco1}) on the future time horizon $\{t,t+1,\ldots,n\}$, given $Y^{t-1}_{t-J}=y^{t-1}_{t-J}$ at time $t-1$, defined by 
\begin{align}
&C_t(y^{t-1}_{t-J})=\sup_{\big\{\pi_i(dx_i|y^{i-1}_{i-J}):~i=t,t+1,\ldots,n\big\}}{\bf E}^{\pi}\bigg\{\sum_{i=t}^n\log\Big(\frac{dq_i(\cdot|y^{i-1}_{i-M},X_i)}{d{\nu}^{\pi}_{i}(\cdot|y^{i-1}_{i-J})}(Y_i)\Big)\nonumber\\
&\qquad-s\Big(\sum_{i=t}^n\gamma_i(x_i,y^{i-1}_{i-N})-(n+1)\kappa\Big)\Big{|}Y^{t-1}_{t-J}=y^{t-1}_{t-J}\bigg\}\label{algorithms:abstract:generalizations:lmco:equation17}
\end{align}

\noi By \eqref{algorithms:abstract:generalizations:lmco:equation17} we obtain the following dynamic programming recursions.
\begin{align}
&C_n(y^{n-1}_{n-J})=\sup_{\pi_n(dx_n|y^{t-1}_{t-J})}\Bigg\{\int_{{\cal X}_n\times{\cal Y}_n}\log\Big(\frac{dq_n(\cdot|y^{n-1}_{n-M},x_n)}{d{\nu}^{\pi}_{n}(\cdot|y^{n-1}_{n-J})}(y_n)\Big)q_n(dy_n|y^{n-1}_{n-M},x_n)\otimes{\pi}_n(dx_n|y^{n-1}_{n-J})\nonumber\\
&-s\Big(\int_{{\cal X}_n}\gamma_n(x_n,y^{n-1}_{n-N})\pi_n(dx_n|y^{n-1}_{n-J})-(n+1)\kappa\Big)\Bigg\},\label{algorithms:abstract:generalizations:lmco:equation18}\\
&C_t(y^{t-1}_{t-J})=\sup_{\pi_t(dx_t|y^{t-1}_{t-J})}\Bigg\{\int_{{\cal X}_t\times{\cal Y}_t}\Big(\log\big(\frac{dq_t(\cdot|y^{t-1}_{t-M},x_t)}{d{\nu}^{\pi}_{t}(\cdot|y^{t-1}_{t-J})}(y_t)\big)+C_{t+1}(y^t_{t+1-J})\Big)\nonumber\\
&q_t(dy_t|y^{t-1}_{t-M},x_t)\otimes{\pi}_t(dx_t|y^{t-1}_{t-J})-s\Big(\int_{{\cal X}_t}\gamma_t(x_t,y^{t-1}_{t-N})\pi_t(dx_t|y^{t-1}_{t-J})-(n+1)\kappa\Big)\Bigg\},~t\in\mathbb{N}_0^{n-1}.\label{algorithms:abstract:generalizations:lmco:equation19}
\end{align}

\noi Then, we have the following generalization of Theorem~\ref{theorem:necessary:sufficient:ftfi:lmco:cost} on abstract alphabets. 

\begin{theorem}(Sequential necessary and sufficient conditions on abstract spaces)\label{theorem:abstract:necessary:sufficient:ftfi:lmco:cost}{\ \\}
Suppose conditions (C1)-(C4) hold. The necessary and sufficient conditions for any input distribution $\{\pi_{t}(dx_t|y^{t-1}_{t-J}):~t\in\mathbb{N}_0^{n}\}$,~$J=\max\{M,N\}$, to achieve the supremum of the characterization of FTFI capacity given by \eqref{characterization:ftfi:capacity:cost:lmco1} are the following.\\
\noi{(a)}  For each $y^{n-1}_{n-J}\in{\cal Y}^{n-1}_{n-J}$, there exist a ${K}^s_n(y^{n-1}_{n-J})$, which depends on $s\geq{0}$, such that the following hold.
\begin{align}
&\int_{{\cal Y}_{n}}\Big(\log\big(\frac{dq_n(\cdot|y^{n-1}_{n-M},x_n)}{d\nu^{\pi}_{t}(\cdot|y^{n-1}_{n-J})}(y_n)\big)\Big)q_n(dy_n|y^{n-1}_{n-M},x_n)\nonumber\\
&\qquad\qquad-s\gamma_n(x_n,y^{n-1}_{n-N})={K}^s_n(y^{n-1}_{n-J}),~\forall{x_n},~\mbox{if}~\pi_n(dx_n|y^{n-1}_{n-J})\neq{0},\label{theorem:abstract:ftfi:lmco:cost:necessary:sufficient:equation1}\\
&\int_{{\cal Y}_{n}}\Big(\log\big(\frac{dq_n(\cdot|y^{n-1}_{n-M},x_n)}{d\nu^{\pi}_{n}(\cdot|y^{n-1}_{n-J})}(y_n)\big)\Big)q_n(dy_n|y^{n-1}_{n-M},x_n)\nonumber\\
&\qquad\qquad-s\gamma_n(x_n,y^{n-1}_{n-N})\leq{K}^s_n(y^{n-1}_{n-J}),~\forall{x_n},~\mbox{if}~\pi_n(dx_n|y^{n-1}_{n-J})={0}.\label{theorem:abstract:ftfi:lmco:cost:necessary:sufficient:equation2}
\end{align}
Moreover, $C_t(y^{t-1}_{t-J})={K}^s_n(y^{n-1}_{n-J})+s(n+1)\kappa$ corresponds to the value function $C_t(y^{t-1}_{t-J})$, defined by \eqref{algorithms:abstract:generalizations:lmco:equation17}, evaluated at $t=n$.\\
\noi{(b)} For each $t$, $y^{t-1}_{t-J}\in{\cal Y}^{t-1}_{t-J}$, there exist a ${K}^s_t(y^{t-1}_{t-J})$, which depends on $s\geq{0}$, such that the following hold. \begin{align}
&\int_{{\cal Y}_{t}}\Big(\log\big(\frac{dq_t(\cdot|y^{t-1}_{t-M},x_t)}{d\nu^{\pi}_{t}(\cdot|y^{t-1}_{t-J})}(y_t)\big)+K^s_{t+1}(y^{t}_{t+1-J})\Big)q_t(dy_t|y^{t-1}_{t-M},x_t)\nonumber\\
&\qquad\qquad-s\gamma_t(x_t,y^{t-1}_{t-N})=K^s_t(y^{t-1}_{t-J}),~\forall{x_t},~\mbox{if}~\pi_t(dx_t|y^{t-1}_{t-J})\neq{0},\label{theorem:abstract:ftfi:lmco:cost:necessary:sufficient:equation3}\\
&\int_{{\cal Y}_{t}}\Big(\log\big(\frac{dq_t(\cdot|y^{t-1}_{t-M},x_t)}{d\nu^{\pi}_{t}(\cdot|y^{t-1}_{t-J})}(y_t)\big)+K^s_{t+1}(y^{t}_{t+1-J})\Big)q_t(dy_t|y^{t-1}_{t-M},x_t)\nonumber\\
&\qquad\qquad-s\gamma_t(x_t,y^{t-1}_{t-N})\leq{K}^s_t(y^{t-1}_{t-J}),~\forall{x_t},~\mbox{if}~\pi_t(dx_t|y^{t-1}_{t-J})={0}\label{theorem:abstract:ftfi:lmco:cost:necessary:sufficient:equation4}
\end{align}
for $t=n-1,\ldots,0$. Moreover, $C_t(y^{t-1}_{t-J})={K}^s_t(y^{t-1}_{t-J})+s(n+1)\kappa$ corresponds to the value function $C_t(y^{t-1}_{t-J})$, defined by \eqref{algorithms:abstract:generalizations:lmco:equation17}, evaluated at $t=n-1,\ldots,0$.
\end{theorem}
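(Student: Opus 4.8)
The plan is to mirror the finite-alphabet argument behind Theorem~\ref{theorem:necessary:sufficient:ftfi:lmco:cost}, replacing the finite-dimensional Karush--Kuhn--Tucker conditions by the first-order (variational) optimality condition for G\^ateaux-differentiable concave functionals over convex subsets of probability measures \cite{dluenberger1969}. I would proceed by backward induction on $t$ along the abstract dynamic programming recursions \eqref{algorithms:abstract:generalizations:lmco:equation18}, \eqref{algorithms:abstract:generalizations:lmco:equation19}, which are legitimate by the usual measurable-selection form of dynamic programming once the stage problems are well posed. Fixing $y^{t-1}_{t-J}$ and regarding $C_{t+1}$ as a fixed bounded measurable function of $y^t_{t+1-J}$, Lemma~\ref{conv-DI}(d) specialized to a single stage shows that the stage-$t$ objective is concave in $\pi_t(dx_t|y^{t-1}_{t-J})\in{\cal M}({\cal X}_t)$. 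Condition (C3) guarantees that the supremum is attained at some $\pi_t^*$, condition (C2) guarantees the value functions $C_t$ defined by \eqref{algorithms:abstract:generalizations:lmco:equation17} are finite and the recursion closes, and (C1) supplies the measurable structure (regular conditional distributions, disintegration) needed to write $\nu_t^{\pi}(dy_t|y^{t-1}_{t-J})=\int_{{\cal X}_t}q_t(dy_t|y^{t-1}_{t-M},x_t)\pi_t(dx_t|y^{t-1}_{t-J})$.

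The key step is the G\^ateaux derivative of the stage-$t$ objective at $\pi_t^*$ in an admissible direction $\bar{\pi}_t-\pi_t^*$ with $\bar{\pi}_t\in{\cal M}({\cal X}_t)$ arbitrary; this is exactly what (C4) licenses. Along $\pi_t^\epsilon=\pi_t^*+\epsilon(\bar{\pi}_t-\pi_t^*)$ the contributions of $\log q_t$, of $C_{t+1}$, and of $s\gamma_t$ are affine in $\pi_t$, so their derivatives are obtained by substituting $\bar{\pi}_t-\pi_t^*$ for $\pi_t^*$. The only nonlinear term is $-\log\nu_t^{\pi}$; differentiating its inner dependence on $\pi_t$ produces $-\int_{{\cal Y}_t}\frac{\partial_\epsilon\nu_t^{\pi^\epsilon}}{\nu_t^{\pi^*}}\,\nu_t^{\pi^*}(dy_t)=-\int_{{\cal Y}_t}\partial_\epsilon\nu_t^{\pi^\epsilon}(dy_t)\big|_{\epsilon=0}=0$, since $\nu_t^{\pi^\epsilon}$ is a probability measure for every $\epsilon$. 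This cancellation---the same one underlying the double-maximization characterization of Theorem~\ref{theorem:generalization:ftfi:lmco:cost:double maximization}---collapses the derivative to the linear functional $\int_{{\cal X}_t}g^s_t(x_t)\,(\bar{\pi}_t-\pi_t^*)(dx_t)$, where $g^s_t(x_t)\triangleq\int_{{\cal Y}_t}\Big(\log\frac{dq_t(\cdot|y^{t-1}_{t-M},x_t)}{d\nu_t^{\pi^*}(\cdot|y^{t-1}_{t-J})}(y_t)+K^s_{t+1}(y^t_{t+1-J})\Big)q_t(dy_t|y^{t-1}_{t-M},x_t)-s\gamma_t(x_t,y^{t-1}_{t-N})$.

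Because $\pi_t^*$ maximizes a concave G\^ateaux-differentiable functional over the convex set ${\cal M}({\cal X}_t)$, this derivative is nonpositive in every feasible direction, i.e. $\int_{{\cal X}_t}g^s_t\,d\bar{\pi}_t\le\int_{{\cal X}_t}g^s_t\,d\pi_t^*=:K^s_t(y^{t-1}_{t-J})$ for all $\bar{\pi}_t\in{\cal M}({\cal X}_t)$. Testing against Dirac measures $\bar{\pi}_t=\delta_{x_t}$ gives $g^s_t(x_t)\le K^s_t(y^{t-1}_{t-J})$ for every $x_t$, which is \eqref{theorem:abstract:ftfi:lmco:cost:necessary:sufficient:equation4}; combined with $\int g^s_t\,d\pi_t^*=K^s_t$ this forces $g^s_t(x_t)=K^s_t(y^{t-1}_{t-J})$ for $\pi_t^*$-almost every $x_t$, i.e. on the support $\{x_t:\pi_t^*(dx_t|y^{t-1}_{t-J})\neq 0\}$, which is \eqref{theorem:abstract:ftfi:lmco:cost:necessary:sufficient:equation3}. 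Sufficiency follows from the same variational inequality, since concavity makes these stationarity conditions globally optimal. Identifying the Lagrangian constant with the value function yields $C_t(y^{t-1}_{t-J})=K^s_t(y^{t-1}_{t-J})+s(n+1)\kappa$; the base case $t=n$, giving \eqref{theorem:abstract:ftfi:lmco:cost:necessary:sufficient:equation1}, \eqref{theorem:abstract:ftfi:lmco:cost:necessary:sufficient:equation2}, is the identical computation with $K^s_{n+1}\equiv 0$, and backward induction on $t$ completes the argument.

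I expect the main obstacle to be the rigorous justification of this G\^ateaux-derivative computation on abstract alphabets: one must verify that the Radon--Nikodym derivatives $dq_t/d\nu_t^{\pi^*}$ are well defined ($q_t\ll\nu_t^{\pi^*}$) and that the logarithmic integrands are integrable, so that differentiation under the integral is valid and the interchange producing the cancellation $\int_{{\cal Y}_t}\partial_\epsilon\nu_t^{\pi^\epsilon}(dy_t)=0$ is legitimate. The hypotheses (C1)--(C4) are tailored precisely for this: (C1) furnishes the measurable structure, (C2)--(C3) secure continuity and existence of the optimizer, and (C4) supplies the differentiability that substitutes for the finite-dimensional gradient used in Theorem~\ref{theorem:necessary:sufficient:ftfi:lmco:cost}; general conditions guaranteeing (C2)--(C3) are available in \cite{charalambous-stavrou2015ieeeit}. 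A secondary subtlety is the almost-everywhere (rather than everywhere) reading of the support conditions \eqref{theorem:abstract:ftfi:lmco:cost:necessary:sufficient:equation3}, \eqref{theorem:abstract:ftfi:lmco:cost:necessary:sufficient:equation1}, which is unavoidable once ${\cal X}_t$ is uncountable and must be stated relative to $\pi_t^*$.
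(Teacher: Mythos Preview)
Your proposal is correct and follows essentially the same approach as the paper: the paper's proof of this theorem is the single sentence ``Since we assume conditions (C1)--(C4), we can repeat the derivation of Theorem~\ref{theorem:necessary:sufficient:ftfi:lmco:cost} for abstract alphabets,'' and you have carried out exactly that repetition, replacing the coordinate-wise KKT computation of Appendix~\ref{proof:theorem:necessary:sufficient:ftfi:lmco:cost} by the G\^ateaux-derivative/variational-inequality form licensed by (C4), with the same cancellation of the $-\log\nu_t^{\pi}$ derivative and the same identification of the Lagrange constant with the value function. Your discussion of the technical caveats (absolute continuity, integrability, $\pi_t^*$-a.e.\ reading of the support condition) is more careful than the paper itself, which simply packages these issues into the standing assumptions (C1)--(C4).
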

\begin{proof}
Since we assume conditions (C1)--(C4), we can repeat the derivation of Theorem~\ref{theorem:necessary:sufficient:ftfi:lmco:cost} for abstract alphabets.
\end{proof}

%
%
%

\subsection{Necessary and Sufficient Conditions for Channels of Class $B$ with Transmission Cost of Classes $A$ or $B$}\label{section:generalizations}

\par In this subsection, we illustrate how the main results of this paper extend to channels of class $B$ with transmission cost of classes $A$ or $B$. 
\subsubsection{Channels of class $A$ with transmission cost $B$} Consider the channel distributions of class $A$ given by \eqref{introduction:class:channel:equation1}, and a transmission cost function of class $B$ given by \eqref{TC_2}. By \cite{kourtellaris-charalambous2015aieeeit}, the characterization of FTFI capacity with average transmission cost constraint is given by 
\begin{align}
{C}_{X^n \rightarrow Y^n}^{FB,A.B}(\kappa) 
= \sup_{{\cal P}_{0,n}^{B}(\kappa)} \sum_{t=0}^n {\bf E}^{ \pi}\left\{
\log\Big(\frac{q_t(\cdot|Y_{t-M}^{t-1},X_t)}{\nu_t^{{\pi}}(\cdot|Y^{t-1})}(Y_t)\Big)
\right\}, \label{subsection:extension:characterization:ftfi:capacity:cost:lmco1}
\end{align}
where
\begin{align}
{\cal P}_{0,n}^{B}(\kappa)\triangleq\Big\{\pi_t(x_t|y^{t-1}), ~t=0, \ldots, n: \frac{1}{n+1} {\bf E}^{\pi} \Big( c^{B}_{0,n}(X^n, Y^{n-1}) \Big)\leq  \kappa\Big\},~ \kappa \in [0,\infty)\label{subsection:extension:characterization:ftfi:fmco:transmission:cost}
\end{align}
and the joint and transition probabilities are given by
\begin{align}
{\bf P}^{\pi}(dy^t, dx^t) =&\prod_{i=0}^tq_i(dy_i|y_{i-M}^{i-1}, x_i)\pi_i(dx_i|y^{i-1}), \label{subsection:extension:section:lmco:equation1} \\
\nu_t^{\pi}(dy_t|y^{t-1}) =&\int_{{\cal X}_t} q_t(dy_t|y_{t-M}^{t-1}, x_t)\pi_t(dx_t|y^{t-1}),~t\in\mathbb{N}_0^{n}.\label{subsection:extension:section:umco:equation2}
\end{align}
From \eqref{subsection:extension:characterization:ftfi:capacity:cost:lmco1}
-\eqref{subsection:extension:section:umco:equation2}, the analogue of Theorem~\ref{theorem:abstract:necessary:sufficient:ftfi:lmco:cost} is obtained by setting 
\begin{align*}
\gamma_t(x_t,y^{t-1}_{t-N}) \longmapsto \gamma_t(x_t,y^{t-1}),~~~\pi_t(dx_t|y^{t-1}_{t-J}) \longmapsto  \pi_t(dx_t|y^{t-1}),~~~\nu^{\pi}_t(dy_t|y^{t-1}_{t-J})\longmapsto \nu^{\pi}_t(dy_t|y^{t-1})
\end{align*}
Similarly, from \cite{kourtellaris-charalambous2015aieeeit} it follows than if the channel is of class $B$ and the transmission cost function is of classes $A$, or $B$, the analogue of Theorem~\ref{theorem:abstract:necessary:sufficient:ftfi:lmco:cost} is obtained by setting
\begin{align*}
q_t(dy_t|y^{t-1}_{t-M},x_t)\longmapsto q_t(dy_t|y^{t-1},x_t),~~~\pi_t(dx_t|y^{t-1}_{t-J}) \longmapsto  \pi_t(dx_t|y^{t-1}),~~~\nu^{\pi}_t(dy_t|y^{t-1}_{t-J})\longmapsto \nu^{\pi}_t(dy_t|y^{t-1}).
\end{align*}

\section{Conclusions and Future Directions}\label{conclusion}

\par In this paper, we derived sequential necessary and sufficient conditions for any channel input conditional distribution to maximize the finite-time horizon directed information with or without transmission cost constraints. We applied the necessary and sufficient conditions to several application examples and we derived recursive closed form expressions for the optimal channel input conditional distributions,  which maximize the finite-time horizon directed information. For the investigated application examples, we also illustrated how to derive the closed form expressions of feedback capacity and capacity achieving distributions. The methodology introduced in this paper is general and can be applied to a variety of general channels with memory, such as, the Gaussian channels with memory investigated in \cite{charalambous-kourtellaris-loyka2016ieeeit}.\\
The future research directions are focused on addressing the following issues.
\begin{description}
\item[(a)] Apply the necessary and sufficient conditions to other application examples.
\item[(b)] Derive necessary and sufficient conditions for general channels of the form $\{{\bf P}_{Y_t|Y^{t-1}_{t-M},X^{t-1}_{t-L}}:~t\in\mathbb{N}_0^n\}$, when $\{M,L\}$ are nonnegative finite integers.
\end{description}

%
%
%
%
\appendices

\section{Feedback Codes}\label{section:feedback:codes}

A sequence of feedback codes $\{(n, { M}_n, \epsilon_n):n=0, 1, \dots\}$ is defined by the following elements.\\
(a)  A set of messages ${\cal M}_n \triangleq \{ 1,  \ldots, M_n\}$ and a set of encoding maps,  mapping source messages  into channel inputs of block length $(n+1)$, defined by
\begin{align}
{\cal E}_{[0,n]}^{FB}(\kappa) \triangleq & \Big\{g_t: {\cal M}_n \times {\cal Y}^{t-1}  \longmapsto {\cal X}_t,~~ x_0=g_0(w, y^{-1}), x_t=e_t(w, y^{i-1}),~~  w\in {\cal M}_n, ~t=0, \ldots, n:\nonumber \\
& \frac{1}{n+1} {\bf E}^g\Big(c_{0,n}(X^n,Y^{n-1})\Big)\leq \kappa  \Big\}. \label{block-code-nf-non}
\end{align}
The codeword for any $w \in {\cal M}_n$  is $u_w\in{\cal X}^n$, $u_w=(g_0(w,y^{-1}), g_1(w, y^0),
,\dots,g_n(w, y^{n-1}))$, and ${\cal C}_n=(u_1,u_2,\dots,u_{{M}_n})$ is  the code for the message set ${\cal M}_n$. In general, the code  depends on the initial data $Y^{-1}=y^{-1}$ ( unless it can be shown that in  the limit, as $n \longrightarrow \infty$, the induced channel output process  has a unique invariant distribution).  \\
(b)  Decoder measurable mappings $d_{0,n}:{\cal Y}^n\longmapsto {\cal M}_n$, ${Y}^n= d_{0,n}(Y^{n})$, such that the average
probability of decoding error satisfies 
\begin{align}
{\bf P}_e^{(n)} \triangleq \frac{1}{M_n} \sum_{w \in {\cal M}_n} {\bf  P}^g \Big\{d_{0,n}(Y^{n}) \neq w |  W=w\Big\}\equiv {\bf  P}^g\Big\{d_{0,n}(Y^n) \neq W \Big\} \leq \epsilon_n\nonumber
\end{align}
where  $r_n\triangleq \frac{1}{n+1} \log M_n$ is  the coding rate or transmission rate (and the messages are uniformly distributed over ${\cal M}_n$), and $Y^{-1}=y^{-1}$ is known to the decoder. Alternatively, both the encoder and decoder assume no information, i.e., $Y^{-1}=\{\emptyset\}$. \\  
A rate $R$ is said to be an achievable rate, if there exists  a  code sequence satisfying
$\lim_{n\longrightarrow\infty} {\epsilon}_n=0$ and $\liminf_{n \longrightarrow\infty}\frac{1}{n+1}\log{{M}_n}\geq R$. The feedback capacity is defined by $C\triangleq \sup \{R: R \: \mbox{is achievable}\}$.\\

\noi By invoking standard techniques often applied in deriving coding theorems, $C_{X^\infty \rar Y^\infty}^{FB}$ is the supremum of all achievable feedback codes, provided the following conditions hold. \\
(C1) The messages $w\in{\cal M}_n$ to be encoded and transmitted over the channel satisfy the following conditional independence.
\begin{align}
{\bf P}_{Y_t|Y^{t-1},X^t,W}(dy_t|y^{t-1},x^t,w)={\bf P}_{Y_t|Y^{t-1},X^t}(dy_t|y^{t-1},x^t),~t\in\mathbb{N}^n_0.\label{introduction:condition:equation1}
\end{align}
If \eqref{introduction:condition:equation1} is violated, then  $I(X^n\rightarrow{Y^n})$ is no longer a tight bound on any achievable code rate \cite{massey1990}. \\
(C2) There exists a channel input distribution denoted by $\{{\bf P}_{X_t|X^{t-1},Y^{t-1}}^*:~t\in\mathbb{N}_0^n\}\in{\cal P}_{0,n}$ which achieves the supremum in $C^{FB}_{X^n\rightarrow{Y^n}}$, and the per unit time limit $\lim_{n\longrightarrow\infty}\frac{1}{n+1}C^{FB}_{X^n\rightarrow{Y^n}}$ exists and it is finite. \\
If any one of theses conditions is violated, then the arguments of the  converse coding theorem, which are based on Fano's inequality  do not apply.\\
(C3) The optimal channel input distribution $\{{\bf P}^*_{X_t|X^{t-1},Y^{t-1}}:~t\in\mathbb{N}_0^n\}\in{\cal P}_{0,n}$, which achieves the supremum in $C^{FB}_{X^n\rightarrow{Y^n}}$ induces stability in the sense of Dobrushin \cite{dobrushin1959}, of the directed information density, that is,
\begin{align*}
\lim_{n\longrightarrow\infty}{\bf P}_{X^n, Y^n}^{{\bf P}^*}\Big\{(X^n,Y^n)\in{\cal X}^n\times{\cal Y}^n:~\frac{1}{n+1}\left|{\bf E}^{{\bf P}^*}\{i^{{\bf P}^{*}}(X^n,Y^n)\}-i^{{\bf P}^{*}}(X^n,Y^n)\right|>\epsilon\Big\}=0
\end{align*}
where $i^{{\bf P}^{*}}(X^n,Y^n)$ is the directed information density,  defined by 
\begin{align}
\sum_{t=0}^n\log\Big(\frac{d{\bf P}_{Y_t|Y^{t-1},X^t}(\cdot|y^{t-1},x^t)}{d{\bf P}_{Y_t|Y^{t-1}}^{{\bf P}^*}(\cdot|y^{t-1})}(Y_t)\Big).\nonumber
\end{align}
and the superscript notation indicates the dependence of the distributions on the optimal distribution $\{{\bf P}^*_{X_t|X^{t-1},Y^{t-1}}:~t\in\mathbb{N}_0^n\}\in{\cal P}_{0,n}$. \\
 This condition is sufficient to  show achievability.

\section{Proofs of Section~\ref{section:generalizations:lmco}}
\label{proofs:section:ftfi:lmco}

\subsection{Proof of Theorem~\ref{theorem:generalization:ftfi:lmco:cost:double maximization}}\label{proof:theorem:generalization:ftfi:lmco:cost:double maximization}

\par {(a)} Expressions (\ref{algorithms:generalizations:lmco:equation21}), (\ref{algorithms:generalizations:lmco:equation22}) can be easily obtained from \eqref{algorithms:generalizations:lmco:equation17} and \eqref{applications:variational:equalities:lmco:equation1}. (i) (\ref{algorithms:generalizations:lmco:equation25aa}) follows from Corollary~\ref{corollary:application:variational:equalities:lmco}, \eqref{applications:variational:equalities:lmco:equation2}. We show (\ref{algorithms:generalizations:lmco:equation26}), by performing the maximization in (\ref{algorithms:generalizations:lmco:equation21}), using the fact that the problem is convex. For a fix $r_n(x_n|y_{n-M}^{n-1},y_n)$, we calculate the derivative of the right hand side of (\ref{algorithms:generalizations:lmco:equation21}) with respect to each of the elements of the probability vector $\{\pi_n(x_n|y_{n-J}^{n-1}):~{x_n}\in{\cal X}_n\}$ for a fixed $y_{n-J}^{n-1}\in{\cal Y}_{n-J}^{n-1}$ in (\ref{algorithms:generalizations:lmco:equation21}), by introducing the Lagrange multiplier $\lambda_n(y_{n-J}^{n-1})$ of the constraint $\sum_{x_n}\pi_n(x_n|y_{n-J}^{n-1})=1$, and imposing another Lagrange multiplier $s\geq{0}$ for the transmission cost constraint as follows.
\begin{align}
&\frac{\partial}{\partial \pi_n}\Big\{\sum_{x_n,y_{n}}\log\Big(\frac{r_{n}(x_n|y_n,y^{n-1}_{n-M})}{\pi_{n}(x_n|y^{n-1}_{n-J})}\Big)q_n(y_n|y^{n-1}_{n-M},x_n)\pi_n(x_n|y^{n-1}_{n-J})-s\sum_{x_n}\gamma_n(x_n,y^{n-1}_{n-N})\pi_n(x_n|y^{n-1}_{n-J})
\nonumber\\
&+\lambda_n(y^{n-1}_{n-J})\Big{(}\sum_{x_n}\pi_{n}(x_n|y^{n-1}_{n-J})-1\Big{)}\Big\}=0,~\forall{x_n}\in{\cal X}_n,~y_{n-J}^{n-1}\in{\cal Y}_{n-J}^{n-1}~\mbox{is fixed}\label{section:ftfi:lmco:derivative:equation1}
\end{align}
where $\frac{\partial}{\partial \pi_n}$ denotes the derivative with respect to a specific element of $\{\pi_n(x_n|y_{n-J}^{n-1}):~{x_n}\in{\cal X}_n\}$, and $y_{n-J}^{n-1}\in{\cal Y}_{n-J}^{n-1}$ is fixed. From \eqref{section:ftfi:lmco:derivative:equation1}, we obtain
\begin{align}
&\pi_{n}(x_n|y^{n-1}_{n-J})\nonumber\\
&=\exp{\Big\{\sum_{y_{n}}\log\big(r_{n}(x_n|y_n,y^{n-1}_{n-M}\big)q_n(y_n|y^{n-1}_{n-M},x_n)-1-s\gamma_n(x_n,y^{n-1}_{n-N})+\lambda_n(y^{n-1}_{n-J})\Big\}},~\forall{x_n}\in{\cal X}_n.
\end{align}\label{section:ftfi:lmco:derivative:equation2}
From \eqref{section:ftfi:lmco:derivative:equation2}, in view of $\sum_{x_n}\pi_n(x_n|y^{n-1}_{n-J})=1$, we obtain
\begin{align}
&\lambda(y^{n-1}_{n-J})\nonumber\\
&=-\log\Big(\sum_{x_n}\exp{\Big\{\sum_{y_{n}}\log\big(r_{n}(x_n|y_n,y^{n-1}_{n-M}\big)q_n(y_n|y^{n-1}_{n-M},x_n)-1-s\gamma_n(x_n,y^{n-1}_{n-N})\Big\}}\Big).\label{section:ftfi:lmco:derivative:equation3}
\end{align}
Substituting \eqref{section:ftfi:lmco:derivative:equation3} in \eqref{section:ftfi:lmco:derivative:equation2} we obtain (\ref{algorithms:generalizations:lmco:equation26}). (ii) (\ref{algorithms:generalizations:lmco:equation28}) follows from Corollary~\ref{corollary:application:variational:equalities:lmco}, \eqref{applications:variational:equalities:lmco:equation2}. To show (\ref{algorithms:generalizations:lmco:equation29}), we repeat the derivation of (\ref{algorithms:generalizations:lmco:equation26}), by tracking the additional second RHS term in (\ref{algorithms:generalizations:lmco:equation22}), to obtain the following expression.
\begin{align}
&\frac{\partial}{\partial \pi^{}_t}\Big\{\sum_{x_t,y_{t}}\log\Big(\frac{r_t(x_t|y^{t-1}_{t-M},y_t)}{{\pi}_{t}(x_t|y^{t-1}_{t-J})}\Big)+C_{t+1}(y^t_{t+1-J})\Big)q_t(y_t|y^{t-1}_{t-M},x_t){\pi}_t(x_t|y^{t-1}_{t-J})\nonumber\\
&-s\sum_{x_t}\gamma_t(x_t,y^{t-1}_{t-N}){\pi}_t(x_t|y^{t-1}_{t-J})+\lambda_t(y^{t-1}_{t-J})\Big{(}\sum_{x_t}\pi^r_{t}(x_t|y^{t-1}_{t-J})-1\Big{)}\Big\}=0,~\forall{x_t}\in{\cal X}_t,~t\in\mathbb{N}_0^{n-1}.\label{section:ftfi:lmco:derivative:equation5}
\end{align}
From \eqref{section:ftfi:lmco:derivative:equation5} we obtain
\begin{align}
&\pi_{t}(x_t|y_{t-J}^{t-1})\nonumber\\
&=\exp{\Big\{\sum_{y_{t}}\Big(\frac{r_t(x_t|y^{t-1}_{t-M},y_t)}{{\pi}_{t}(x_t|y^{t-1}_{t-J})}\Big)+C_{t+1}(y^t_{t+1-J})\Big)q_t(y_t|y^{t-1}_{t-M},x_t)-1-s\gamma_t(x_t,y^{t-1}_{t-N})+\lambda_t(y^{t-1}_{t-J})\Big\}},\nonumber\\
&\hspace{9.5cm}~\forall{x_t}\in{\cal X}_t,~t\in\mathbb{N}_0^{n-1}.\label{section:ftfi:lmco:derivative:equation6}
\end{align}
Using $\sum_{x_t}\pi_{t}(x_t|y_{t-J}^{t-1})=1,~t\in\mathbb{N}_0^{n-1}$ and \eqref{section:ftfi:lmco:derivative:equation6} we obtain
\begin{align}
&\lambda_t(y^{t-1}_{t-J})\nonumber\\
&=-\log\Big(\sum_{x_t}\exp{\Big\{\sum_{y_{t}}\Big(\frac{r_t(x_t|y^{t-1}_{t-M},y_t)}{{\pi}_{t}(x_t|y^{t-1}_{t-J})}\Big)+C_{t+1}(y^t_{t+1-J})\Big)q_t(y_t|y^{t-1}_{t-M},x_t)-1-s\gamma_t(x_t,y^{t-1}_{t-N})\Big\}}\Big),\nonumber\\
&\hspace{12.5cm}~t\in\mathbb{N}_0^{n-1}.\label{section:ftfi:lmco:derivative:equation7}
\end{align}
Substituting \eqref{section:ftfi:lmco:derivative:equation7} in \eqref{section:ftfi:lmco:derivative:equation6} we obtain (\ref{algorithms:generalizations:lmco:equation29}). (iii)  \eqref{algorithms:generalizations:lmco:equation27} follows by substituting (\ref{algorithms:generalizations:lmco:equation25aa}) into (\ref{algorithms:generalizations:lmco:equation26}). \eqref{algorithms:generalizations:lmco:equation30} follows by substituting (\ref{algorithms:generalizations:lmco:equation28}) into (\ref{algorithms:generalizations:lmco:equation29}).\\
\noi{(c)} Since $\mu(dy_{-J}^{-1})$ is fixed, then (\ref{algorithms:generalizations:lmco:equation31}) follows directly from {(a)}, by evaluating $C_t(y^{t-1}_{t-J})$ given by (\ref{algorithms:generalizations:lmco:equation29}) at $t=0$, and taking the expectation.\qed

\subsection{Proof of Theorem~\ref{theorem:necessary:sufficient:ftfi:lmco:cost}}\label{proof:theorem:necessary:sufficient:ftfi:lmco:cost}

\par {(a)} Recall that the optimization problem given by (\ref{algorithms:generalizations:lmco:equation18}) is convex. Hence, we can apply Kuhn-Tucker theorem \cite{boyd-vandenberghe2004} to find  necessary and sufficient conditions for $\{\pi_{n}(x_n|y^{t-1}_{t-J}):~x_n\in{\cal X}_n\}$, to maximize $C_n(y^{t-1}_{t-J})$ by introducing the Lagrange multiplier $\lambda_n(y^{t-1}_{t-J})$ as follows.
\begin{align}
&\frac{\partial}{\partial{\pi_{n}}}\Bigg\{\sum_{x_n,y_n}\Big(\log\big(\frac{q_n(y_n|y^{n-1}_{n-M},x_n)}{\nu^{\pi}_{n}(y_n|y^{n-1}_{n-J})}\big)\Big)q_n(y_n|y^{n-1}_{n-M},x_n){\pi}_n(x_n|y^{n-1}_{n-J})\nonumber\\
&-s\sum_{x_n}\gamma_n(x_n,y_{n-N}^{n-1}){\pi}_n(x_n|y^{n-1}_{n-J})+\lambda_{n}(y^{n-1}_{n-J})\Big(\sum_{x_n}\pi_{n}(x_n|y^{n-1}_{n-J})-1\Big)\Bigg\}\leq{0}.\nonumber
\end{align}
By performing the differentiation, we obtain
\begin{align}
&\sum_{x_n,y_n}\Big(\frac{1}{\frac{q_n(y_n|y^{n-1}_{n-M},x_n)}{\nu^{\pi}_{n}(y_n|y^{n-1}_{n-J})}}\Big)\Big(\frac{-q_n(y_n|y^{n-1}_{n-M},x_n)\frac{\partial}{\partial{\pi_{n}}}\big(\nu_n^\pi(y_n|y^{n-1}_{n-J})\big)}{(\nu_n^\pi(dy_n|y^{n-1}_{n-J})^2})\Big)q_n(y_n|y^{n-1}_{n-M},x_n){\pi}_n(x_n|y^{n-1}_{n-J})\nonumber\\
&\qquad+\sum_{y_n}\log\big(\frac{q_n(y_n|y^{n-1}_{n-M},x_n)}{\nu^{\pi}_{n}(y_n|y^{n-1}_{n-J})}\big)\Big)q_n(y_n|y^{n-1}_{n-M},x_n)-s\gamma_n(x_n,y_{n-N}^{n-1})+\lambda_n(y^{n-1}_{n-J})\leq{0}.\label{proof:necessary:sufficient:lmco:equation1}
\end{align}
Further simplification of \eqref{proof:necessary:sufficient:lmco:equation1} gives
\begin{align}
\sum_{y_n}\log\big(\frac{q_n(y_n|y^{n-1}_{n-M},x_n)}{\nu^{\pi}_{n}(y_n|y^{n-1}_{n-J})}\big)\Big)q_n(y_n|y^{n-1}_{n-M},x_n)-s\gamma_n(x_n,y_{n-N}^{n-1})\leq{1}-\lambda_n(y^{n-1}_{n-J}).\label{proof:necessary:sufficient:lmco:equation2}
\end{align}
Multiplying both sides of \eqref{proof:necessary:sufficient:lmco:equation2} by $\pi_n(x_n|y_{n-J}^{n-1})$ and summing over $x_n$, for which $\pi_n(x_n|y^{n-1}_{n-J})\neq{0}$, gives the necessary and sufficient conditions for maximizing over $\pi_n(x_n|y^{n-1}_{n-J})$ given by \eqref{theorem:ftfi:lmco:cost:necessary:sufficient:equation1}-\eqref{theorem:ftfi:lmco:cost:necessary:sufficient:equation2}, which then implies that $K^s_n(y^{n-1}_{n-J})={C}_n(y^{n-1}_{n-J})-s(n+1)\kappa$ given by \eqref{theorem:ftfi:lmco:cost:necessary:sufficient:equation1}. \\
\noi{(b)} Consider the time $t=n-1$. Then by (\ref{algorithms:generalizations:lmco:equation19}), $C_n(y_{n-J}^{n-1})$ is a function of $\pi_n(x_n|y^{n-1}_{n-J})$ which is not subjected to optimization. Applying the Kuhn-Tucker conditions to (\ref{algorithms:generalizations:lmco:equation19}) we have the following.
\begin{align}
&\frac{\partial}{\partial{\pi_{n-1}}}\Bigg\{\sum_{x_{n-1},y_{n-1}}\Big(\log\big(\frac{q_{n-1}(y_{n-1}|y^{n-2}_{n-1-M},x_{n-1})}{\nu^{\pi}_{n-1}(y_n-1|y^{n-2}_{n-1-J})}\big)+C_{n}(y^{n-1}_{n-J})\Big)q_{n-1}(y_{n-1}|y^{n-2}_{n-1-M},x_{n-1})\nonumber\\
&{\pi}_{n-1}(x_{n-1}|y^{n-2}_{n-1-J})-s\sum_{x_{n-1}}\gamma_{n-1}(x_{n-1},y_{n-1-N}^{n-2}){\pi}_{n-1}(x_{n-1}|y^{n-2}_{n-1-J})\nonumber\\
&+\lambda_{n-1}(y^{n-2}_{n-1-J})\Big(\sum_{x_{n-1}}{\pi}_{n-1}(x_{n-1}|y^{n-2}_{n-1-J})-1\Big)\Bigg\}\leq{0}.\nonumber
\end{align}
By performing differentiation  we obtain 
\begin{align}
&\sum_{x_{n-1},y_{n-1}}\Big(\frac{1}{\frac{q_{n-1}(y_{n-1}|y^{n-2}_{n-1-M},x_{n-1})}{\nu^{\pi}_{n-1}(y_{n-1}|y^{n-2}_{n-1-J})}}\Big)\Big(\frac{-q_{n-1}(y_{n-1}|y^{n-2}_{n-1-M},x_{n-1})\frac{\partial}{\partial{\pi_{n-1}}}\big(\nu_{n-1}^\pi(y_{n-1}|y^{n-2}_{n-1-J})\big)}{(\nu_{n-1}^\pi(y_{n-1}|y^{n-2}_{n-1-J})^2})\Big)\nonumber\\
&\qquad\qquad{q}_{n-1}(y_{n-1}|y^{n-2}_{n-1-M},x_{n-1}){\pi}_{n-1}(x_{n-1}|y^{n-2}_{n-1-J})\nonumber\\
&+\sum_{y_{n-1}}\log\big(\frac{q_{n-1}(y_{n-1}|y^{n-2}_{n-1-M},x_{n-1})}{\nu^{\pi}_{n-1}(y_{n-1}|y^{n-2}_{n-1-J})}\big)\Big)q_{n-1}(y_{n-1}|y^{n-2}_{n-1-M},x_{n-1})\nonumber\\
&+\sum_{y_{n-1}}C_{n}(y^{n-1}_{n-J})q_{n-1}(y_{n-1}|y^{n-2}_{n-1-M},x_{n-1})-s\gamma_{n-1}(x_{n-1},y_{n-1-N}^{n-2})+\lambda_{n-1}(y^{n-2}_{n-1-J})\leq{0}.\label{proof:necessary:sufficient:lmco:equation3}
\end{align}
After simplifications, \eqref{proof:necessary:sufficient:lmco:equation3} gives the following.
\begin{align}
\sum_{y_{n-1}}&\Big(\log\big(\frac{q_{n-1}(y_{n-1}|y^{n-2}_{n-1-M},x_{n-1})}{\nu_{n-1}^\pi(y_{n-1}|y^{n-2}_{n-1-J})}\big)+C_{n}(y^{n-1}_{n-J})\Big)q_{n-1}(y_{n-1}|y^{n-2}_{n-1-M},x_{n-1})\nonumber\\
&\hspace{6cm}-s\gamma_{n-1}(x_{n-1},y_{n-1-N}^{n-2})\leq{1}-\lambda_{n-1}(y^{n-2}_{n-1-J}).\label{proof:necessary:sufficient:lmco:equation4}
\end{align}
To verify that ${1}-\lambda_{t}(y^{n-2}_{n-1-J})=C_{n-1}(y^{n-2}_{n-1-J})-s(n+1)\kappa\equiv{K}^s_{n-1}(y^{n-2}_{n-1-J})$, we multiply both sides of \eqref{proof:necessary:sufficient:lmco:equation4} by $\pi_{n-1}(x_{n-1}|y^{n-2}_{n-1-J})$ and sum over $x_{n-1}$, for which $\pi_{n-1}(x_{n-1}|y^{n-2}_{n-1-J})\neq{0}$, to obtain the necessary and sufficient conditions for ${\pi}_{n-1}(x_{n-1}|y^{n-2}_{n-1-J})$ to maximize $C_{n-1}(y^{n-2}_{n-1-J})-s(n+1)\kappa\equiv{K}^s_{n-1}(y^{n-2}_{n-1-J})$ given the necessary and sufficient conditions at $t=n$. Repeating this derivation for $t=n-2,n-3,\ldots,0$, or by induction, we obtain \eqref{theorem:ftfi:lmco:cost:necessary:sufficient:equation3}, \eqref{theorem:ftfi:lmco:cost:necessary:sufficient:equation4}. This completes the proof.\qed

\subsection{Alternative proof of Theorem~\ref{theorem:necessary:sufficient:ftfi:lmco:cost}}\label{remark:necessary:sufficient:alternative:proof}

\noi Here, we give an alternative proof to Theorem~\ref{theorem:necessary:sufficient:ftfi:lmco:cost} using Theorem~\ref{theorem:generalization:ftfi:lmco:cost:double maximization}. Recall that by Theorem~\ref{theorem:generalization:ftfi:lmco:cost:double maximization}, {(a)}, we have
\begin{align}
C_n(y^{n-1}_{n-J})&=\sup_{\pi_n(x_n|y^{n-1}_{n-J})}\sup_{r_n(x_n|y^{n-1}_{n-M},y_n)}\Bigg\{\sum_{x_n,y_n}\log\Big(\frac{r_n(x_n|y^{n-1}_{n-M},y_n)}{{\pi}_{n}(x_n|y^{n-1}_{n-J})}\Big){q}_n(y_n|y^{n-1}_{n-M},x_n)\pi_n(x_n|y^{n-1}_{n-J})\nonumber\\
&-s\Big(\sum_{x_n}\gamma_n(x_n,y^{n-1}_{n-N})\pi_n(x_n|y^{n-1}_{n-J})-(n+1)\kappa\Big)\Bigg\},~\forall{y^{n-1}_{n-J}\in{\cal Y}^{n-1}_{n-J}}.\label{baa:sequential:applications:equation12ii}
\end{align}
By \eqref{baa:sequential:applications:equation12ii}, for a fixed $r_n(x_n|y^{n-1}_{n-M},y_n)$, we calculate the derivative with respect to each of the elements of the probability vector $\{\pi_n(x_n|y^{n-1}_{n-J}):~{x_n}\in{\cal X}_n\}$, we incorporate the pointwise constraint $\sum_{x_n}\pi_n(x_n|y^{n-1}_{n-J})=1$, by introducing the Lagrange multiplier $\lambda_n(y^{n-1}_{n-J})$, and we also include a second Lagrange multiplier $s\geq{0}$ to encompass the transmission cost constraint as follows.
\begin{align}
&\frac{\partial}{\partial \pi_n}\Big\{\sum_{x_n,y_{n}}\log\Big(\frac{r_{n}(x_n|y^{n-1}_{n-M},y_n)}{\pi_{n}(x_n|y^{n-1}_{n-J})}\Big)q_n(y_n|y^{n-1}_{n-M},x_n)\pi_n(x_n|y^{n-1}_{n-J})\nonumber\\
&-s\sum_{x_n}\gamma_n(x_n,y^{n-1}_{n-N})\pi_n(x_n|y^{n-1}_{n-J})+\lambda_n(y^{n-1}_{n-J})\Big{(}\sum_{x_n}\pi_{n}(x_n|y^{n-1}_{n-J})-1\Big{)}\Big\}=0,~\forall{x_n}\in{\cal X}_n\label{necessary:sufficient:ftfi:umco:derivative:equation1ii}
\end{align}
where $\frac{\partial}{\partial \pi_n}$ denotes derivative with respect to a specific coordinate of the probability vectors $\{\pi_{n}(x_n|y^{n-1}_{n-J}):~x_n\in{\cal X}^n\}$. From \eqref{necessary:sufficient:ftfi:umco:derivative:equation1ii} we obtain
\begin{align}
\sum_{y_{n}}\log\Big(\frac{r_{n}(x_n|y^{n-1}_{n-M},y_n)}{\pi_{n}(x_n|y^{n-1}_{n-J})}\Big)q_n(y_n|y^{n-1}_{n-M},x_n)-s\gamma_n(x_n,y^{n-1}_{n-N})=1-\lambda_n(y^{n-1}_{n-J}),~\forall{x_n\in{\cal X}_n}.\label{necessary:sufficient:ftfi:umco:derivative:equation2ii}
\end{align}
By \eqref{algorithms:generalizations:lmco:equation25aa}, for a fixed $\pi_n(x_n|y^{n-1}_{n-J})$, the maximization with respect to $r_n(x_n|y^{n-1}_{n-J},y_n)$ is given by
\begin{align}
r^{*,\pi}_n(x_n|y^{n-1}_{n-M},y_n)=\Big(\frac{q_n(y_n|y^{n-1}_{n-M},x_n)}{\nu^{\pi}_{n}(y_n|y^{n-1}_{n-J})}\Big){\pi}_{n}(x_n|y^{n-1}_{n-J}).\label{necessary:sufficient:ftfi:umco:equation14ii}
\end{align}
Substituting \eqref{necessary:sufficient:ftfi:umco:equation14ii} in \eqref{necessary:sufficient:ftfi:umco:derivative:equation2ii} we obtain
\begin{align}
\sum_{y_{n}}\log\Big(\frac{q_n(y_n|y^{n-1}_{n-M},x_n)}{\nu^{\pi}_{n}(y_n|y^{n-1}_{n-J})}\Big)q_n(y_n|y^{n-1}_{n-M},x_n)-s\gamma_n(x_n,y^{n-1}_{n-N})=1-\lambda_n(y^{n-1}_{n-J}),~\forall{x_n\in{\cal X}_n}.\label{necessary:sufficient:ftfi:umco:15ii}
\end{align}
Summing both sides in \eqref{necessary:sufficient:ftfi:umco:15ii} with respect to $\pi_n(x_n|y^{n-1}_{n-J})$ we obtain \eqref{theorem:ftfi:lmco:cost:necessary:sufficient:equation1}.\\
Similarly, by Theorem~\ref{theorem:generalization:ftfi:lmco:cost:double maximization},~{(a)}, we have
\begin{align}
C_t(y^{t-1}_{t-J})&=\sup_{\pi_t(x_t|y^{t-1}_{t-J})}\sup_{r_t(x_t|y^{t-1}_{t-M},y_t)}\Bigg\{\sum_{x_t,y_t}\Big(\log\Big(\frac{r_t(x_t|y^{t-1}_{t-M},y_t)}{{\pi}_{t}(x_t|y^{t-1}_{t-J})}\Big)+C_{t+1}(y^t_{t+1-J})\Big)q_t(y_t|y^{t-1}_{t-M},x_t){\pi}_t(x_t|y^{t-1}_{t-J})\nonumber\\
&-s\Big(\sum_{x_t}\gamma_t(x_t,y^{t-1}_{t-N})\pi_t(x_t|y^{t-1}_{t-J})-(n+1)\kappa\Big)\Bigg\},~\forall{y^{t-1}_{t-J}\in{\cal Y}^{t-1}_{t-J}},~t\in\mathbb{N}_0^{n-1}.\label{baa:sequential:applications:equation13ii}
\end{align}
By \eqref{baa:sequential:applications:equation13ii}, for each $t$, and a fixed $r_t(x_t|y_{t-M}^{t-1},y_t)$, we calculate the derivative with respect to each of the elements of the probability vector $\{\pi_t(x_t|y_{t-J}^{t-1}):~{x_t}\in{\cal X}_t\}$, and we incorporate the constraints to obtain
\begin{align}
\sum_{y_{t}}\Big(\log\Big(\frac{r_t(x_t|y^{t-1}_{t-M},y_t)}{{\pi}_{t}(x_t|y^{t-1}_{t-J})}\Big)+C_{t+1}(y^t_{t+1-J})\Big)q_t(y_t|y^{t-1}_{t-M},x_t)-s\gamma_t(x_t,y^{t-1}_{t-N})=1-\lambda_t(y^{t-1}_{t-J}),~\forall{x_t}\in{\cal X}_t.\label{necessary:sufficient:ftfi:umco:derivative:equation3iii}
\end{align}
By \eqref{algorithms:generalizations:lmco:equation28}, for fixed $\pi_t(x_t|y^{t-1}_{t-J})$, the maximization with respect to $r_t(x_t|y^{t-1}_{t-M},y_t)$ is given by
\begin{align}
r^{*,\pi}_t(x_t|y^{t-1}_{t-M},y_t)=\Big(\frac{q_t(y_t|y^{t-1}_{t-M},x_t)}{\nu^{\pi}_{t}(y_t|y^{t-1}_{t-J})}\Big){\pi}_{t}(x_t|y^{t-1}_{t-J}),~\forall{x_t}\in{\cal X}_t,~t\in\mathbb{N}_0^{n-1}.
\label{necessary:sufficient:ftfi:umco:equation16iii}
\end{align}
By substituting \eqref{necessary:sufficient:ftfi:umco:equation16iii} in \eqref{necessary:sufficient:ftfi:umco:derivative:equation3iii} we obtain
\begin{align}
\sum_{y_{t}}\Big(\log\Big(\frac{q_t(y_t|y^{t-1}_{t-M},x_t)}{\nu^{\pi}_{t}(y_t|y^{t-1}_{t-J})}\Big)+C_{t+1}(y^t_{t+1-J})\Big)q_t(y_t|y^{t-1}_{t-M},x_t)-s\gamma_t(x_t,y^{t-1}_{t-N})=1-\lambda_t(y^{t-1}_{t-J}),~\forall{x_t}\in{\cal X}_t.\label{necessary:sufficient:ftfi:umco:15iii}
\end{align}
By summing both sides in \eqref{necessary:sufficient:ftfi:umco:15iii} with respect to $\pi_t(x_t|y^{t-1}_{t-J})$, we obtain
 \eqref{theorem:ftfi:lmco:cost:necessary:sufficient:equation3}, for $t=n-1,n-2,\ldots,0$. Inequalities in \eqref{theorem:ftfi:lmco:cost:necessary:sufficient:equation2}, \eqref{theorem:ftfi:lmco:cost:necessary:sufficient:equation4} can be obtained similarly from Kuhn-Tucker conditions. This completes the proof.\qed

\bibliographystyle{IEEEtran}
\bibliography{bibliography}

\end{document}